\documentclass[aps,pra,floatfix,notitlepage,nofootinbib,longbibliography,11pt]{revtex4-1}
\pdfoutput=1
\usepackage{amsmath,amsthm,amsfonts,amssymb,braket}
\usepackage{graphicx}
\usepackage{color}

\makeatletter
\def\l@subsubsection#1#2{}
\makeatother

\usepackage[colorlinks=true,citecolor=blue,linkcolor=blue]{hyperref}
\usepackage[capitalize,nameinlink]{cleveref}

\newtheorem{lemma}{Lemma}[section]
\newtheorem{theorem}[lemma]{Theorem}
\newtheorem{conjecture}[lemma]{Conjecture}
\newtheorem{corollary}[lemma]{Corollary}

\theoremstyle{definition}
\newtheorem{definition}[lemma]{Definition}
\newtheorem{remark}[lemma]{Remark}

\newcommand{\CC}{{\mathbb{C}}}

\newcommand{\ZZ}{{\mathbb{Z}}}
\newcommand{\HH}{{\mathbb{H}}}
\newcommand{\FF}{{\mathbb{F}}}
\newcommand{\Id}{{\mathrm{Id}}}
\newcommand{\calC}{{\mathcal{C}}}

\newcommand{\SWAP}{{\mathrm{SWAP}}}

\newcommand{\lat}[2]{{{#1}\mathbb{Z}^{#2}}}
\newcommand{\half}{{\frac{1}{2}}}

\DeclareMathOperator*{\im}{{im}}
\DeclareMathOperator*{\diag}{{diag}}
\DeclareMathOperator*{\Supp}{{Supp}}
\DeclareMathOperator*{\rank}{{rank}}
\DeclareMathOperator*{\coker}{{coker}}
\DeclareMathOperator*{\Mat}{{Mat}}

\begin{document}
\title{Clifford Quantum Cellular Automata: \\Trivial group in 2D and Witt group in 3D}
\author{Jeongwan Haah}
\affiliation{Microsoft Quantum and Microsoft Research, Redmond, Washington, USA}
\email{jwhaah@microsoft.com}

\begin{abstract}
We study locality preserving automorphisms of operator algebras 
on $D$-dimensional uniform lattices of prime $p$-dimensional qudits (QCA),
specializing in those that are translation invariant (TI)
and map every prime $p$-dimensional Pauli matrix 
to a tensor product of Pauli matrices (Clifford).
We associate antihermitian forms of unit determinant over Laurent polynomial rings
to TI Clifford QCA with lattice boundaries,
and prove that the form determines the QCA up to Clifford circuits and shifts (trivial).
It follows that every 2D TI Clifford QCA is trivial 
since the antihermitian form in this case is always trivial.
Further, we prove that for any $D$ the fourth power of any TI Clifford QCA 
is trivial.
We present explicit examples of nontrivial TI Clifford QCA for $D=3$ and any odd prime~$p$,
and show that the Witt group of the finite field $\FF_p$ is a subgroup of 
the group $\mathfrak C(D = 3, p)$ of all TI Clifford QCA modulo trivial ones.
That is,
$\mathfrak C(D = 3, p \equiv 1 \mod 4) \supseteq \ZZ_2 \times \ZZ_2$ and
$\mathfrak C(D = 3, p \equiv 3 \mod 4) \supseteq \ZZ_4$.
The examples are found 
by disentangling the ground state of a commuting Pauli Hamiltonian
which is constructed by coupling layers of prime dimensional toric codes
such that an exposed surface has an anomalous topological order that
is not realizable by commuting Pauli Hamiltonians strictly in two dimensions.
In an appendix independent of the main body of the paper, 
we revisit a recent theorem of Freedman and Hastings
that any two-dimensional QCA, which is not necessarily Clifford or translation invariant,
is a constant depth quantum circuit followed by a shift.
We give a more direct proof of the theorem without using any ancillas.
\end{abstract}

\maketitle

\section{Introduction}

Quantum cellular automata (QCAs) have been studied in a variety of contexts including 
models of computation~\cite{Watrous1995} or algebra automorphisms~\cite{SchumacherWerner2004,GNVW,clifQCA,FreedmanHastings2019QCA}.
As an algebra automorphism,
a QCA has been identified as a disentangling scheme for a many-body state without any topological excitations~\cite{nta3}.
Ref.~\cite{SchumacherWerner2004} explains the distinction among various definitions.

We consider quantum cellular automata as automorphisms of $*$-algebra over $\CC$
generated by local operators of a lattice system of finite dimensional degrees of freedom (qudits).
Being an automorphism of a local algebra, not only does it preserve algebraic additions and multiplications
but also it maps a local operator to a local operator.
More precisely, we require that the image $\alpha(O)$ of any single site operator $O$ on $s$ by a QCA $\alpha$
be supported on a ball of radius $r < \infty$ centered at $s$.
Here the constant $r$, called the {\bf range} of $\alpha$,
is uniform across the whole lattice.
For example, a translation is a QCA;
however, a rotation is not a QCA in our definition
because the image of an operator $O$ that is far from the origin will be supported far from $O$.
Being a $*$-automorphism, the QCA that we consider is not an antiunitary.
Also, we do not consider systems with fermionic degrees of freedom, 
though it is an interesting setting on its own.

\subsection{Translation invariant Clifford QCAs}

The class of QCAs that we consider in this paper is even more specific.
Namely, we consider the $D$-dimensional hypercubic lattice $\ZZ^D$ of qudits of dimension $p$.
Each site of the lattice is occupied by finitely many qudits,
and the number of qudits per site must be uniform across the lattice.
That is, the local Hilbert space of the lattice is $(\CC^p)^{\otimes q}$ everywhere.
We denote such a system by~$\lat{q}{D}$.
The $*$-algebra of local operators in this lattice is generated%
\footnote{
Our usage of the verb ``generate'' follows that of algebra.
The generated algebra is the set of all \emph{finite} linear combinations over $\CC$ of
\emph{finite} product of generators.
There are many ways to complete this algebra under various operator topologies;
however, we will not consider any completion 
since we will handle tensor products of generalized Pauli matrices only,
but no linear combinations thereof.
}
over $\CC$ by
{\bf generalized Pauli operators} $X$ and $Z$ on each site where
\begin{align}
X = \sum_{j=0}^{p-1} \ket {j+1 \mod p}\bra j, \quad Z = \sum_{j=0}^{p-1} \exp(-2\pi i j/p) \ket j \bra j, 
\quad XZ = e^{2\pi i /p} ZX . \label{eq:pauli}
\end{align}
We are interested in {\bf Clifford QCAs} that map every generalized Pauli matrix
to a tensor product of generalized Pauli matrices.
We further demand that our Clifford QCA is \emph{translation invariant} (TI);
our Clifford QCA is assumed to commute with any translation QCA.
Since our translation group of the lattice is an abelian group $\ZZ^D$,
any translation QCA ({\bf shift}) is a TI Clifford QCA.
The set of all TI Clifford QCAs form a group under the composition;
the product $\alpha \beta $ of two QCAs of ranges $r$ and $r'$, respectively,
is a QCA of range $r + r'$,
and $\alpha^{-1}$ exists since $\alpha$ is an automorphism 
and is a QCA of range $r$~\cite{SchumacherWerner2004,Arrighi2007}.

An obvious subgroup of the group of all TI Clifford QCAs is generated by
layers of Clifford gates
whose geometric arrangement obeys the translation invariance.
Here, within a layer of Clifford gates,
every gate is supported on qudits within a ball of radius $r$,
and every gate either has support that is disjoint from that of any other gate
or is commuting with any other gate.
Thus, a {\bf TI Clifford circuit} is a finite composition of such layers,
and the number of layers is the {\bf depth} of the circuit.
The sum of all the locality bound $r_j$ in each layer $j$ of a circuit
bounds the range of the circuit from above.

\subsection{Equivalent QCAs and the group in question}

We will sometimes weakly break the translation symmetry of our system 
so that the new translation symmetry group is a subgroup of $\ZZ^D$ of finite index.
This typically arises when we wish to compose a QCA $\alpha$ on a system $A = \lat{q}{D}$
with another QCA $\beta$ on, say, a system $B = \lat{2q}{D}$.
In that case, we may combine two sites of $A$ to regard them as a new bigger site,
effectively forgetting the finest translation structure in $A$.
We will be explicit in statements whenever we reduce the translation group down to a subgroup,
but we will not be interested in the index of the subgroup, other than that it is finite.
We will use the term {\bf weak} translation invariance
when we speak of such weak translation symmetry breaking.

For two QCAs on the same dimensional but disjoint systems
we can take the tensor product $\alpha \otimes \beta$ in the obvious way.
\begin{definition}\label{def:equivQCA}
Let $\alpha$ be a TI Clifford QCA on the system $\lat{q}{D}$,
and $\beta$ be another TI Clifford QCA on $\lat{q'}{D}$.
We say $\alpha$ and $\beta$ are {\bf equivalent} and write $\alpha \cong \beta$
if $\alpha \otimes \beta^{-1} \otimes \Id$
equals a translation QCA followed by a Clifford circuit that is weakly translation invariant,
where $\Id$ is the identity QCA on $\lat{q''}{D}$ for some $q'' \ge 0$.
We denote by $\mathfrak C(D,p)$ the set of all equivalence classes of TI Clifford QCAs 
on the uniform lattice $\ZZ^D$ with finitely many $p$-dimensional qudits per site.
\end{definition}
This is an equivalence relation 
because $\alpha \otimes \alpha^{-1}$ is a Clifford circuit for any Clifford QCA~$\alpha$~\cite{Arrighi2007}.
Thus, we treat any Clifford circuit or a translation QCA as {\bf trivial}.
The appearance of $\Id$ is to implement so-called {\bf stabilization} in the equivalence.%
\footnote{
By a recent result on ``ancilla removal''~\cite{FHH2019},
we only need weak translation symmetry breaking.
That is,
the equivalence relation remains the same with or without the stabilization.
}
This is motivated by a practice in topological quantum phases of matter
where one is allowed to add and subtract unentangled degrees of freedom ({\bf ancilla})
to relate a state to another.
For each $D$ and $p$ the set $\mathfrak C(D,p)$ is obviously an abelian group
under the tensor product.
This group is isomorphic to the group of QCAs under composition~\cite[Lem.~I.7]{nta3}.
It is our main problem to identify $\mathfrak C (D,p)$.

Let us summarize known results.
Trivially, we see $\mathfrak C (D=0,p) = 0$.
It has been known that $\mathfrak C(D=1,p) = 0$ when $p$ is a prime:
Ref.~\cite{clifQCA} classifies TI Clifford QCAs on $\lat{1}{1}$,
Ref.~\cite{GNVW} classifies general 1D QCAs under a stronger notion of equivalence,
and Ref.~\cite{Haah2013} independently computes $\mathfrak C(D=1,p) = 0$ for any prime $p$.
For $D=2$, it is shown~\cite{FreedmanHastings2019QCA} 
that any QCA (not necessarily Clifford or TI)
is a constant depth quantum circuit (that is not necessarily Clifford) plus shifts with ancillas;
see \cref{app:trivialQCA}.
However, in higher dimensions no definite answer is known.
For $D=3$, it is shown~\cite[Thm.III.16]{nta3} that $\mathfrak C(D=3, p=2) \neq 0$.
For any $D \ge 0$, it is shown~\cite[Thm. IV.9]{nta3} 
that $\alpha^2 \cong \Id$ for any $[\alpha] \in \mathfrak C(D,p=2)$.

The result of this paper is the following.
\begin{theorem}\label{thm:main}
Let $D \ge 0$ be an integer, and $p$ a prime.
Let $\alpha$ be any TI Clifford QCA on $\lat q D$ of $p$-dimensional qudits.
\begin{itemize}
\item[(i)]
If $D=2$, $\alpha$ is a weakly TI Clifford circuit followed by a shift. Hence, $\mathfrak C(D=2,p) = 0$.
\item[(ii)] 
Suppose $p \equiv 1 \mod 4$. 
Then, $\alpha^{\otimes 2}$ is a weakly TI Clifford circuit followed by a shift.
Hence, $\mathfrak C(D,p)$ is a direct sum of some number of $\ZZ_2$.
In fact, $\mathfrak C(D=3,p) \supseteq \ZZ_2 \times \ZZ_2$.
\item[(iii)] 
Suppose $p \equiv 3 \mod 4$. Then, $\alpha^{\otimes 4}$ is a weakly TI Clifford circuit followed by a shift.
Hence, $\mathfrak C(D,p)$ is a direct sum of some number of $\ZZ_4$ and $\ZZ_2$.
In fact, $\mathfrak C(D=3,p) \supseteq \ZZ_4$.
\end{itemize}
\end{theorem}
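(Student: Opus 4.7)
The plan is to attach to every TI Clifford QCA $\alpha$ on $\lat q D$ an algebraic invariant taking values in a Witt group of antihermitian forms, to show the invariant classifies $\alpha$ up to trivial QCAs, and then to compute the relevant Witt group in each of the three settings. Following the standard encoding (as in Refs.~\cite{clifQCA,Haah2013}), I would first represent $\alpha$ by a $2q\times 2q$ symplectic matrix $U$ over the Laurent polynomial ring $R_D = \FF_p[x_1^{\pm 1},\ldots,x_D^{\pm 1}]$ satisfying $U^\dagger J U = J$, where $\dagger$ combines matrix transpose with the anti-involution $x_i \mapsto x_i^{-1}$. In this language, weakly TI Clifford circuits become products of ``elementary'' symplectic matrices of bounded polynomial degree, and shifts become monomial symplectic matrices.

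Next I would introduce a lattice boundary by restricting $\alpha$ to the half-space $\ZZ^{D-1}\times\ZZ_{\ge 0}$ and extending by identity outside. The resulting $*$-automorphism is not quite a QCA, but its obstruction to being one is concentrated in a neighborhood of the boundary hyperplane. Tracking how Pauli commutation relations fail to be preserved at this cut organizes the obstruction into an antihermitian form $h_\alpha$ of unit determinant over $R_{D-1} = \FF_p[x_1^{\pm 1},\ldots,x_{D-1}^{\pm 1}]$. The main technical lemma to prove is a classification statement: $\alpha \cong \Id$ in $\mathfrak C(D,p)$ if and only if $h_\alpha$ is hyperbolic (i.e.\ trivial in the Witt group of antihermitian forms over $R_{D-1}$). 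One direction — trivial $\alpha$ gives hyperbolic $h_\alpha$ — is a direct computation on layers of a circuit and on shifts. The other direction, reconstructing a disentangling circuit from a chosen Lagrangian of $h_\alpha$ while preserving weak translation invariance, is where I expect the main difficulty to lie, since it asks for a constructive realization of each abstract Witt-group equivalence as a genuine local circuit in the bulk, with the stabilization of \cref{def:equivQCA} providing the room needed to absorb obstructions.

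Given the classification lemma, each claim of the theorem reduces to a Witt-group computation. For (i), $D-1 = 1$ and $R_1 = \FF_p[x^{\pm 1}]$ is a principal ideal ring with involution; I would argue via a Smith-like normal form argument that every unit-determinant antihermitian form over $R_1$ admits a Lagrangian, so the Witt group vanishes and $\mathfrak C(D=2,p) = 0$ follows. For (ii) and (iii), I would bound the exponent of the Witt group by $2$ or $4$ by reducing an antihermitian form over $R_{D-1}$, via a d\'evissage along each variable $x_i$, to a constant form over $\FF_p$, where the classical symmetric-bilinear Witt group $W(\FF_p)$ is $\ZZ_2\times\ZZ_2$ when $p \equiv 1 \pmod 4$ and $\ZZ_4$ when $p \equiv 3 \pmod 4$; doubling or quadrupling the form then makes it hyperbolic, so $\alpha^{\otimes 2}$ or $\alpha^{\otimes 4}$ is trivial. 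Finally, for the 3D lower bounds I would construct explicit commuting Pauli Hamiltonians by coupling a few layers of $\FF_p$ toric codes in three dimensions, exhibit their disentangling Clifford QCAs, and verify that the corresponding boundary forms realize the generators of the subgroup $W(\FF_p)$ of the Witt group over $R_2$; the classification lemma applied in reverse then shows that these QCAs give nontrivial classes in $\mathfrak C(D=3,p)$, yielding the claimed inclusions $\ZZ_2\times\ZZ_2$ and $\ZZ_4$.
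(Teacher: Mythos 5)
Your overall architecture (symplectic matrices over Laurent rings, a boundary antihermitian form, Witt-group computations) is the right one, and your treatment of part (i) essentially matches the paper: the form lives over $\FF_p[x^{\pm1}]$, its Witt group vanishes, hence $\mathfrak C(2,p)=0$. But there are two genuine gaps. The first is that your ``classification lemma'' is stated as an if-and-only-if, and you use the \emph{only if} direction (``nontrivial Witt class $\Rightarrow$ nontrivial QCA'') to get the 3D lower bounds. That direction is not available: the antihermitian form is only defined after choosing a direction, a shift, and a finite-index translation subgroup, and it is an \emph{open problem} (explicitly flagged in the paper) whether transverse coarse-graining preserves the Witt class. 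What one can prove is circuit-invariance for a fixed transverse translation group and the implication ``hyperbolic form $\Rightarrow$ trivial QCA''; the converse needed for nontriviality would require the well-definedness you are assuming. The paper therefore abandons the boundary form entirely for the lower bounds and instead constructs a gapped surface Hamiltonian with local topological order realizing the dyon theory $\Theta_{p,f}$, and derives a contradiction from the fact that a strictly 2D locally topologically ordered commuting Pauli Hamiltonian must host a nontrivial boson while $\Theta_{p,1}\oplus\Theta_{p,-g}$ has none. Even granting your classification lemma, you would still owe a proof that the explicit forms are non-hyperbolic over $\FF_p[x^{\pm1},y^{\pm1}]$, a two-variable Witt computation you do not address.

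The second gap is an ordering problem in (ii) and (iii). You derive the exponent bounds on $\mathfrak C(D,p)$ from the Witt-group exponent via the classification lemma. But the proof that a hyperbolic boundary form forces triviality goes through a blending/dimensional-reduction argument that trivializes a stack of $(D-1)$-dimensional QCAs by invoking the exponent-$4$ statement in dimension $D-1$; so the exponent bound must be established \emph{before} (or by induction alongside) the classification lemma, not deduced from it. The paper avoids this by proving the exponent bounds directly on symplectic matrices: when $\sqrt{-1}\in\FF_p$ one conjugates by $\sqrt{-1}\,I$ to show $\alpha\cong\alpha^{-1}$, and when $p\equiv 3 \bmod 4$ one uses a solution of $x^2+y^2=-1$ to show $\alpha^{\otimes 2}\cong(\alpha^{-1})^{\otimes 2}$, with Suslin's stability theorem supplying the elementary factorizations. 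Your proposed d\'evissage in each variable is also more than is needed for the Witt-group exponent itself (tensoring with $I_n$ and the congruence $I_n\cong\left(\begin{smallmatrix}0&1\\1&0\end{smallmatrix}\right)^{\oplus n/2}$ suffices), but the essential point is that the QCA exponent cannot be routed through the boundary classification without either circularity or an explicit induction on $D$ that you would need to set up.
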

\begin{remark}
The direct sum could be empty (as in $D=2$) or infinite.
We do not know whether the groups are always finite.
$\square$
\end{remark}

The dimension $p$ of a qudit will always be a {\em prime}.
A {\bf Pauli operator} is a finite tensor product of generalized Pauli matrices and a $p$-th root of unity;
we usually drop the word ``generalized.''
We use modules over Laurent polynomial rings.
By convention, any map on a module (e.g., a matrix acts on a free module) acts on the left.
All algebras and subalgebras are unital and over $\mathbb C$.

\subsection{Overview of the proof}

We calculate the exponent of TI Clifford QCA in \cref{sec:exponents}
by mimicking the Witt group calculation for finite fields
together with Suslin's stability theorem~\cite{Suslin1977Stability}.
In \cref{sec:bdalg}
we establish a circular correspondence among three objects:
a TI Clifford QCA $\alpha$ on $D$ dimensions, 
a subalgebra on a $(D-1)$-dimensional sublattice
in the image under $\alpha$ of operators on a half space,
and a nondegenerate antihermitian form over Laurent polynomial ring of $D-1$ variables.
The lost information in this circular correspondence will be precisely Clifford circuits and shifts.
By studying antihermitian forms over univariate Laurent polynomial rings,
we conclude that all $D=2$ TI Clifford QCA must be trivial.
This triviality result is constructive;
the proof will reveal an algorithm how to decompose a TI Clifford QCA into a Clifford circuit and a shift.
In general, the antihermitian forms give ``Witt groups.''

The nontriviality results for $D=3$
are based on examples of coupled layer constructions~\cite{WangSenthil2013}.
All of our results share a common theme 
that we examine an associated lattice Hamiltonian
(i.e., a separator~\cite{nta3}) with a spatial boundary.

The rest of the paper, except for \cref{sec:discussion} and \cref{app:trivialQCA}, 
constitutes the proof of \cref{thm:main}.

\section{Exponents of Clifford QCA}\label{sec:exponents}

An {\bf exponent} of a group $G$ is an integer $n$ 
such that $g^n$ is the identity for every $g \in G$;
it does not always exist.
In this section we show that the group of 
all translation invariant Clifford QCAs modulo Clifford circuits and shifts
on any lattice of prime dimensional qudits
has an exponent 2 or 4.

\subsection{Translation invariant Clifford QCAs and polynomial symplectic matrix}

A Clifford QCA on $\lat{q}{D}$ is determined by the images of Pauli matrices.
Using translation invariance, 
we have developed~\cite{Haah2013} a compact representation of the data
by a {\bf symplectic matrix}%
\footnote{
The term ``symplectic'' is overloaded from the situation where $D=0$.
Our symplectic matrix is an autormophism of a free module over the Laurent polynomial ring with coefficients in a finite field,
preserving an ``antihermian'' pairing (form).
So, we should have called our matrix ``unitary'' following literature on quadratic forms~\cite{MilnorHusemoller,Lam}.
However, there is a greater danger of confusion if we chose ``unitary'' instead of ``symplectic,''
since we consider complex inner product spaces and unitary operators on them, too.
}
over $R = \ZZ_p[\ZZ^D] = \FF_p[x_1^\pm,\ldots,x_D^\pm]$; 
see an introductory section~\cite[IV.A]{nta3}.
This is a generalization of a machinery 
behind Pauli stabilizer block codes~\cite{CalderbankRainsShorEtAl1997Quantum}
to infinite lattices.
The developed machinery associates
a matrix $Q$ over $R$ to any TI Clifford QCA $\alpha$ such that
\begin{align}
Q^\dagger \lambda_q Q = \lambda_q := 
\begin{pmatrix}
0 & I_q \\
- I_q & 0
\end{pmatrix}.
\end{align}
Here, $\dagger$ is the transpose followed by the {\bf antipode map}, a $\FF_p$-linear involution
\begin{align}
x_j \mapsto \bar x_j = x_j^{-1}
\end{align}
of~$R$, corresponding to the inversion of the lattice $\ZZ^D$ about its origin.%
\footnote{
If $R$ is regarded as a group algebra over $\FF_p$,
this involution is the usual antipode map of the group algebra.
For the group algebra for a nonabelian group, the antipode map is not an automorphism 
because it reverses the multiplication order.
But our group $\ZZ^D$ is abelian, and therefore the antipode map is an $\FF_p$-algebra automorphism of $R$.
}
The set of all such matrices $Q$ maps into $\mathfrak C(D,p)$,
where the map becomes surjective 
if we enlarge the domain by taking the union of such sets of matrices over all~$q$.
The tensor product of QCAs corresponds to a direct sum of the corresponding symplectic matrices.
The lost information in the association from $\alpha$ to $Q$
is only a nonentangling Clifford circuit (actually consisting of Pauli matrices)
of depth one~\cite[Prop. 2.1]{Haah2013}.

We will frequently use matrix operations involving the antipode map,
which we denote as follows. 
For any matrix $M$ let $M_{\mu \nu}$ denote its matrix element.
\begin{align}
(M^\dagger)_{\mu \nu} &= \overline{M_{\nu \mu}}\nonumber \\
M^{-\dagger} &= (M^{-1})^\dagger = (M^\dagger)^{-1}
\end{align}
These notations will be used for any invertible element of $R$ which is a $1$-by-$1$ matrix.

The generators for Clifford circuits correspond to elementary row operations that are symplectic~\cite{Haah2013}.
Define
\begin{align}
 \left[ E_{i,j}(u) \right]_{\mu \nu} &= \delta_{\mu \nu} + \delta_{\mu i} \delta_{\nu j} u 
 & \text{ where $\delta$ is the Kronecker delta.}
\end{align} 
The elementary row operations corresponding to TI Clifford circuit and shift layers are:
\begin{align}
 \text{Hadamard: } \quad&E_{i,i+q}(-1) E_{i+q,i}(1) E_{i,i+q}(-1) &\text{ where } 1 \le i \le q,\nonumber\\
 \text{Control-Phase: }\quad & E_{i+q,i}(f)&\text{ where }f = \bar f \in R, 1 \le i \le q,\label{eq:ElemSymp}\\
 \text{Control-X: } \quad & E_{i,j}(a) E_{j+q,i+q}(-\bar a) &\text{ where } a \in R, 1 \le i \ne j \le q, \nonumber \\
\text{Extra gate $J$ for }p \neq 2: \quad & E_{i,i}(c-1) E_{i+q,i+q}(c^{-\dagger}-1) & \text{ where } c \in R^\times, 1 \le i \le q. \nonumber
\end{align}
Here, $c$ in the extra gate $J$ belongs to $\FF_p$ if it represents a single qudit gate,
but $c$ is a monomial $x_1^{e_1} \cdots x_D^{e_D}$ with at least one $e_j$ nonzero
if $J$ represents a nonidentity shift QCA.
The question of whether a given TI Clifford QCA $\alpha$ is a Clifford circuit and a shift
boils down to whether the symplectic matrix $\alpha$ is a product of these {\bf elementary symplectic operations}.
Note that the first three operations all have determinant one, 
whereas the extra gate $J$ has determinant $c c^{-\dagger}$ that is~$1$ if and only if~$c \in \FF_p^\times$.
The superscript `$\times$' means the multiplicative group of all invertible elements.
\begin{lemma}\label{lem:detQ}
The determinant of any symplectic matrix $Q$ is $c c^{-\dagger}$ for some $c \in R^\times$.
\end{lemma}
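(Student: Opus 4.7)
My plan is to extract from $Q^\dagger \lambda_q Q = \lambda_q$ just enough information to pin down $\det Q$, and then sharpen that weak constraint via a specialization that trivializes the involution. Taking determinants of both sides, and using the elementary fact that $\det \lambda_q = 1$ (one swaps the two block rows of $\lambda_q$ at the cost of $(-1)^q$, arriving at a block-triangular matrix of determinant $(-1)^q$), yields $\overline{\det Q} \cdot \det Q = 1$. Since $R^\times \cong \FF_p^\times \times \ZZ^D$, every unit of $R$ has the form $\alpha x^{\vec n}$ with $\alpha \in \FF_p^\times$ and $\vec n \in \ZZ^D$, so the norm condition forces $\alpha^2 = 1$ but leaves $\vec n$ entirely unconstrained. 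On the target side, $c c^{-\dagger} = c/\bar c$ equals $x^{2 \vec m}$ for any unit $c = \beta x^{\vec m}$, so what actually remains to prove is the stronger statement $\alpha = 1$ and $\vec n \in 2\ZZ^D$.

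The key step is to pass to the quotient $R' = R/(x_1^2 - 1, \ldots, x_D^2 - 1)$. In $R'$ each $x_i$ equals its own inverse, so the bar involution becomes trivial and $\dagger$ specializes to plain transpose. The image $Q'$ of $Q$ then satisfies the classical symplectic identity $(Q')^T \lambda_q Q' = \lambda_q$ over the commutative ring $R'$. Because $\lambda_q$ has zero diagonal, it is an alternating form with unit Pfaffian, so the Pfaffian theorem (valid over any commutative ring, not just a field) applies to give $\det Q' = 1$ in $R'$.

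To conclude, $R'$ is free over $\FF_p$ with basis $\{x^{\vec \epsilon} : \vec \epsilon \in \{0,1\}^D\}$, so the image $\alpha x^{\vec n \bmod 2}$ of $\det Q$ in $R'$ can equal $1$ only when $\alpha = 1$ and every component of $\vec n$ is even; writing $\vec n = 2\vec m$ and setting $c = x^{\vec m}$ gives $c c^{-\dagger} = x^{2 \vec m} = \det Q$, as required. I do not anticipate a substantial obstacle; the only delicate point is that the Pfaffian theorem requires $\lambda_q$ to be alternating (not merely antisymmetric), which holds in every characteristic because $\lambda_q$ has zero diagonal. This uniformity is also what makes the argument go through uniformly for $p = 2$, where the more elementary approach of evaluating at a fixed point $x_i = -1$ breaks down because $\overline{\FF_2}$ has no nontrivial element fixed by $t \mapsto t^{-1}$.
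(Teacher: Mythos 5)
Your proof is correct, and it takes a genuinely different route from the paper's. The paper first reduces to $D\le 1$ by specializing all but one variable to $1$, and then invokes the known fact that every symplectic matrix over $\FF_p[x^\pm]$ is a product of the elementary generators of \cref{eq:ElemSymp}, each of which visibly has determinant $1$ or $cc^{-\dagger}$; so the paper's argument leans on the nontrivial one-dimensional classification theorem. You instead stay self-contained: after the (shared) observation that invertibility forces $\det Q=\alpha x^{\vec n}$ and the norm condition $\overline{\det Q}\,\det Q=1$ only pins down $\alpha=\pm1$, you pass to $R'=R/(x_1^2-1,\ldots,x_D^2-1)$, where the involution becomes trivial, and apply the Pfaffian identity $\mathrm{Pf}(B^T\lambda_q B)=\det(B)\,\mathrm{Pf}(\lambda_q)$ — a polynomial identity over $\ZZ$, hence valid over the (possibly non-reduced, e.g.\ $p=2$) ring $R'$ since $\lambda_q$ is alternating with zero diagonal — to conclude $\det Q\equiv 1$ in $R'$, which kills both the sign and the parity of $\vec n$ at once because the square-free monomials form an $\FF_p$-basis of $R'$. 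What your approach buys is independence from the $D=1$ elementary-generation result and uniformity across all characteristics, including $p=2$ where the naive evaluation at $x_i=-1$ degenerates; what the paper's approach buys is brevity given machinery it has already cited. Both are valid proofs of the lemma.
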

\begin{proof}
Suppose the lemma is true for $D \le 1$.
Then, for $D > 1$, if $\det Q = a x_1^{e_1} \cdots x_D^{e_D}$ for some $a \in \FF_p$ and $e_j \in \ZZ$,
we may set all but one $x_j$ to be~$1$, and conclude that $a = 1$ and $e_j \in 2\ZZ$ by the result of $D=1$ case.
Hence, we only have to show the lemma for $D \le 1$.
In this case, we claim that $Q$ is a product of elementary symplectic matrices.
Then, $\det Q$ is a product of the determinants of elementary symplectic matrices,
each of which is $c c^{-\dagger}$ for some $c \in \FF_p[x^\pm]^\times$,
and hence the lemma will be proved.
The decomposition of $Q$ into elementary symplectic matrices when $D=1$ 
is a consequence of \cite[\S 6]{Haah2013} and \cite[Lem.IV.10]{nta3},
but we reproduce the detail here.

The entries of the first column $v$ of $Q$ generate the unit ideal (unimodular), 
as seen by the equation $(Q^\dagger \lambda) Q = \lambda$
where the first column of $\lambda$ has an entry $1$ at $q+1$-th position.
Moreover, $v^\dagger \lambda v = 0$ (self-orthogonal).
We claim that (i) for any unimodular self-orthogonal column matrix $v$
there exists a product $E$ of the elementary symplectic matrices 
such that $Ev$ is a column matrix with a unique nonzero entry $1$ at the first position,
and moreover 
(ii) if in addition $v$ has $1$ at the first position, 
$E$ may be chosen such that $EQ$ and $Q$ have the same first row.
Then, by (i) we find an elementary symplectic $E$ 
such that $EQ$'s first column is the unit vector with $1$ at the first position.
For any matrix $P$, the equation $P^\dagger \lambda P = \lambda$ 
implies that $(P^\dagger \lambda) (-P\lambda) = I$, so $(-P\lambda) (P^\dagger \lambda) = I$,
which means $P \lambda P^\dagger = \lambda$.
Hence, we see that the first \emph{row} of $EQ$ is unimodular and self-orthogonal,
and has $1$ at the first position.
By applying (ii) to rows, 
we find an elementary symplectic $E'$ such that
\begin{align}
EQE'= \begin{pmatrix}
1 & 0_{1 \times (q-1)} & 0 & 0_{1 \times (q-1)} \\
0_{(q-1)\times 1} & \star & \star & \star \\
0 & \star & \star & \star \\
0_{(q-1)\times 1} & \star & \star & \star
\end{pmatrix}
=
\begin{pmatrix}
1 & 0_{1 \times (q-1)} & 0 & 0_{1 \times (q-1)} \\
0_{(q-1)\times 1} & \star & 0 & \star \\
0 & 0 & 1 & 0 \\
0_{(q-1)\times 1} & \star & 0 & \star
\end{pmatrix}
\end{align}
where the second equality is by the equation $(EQE')^\dagger \lambda (EQE') = \lambda$.
This allows induction in $q$, completing the decomposition of $Q$ into elementary symplectic matrices.

It remains to prove the two claims (i) and (ii).
Control-X gates induce arbitrary row operations on the upper half of $v$.
Euclid's algorithm implies that this upper half of $v$ can be transformed by, say, $E_1$,
such that there remains in $E_1 v$ a sole nonzero entry (at the first position)
that is the greatest common divisor of the entries of the upper half of $v$.
If the first position of $v$ is $1$, $E_1$ can be chosen such that
the first row of $E_1$ is the same as that of the identity matrix.
Once this is done, we can focus on the lower half of $E_1 v$,
and Control-X gates focusing on the $q+2,\ldots,2q$-th entries can induce $E_2$ 
such that nonzero entries of $E_2 E_1 v$ are now only at $1,q+1,q+2$-th positions.
Hence, we have reduced the problem to $q=2$ case.

If the $q+2$-th entry is nonzero and if the first position of $v$ is $1$,
we can use Hadamard to bring it to the $2$nd position,
and eliminate it by Control-X.
Since the first position of $v$ is $1$,
the $q+1$-th entry $f$ must be equal to $\bar f$ due to the self-orthogonality,
and Control-Phase can eliminate $f$.
This completes the proof of the claim (ii).

For (i), consider the degrees of nonzero entries of $v$ which are Laurent polynomials.
Here, the relevant degree is not the absolute exponent, 
but the difference between the maximum exponent and the minimum.
For example, the relevant degree of $x^{-3} + x + x^2$ is $5 = 2 - (-3)$.
The minimum relevant degree over all nonzero entries of $E v$ for an elementary symplectic $E$,
must reach $0$ for some $E$ that is a product of Control-X and Hadamard, since $v$ is unimodular.
The extra gate $J$ turns an entry of relevant degree~$0$ into~$1$,
and the claim (ii) completes the proof. 
\end{proof}

\subsection{To Clifford circuits}

Suslin's stability theorem~\cite[Cor.7.11]{Suslin1977Stability}
says that any invertible matrix over $R$ of unit determinant and of size $\ge 3$ 
is a product of elementary row operations.
Hence, for any $E, F \in GL(n \ge 3; R)$ the following transformation is
by a circuit on the left of $\alpha$ and another circuit on the right of $\alpha$,
possibly with a shift QCA.
\begin{align}
Q = \begin{pmatrix}
A & B \\ C & D
\end{pmatrix}
\to
\begin{pmatrix}
E & 0 \\ 0 & E^{-\dagger}
\end{pmatrix}
\begin{pmatrix}
A & B \\ C & D
\end{pmatrix}
\begin{pmatrix}
F & 0 \\ 0 & F^{-\dagger}
\end{pmatrix}
=
\begin{pmatrix}
EAF & E B F^{-\dagger} \\ E^{-\dagger}CF & E^{-\dagger}DF^{-\dagger}
\end{pmatrix}. \label{eq:suslin}
\end{align}
If $E$ and $F$ are some product of elementary circuits as listed in \cref{eq:ElemSymp},
then of course we do not need the condition $n \ge 3$.
Even in a nontrivial application of Suslin's theorem,
the condition $n \ge 3$ is not a restriction 
since we may simply add one ancilla qudit per lattice point of $\ZZ^D$
or enlarge the unit cell by reducing translation group down to a subgroup of index, say, $2$.

The calculation here is similar to that of \cite[Thm. IV.9]{nta3}.
It suffices to trivialize the left half block of a symplectic matrix $\alpha$
since the other half is determined up to a TI Clifford circuit~\cite[Lem.IV.10]{nta3}.

First, let us treat the case where $p \equiv 1 \mod 4$.
In this case, $(p-1)/4$ is an integer.
Hence, if $g \in \FF_p^\times$ is the generator 
of the multiplicative group $\FF_p^\times = \FF_p \setminus \{0\}$ that is cyclic,
$g^{(p-1)/4}$ is a primitive fourth root of unity,
and in particular $\sqrt{-1}$ exists in $\FF_p$.
Therefore, we may choose $E^{-\dagger} = F = \sqrt{-1} I$ (so $E = F^{-\dagger} = - \sqrt{-1}I$)
in \cref{eq:suslin} to see
\begin{align}
Q
=
\begin{pmatrix}
A & B \\
C & D \\
\end{pmatrix}
\cong
\begin{pmatrix}
A & -B \\
-C & D \\
\end{pmatrix}. \label{eq:p1mod4symp}
\end{align}
That is, $\alpha \cong \alpha'$ 
where the symplectic matrix of $\alpha'$ 
is the right-hand side of \cref{eq:p1mod4symp}.
The next lemma shows that $\alpha' \cong \alpha^{-1}$,
which implies $\alpha^2 \cong \Id$ if $p \equiv 1 \mod 4$,
proving the first statement of \cref{thm:main}~(ii).
\begin{lemma}\label{lem:inverseQCA}
Let $p = 2,3,5,7,\ldots$ be any prime and $D\ge 0$ an integer,
Let
\begin{align}
Q = \begin{pmatrix}
A & \star \\ C & \star
\end{pmatrix}, \quad 
Q' = \begin{pmatrix}
A & \star \\ -C & \star
\end{pmatrix}
\end{align}
be symplectic matrices.
Then, $Q \oplus Q'$ is a product of elementary row operations corresponding to TI Clifford circuit layers.
That is, $[Q'] \in \mathfrak C(D,p)$ is the inverse of $[Q] \in \mathfrak C(D,p)$.%
\footnote{
In our convention, the upper half block of a symplectic matrix is for tensor components of $X$
that is a real matrix,
and the lower is for those of $Z$ that is a nonreal matrix.
Hence, the separator associated with $Q'$ (the left half block of $Q'$)
may be regarded as the time reversal conjugate of
the separator associated with $Q$.
}
\end{lemma}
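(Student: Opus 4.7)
The plan is to exhibit $Q \oplus Q'$ as a product of the elementary symplectic row operations in \cref{eq:ElemSymp} by first reducing to a canonical choice of $Q'$ and then invoking the known fact that $\alpha \otimes \alpha^{-1}$ is always a Clifford circuit.

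First, by \cite[Lem.~IV.10]{nta3} the right half of a symplectic matrix is determined by its left half up to a TI Clifford circuit, so I may replace $Q'$ by any symplectic completion of its prescribed left block. A convenient choice is $Q' = \begin{pmatrix} A & -B \\ -C & D \end{pmatrix}$. A short calculation using the three symplectic relations $A^\dagger D - C^\dagger B = I$, $A^\dagger C = C^\dagger A$, and $B^\dagger D = D^\dagger B$ on $Q$ confirms that this $Q'$ satisfies $Q'^\dagger \lambda_q Q' = \lambda_q$.

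Second, with this choice, $Q' = P Q P$ where $P = \mathrm{diag}(I_q, -I_q)$, and I claim $Q'$ is the symplectic matrix of the complex-conjugate QCA $\bar\alpha$ defined by $\bar\alpha(O) = \overline{\alpha(\bar O)}$, since in the generalized Pauli basis $\bar X = X$ while $\bar Z = Z^{-1}$. In other words, the separator on the left half of $Q'$ is literally the time-reversal conjugate of the separator on the left half of $Q$, as already remarked in the footnote. The lemma therefore reduces to proving $\bar\alpha \cong \alpha^{-1}$ modulo Clifford circuits and shifts. I would establish this by producing an explicit TI Clifford circuit $U$ on the doubled system $\lat{2q}{D}$ whose conjugation takes $\alpha \otimes \bar\alpha$ to $\alpha \otimes \alpha^{-1}$; a natural candidate is a uniform layer of control-$X$ gates between the two copies implementing a Bell-pair-like basis change, possibly combined with Hadamards on one of the sectors.

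The main obstacle is to find such a $U$ that works uniformly in the prime $p$. For $p=2$ the Bell and anti-Bell sectors split the doubled Pauli algebra as a symplectic direct sum and the Bell-basis change goes through cleanly, but for odd $p$ the symplectic pairing between a Bell-$X$ generator $X^{(1)}_i X^{(2)}_i$ and an anti-Bell-$Z$ generator $Z^{(1)}_i Z^{(2)}_i$ equals $\pm 2 \not\equiv 0 \bmod p$, so the naive splitting is not symplectically orthogonal. I would either modify the basis change asymmetrically (for instance pairing $X^{(1)}_i X^{(2)}_i$ with $Z^{(1)}_i$ alone, so that its symplectic partner lives on a single copy), or abandon the Bell-basis approach and factor $Q \oplus Q'$ directly as a product of four Whitehead-style symplectic transvections $\begin{pmatrix} I & \star \\ 0 & I \end{pmatrix}$ and $\begin{pmatrix} I & 0 \\ \star & I \end{pmatrix}$, using Suslin stability (as in \cref{eq:suslin}) together with an ancilla qudit to guarantee the rank hypothesis when needed.
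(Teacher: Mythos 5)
Your reduction to the canonical completion $Q' = \begin{pmatrix} A & -B \\ -C & D\end{pmatrix} = PQP$ with $P = \diag(I_q,-I_q)$ is correct (and the identification of $Q'$ with the time-reversal conjugate matches the footnote), but the proof has a genuine gap at its center: you never actually construct the circuit. You acknowledge that the Bell-basis change fails for odd $p$, and the two fallbacks you offer are not carried out. The more serious problem is that the second fallback, as stated, does not work: Suslin's stability theorem factors matrices of \emph{unit determinant in $GL_n(R)$} into elementary matrices, which in this setting only produces conjugations of the block form $\diag(E, E^{-\dagger})$ (i.e., control-X circuits, as in \cref{eq:suslin}). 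It does \emph{not} let you factor an arbitrary element of the symplectic group over $R$ with its nontrivial involution into elementary symplectic transvections --- if it did, every TI Clifford QCA would be trivial, contradicting the 3D examples of \cref{sec:3DQCA}. So ``factor $Q\oplus Q'$ directly as a product of transvections using Suslin stability'' is not a step one can take; the whole difficulty of the lemma is to arrange the doubled matrix into a shape where a $\diag(E,E^{-\dagger})$-type move suffices.

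The missing idea is the specific preparatory sequence that accomplishes this arrangement. The paper tracks only the left half of $Q\oplus Q'$ (the right half being determined up to circuit by \cite[Lem.IV.10]{nta3}) and applies: (1) a control-X layer between the two copies with $E = \begin{pmatrix} I & 0 \\ I & I\end{pmatrix} = F^{-1}$; (2) a Hadamard on every qudit of the second copy, which moves one copy's $C$-block up into the $X$-rows; (3) the crucial step, conjugation by $\diag(Q^{-1}, Q^{\dagger})$ in the sense of \cref{eq:suslin} --- this \emph{is} of the $GL$ form covered by Suslin (together with \cref{lem:detQ} and a shift to absorb the monomial determinant), and the symplectic identities $D^\dagger A - B^\dagger C = I$, $A^\dagger C = C^\dagger A$ collapse the first copy's left block to $\begin{pmatrix} I \\ 0\end{pmatrix}$ and the second's to $\begin{pmatrix} \star \\ I\end{pmatrix}$ simultaneously; (4) an upper-triangular $\diag(F,F^{-\dagger})$ move plus \cite[Lem.IV.10]{nta3} to clean up the residual blocks; (5) the inverse Hadamard. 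Without step (3) --- realizing the bulk of the cancellation as a single $GL$-conjugation by $Q^{-1}$ acting across the $X$-rows of both copies --- the argument does not close, and neither of your proposed routes supplies it.
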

This can be thought of as a generalization of \cite[Thm.IV.9]{nta3} that asserts $\mathfrak C(D,p=2)$ 
has exponent~$2$, to any prime~$p$;
with $p=2$ the matrix $Q'$ can be set to be equal to $Q$.
\begin{proof}
We use \cref{eq:suslin} with various $E$ and $F$ on the full matrix $S$
as well as some elementary Clifford gates. With the right half block suppressed, the result is
\begin{align}
\begin{pmatrix}
A & 0 \\
0 & A \\
-C & 0\\
0 & C 
\end{pmatrix}
\to
\begin{pmatrix}
A & 0 \\
0 & A \\
0 & -C\\
-C & C
\end{pmatrix} 
\to 
\begin{pmatrix}
A & 0 \\
C & -C \\
0 & -C  \\
0 & A 
\end{pmatrix}
\to
\begin{pmatrix}
I & K  \\
0 & K'  \\
0 & 0  \\
0 & I 
\end{pmatrix}
\to
\begin{pmatrix}
I & 0 \\
0 & 0 \\
0 & 0 \\
0 & I  
\end{pmatrix}
\to
\begin{pmatrix}
I & 0 \\
0 & I \\
0 & 0 \\
0 & 0 
\end{pmatrix}.
\end{align}
The first arrow is by \cref{eq:suslin} with
$E = \begin{pmatrix} I & 0 \\ I & I \end{pmatrix} = F^{-1}$.
The second arrow is by the Hadamard gate on every qudit on the second system.
The third arrow is by \cref{eq:suslin} with 
$E = Q^{-1}$ and $F = I$.
It is this step where the Suslin theorem~\cite{Suslin1977Stability} is crucial.
The fourth arrow is by \cref{eq:suslin} with $E = I$ and 
$F = \begin{pmatrix} I & -K \\ 0 & I \end{pmatrix}$,
followed by \cite[Lem.IV.10]{nta3} on the second system which eliminates $K'$.
The fifth arrow is by the inverse Hadamard gate on every qudit on the second system.
\end{proof}

Next, we treat the case where $p \equiv 3 \mod 4$.
In this case, $(p-1)/2$ is an odd integer, and $-1 \in \FF_p$ is not a square;
i.e., the equation $x^2 + 1 = 0$ has no solution in $\FF_p$.
But still the equation $x^2 + y^2 + 1 = 0$ has a solution $x, y \in \FF_p$.
For example, $5^2 + 3^2 + 1 = 0$ in $\FF_7$.
(This elementary fact can be shown in many ways:
one is to count $\{ x^2 \in \FF_p: x=0,1,2,\ldots,(p-1)/2 \}$, and
another is to consider the quotient multiplicative group $\FF_p^\times / (\FF_p^\times)^2$
and observe that $(\FF_p^\times)^2$ cannot be closed under additions when $-1 \notin (\FF_p^\times)^2$.)
Using $x,y$ such that $x^2 + y^2 = -1$,
we set $E^{-\dagger} = F = \begin{pmatrix} x I & y I \\ y I & -x I \end{pmatrix}$ in \cref{eq:suslin}
with $Q \oplus Q$ in place of $Q$:
\begin{align}
Q \oplus Q 
=\begin{pmatrix}
A & 0 & B & 0 \\
0 & A & 0 & B \\
C & 0 & D & 0 \\
0 & C & 0 & D
\end{pmatrix}
\cong
\begin{pmatrix}
A & 0 & -B & 0 \\
0 & A & 0 & -B \\
-C & 0 & D & 0 \\
0 & -C & 0 & D
\end{pmatrix}
\cong Q^{-1} \oplus Q^{-1}. \label{eq:qq=qqinv}
\end{align}
where the second equivalence is by \cref{lem:inverseQCA}.
This implies that
\begin{align}
Q^{\oplus 4} \cong Q^{\oplus 2} \oplus (Q^{-1})^{\oplus 2} \cong \Id .
\end{align}
We have proved the first statement of \cref{thm:main}~(iii).

The Witt group of a field of characteristic not $2$
is a group of all equivalent nondegenerate symmetric forms up to hyperbolic forms
$\cong \diag(1,-1)$.
For finite fields of $p$ elements,
the Witt group is isomorphic to $\ZZ_2 \times \ZZ_2$ if $p \equiv 1 \mod 4$
and $\ZZ_4$ if $p \equiv 3 \mod 4$;
see a general book~\cite{Lam} or an elementary section \cite[App.E]{HHPW2017}.
Indeed, our computation here is motivated by that of the exponent computation of the Witt group of $\FF_p$.
The connection will be tighter in the $D=3$ examples of TI Clifford QCAs in \cref{sec:3DQCA}.

\section{Boundary algebras}\label{sec:bdalg}

At the end of this section we will prove the triviality of 2D TI Clifford QCA.
To this end we will examine the ``boundary'' of a QCA,
which is basically the image of the subalgebra of all operators supported on a half space.
This image contains all local operators far from the boundary of the half space,
but becomes nontrivial as we get closer to the boundary.
Since we study QCA of strictly finite range,
the region where the image is nontrivial is sharply contained in a finite neighborhood of the boundary.
This approach has proven useful in previous classification results~\cite{GNVW,FreedmanHastings2019QCA}.

Since we study translation invariant Clifford QCA,
where the objects are presumably simpler,
our characterization of boundary algebras will be more thorough.
We will identify generators of a boundary algebra,
define an antihermitian form over the translation group algebra along the boundary,
and consider the collection of all such forms.
They will form a group in an analogous way as the Witt groups for fields.
In the converse direction,
we will construct a QCA starting from any invertible antihermitian forms,
completing the characterization of the boundary algebras.
All these results will have no dependence in the spatial dimension $D$.

The correspondence among three classes of objects, TI Clifford QCA, boundary algebras, and invertible antihermitian forms,
will be so tight that the only lost information
turns out to be precisely the Clifford circuits and shifts.
It is very useful for us that the boundary algebras and the antihermitian forms
live in one dimension lower than the QCA.
In particular, the problem on 2D TI Clifford QCA becomes 
that on quadratic forms over a field of transcendence degree~$1$ above a finite field,
and we deduce that any such forms are trivial.

Let us begin with a definition of boundary algebras.
\begin{definition}
Let $Q$ be the symplectic matrix of a TI Clifford QCA $\alpha$ on $\lat q D$ where $D \ge 1$.
Suppose that the variable $x_D$ in $Q$ has exponent $0$ or $1$,
so that $Q = A + x_D B$ where $A$ and $B$ are matrices over $\FF_p[x_1^\pm,\ldots,x_{D-1}^{\pm}]$.
We define the {\bf boundary algebra} of $\alpha$ (at positive $x_D$-axis) to be
one that generated by Pauli operators corresponding to the columns of $B$.
\end{definition}
The columns of $B$ specify the Pauli operators only up to phase factors,
but this phase ambiguity is unimportant as we think of an algebra generated by them.

The boundary algebra for more general QCA $\alpha$ 
has been examined before~\cite{GNVW,FreedmanHastings2019QCA}
and is defined as the ``difference'' between $M$ and $M'$ 
where $M$ the image under $\alpha$ of full algebra on a half space~$\HH$
and $M'$ is the full algebra of a (smaller) half space whose image under $\alpha$ is contained in $\HH$.
Our definition is tailored to the special case of translation invariant Clifford QCA.

However, our definition is somewhat unsatisfactory
in the following sense.
We put a restriction that the variable $x_D$ 
must have exponent either~$0$ or~$1$.
This condition can be met by 
composing a given TI Clifford QCA with shifts (that we regard trivial)
so that the symplectic representation does not have any variable with negative exponents,
and taking a translation subgroup of some finite index so that
the symplectic representation has exponents $0,1$ for all variables.
Then, it becomes necessary to show that the boundary algebras
resulting from arbitrary choices of translation subgroup and shifts,
are the same in an appropriate sense.
We will give some answers in this direction later in \cref{lem:CircuitsGiveTrivialAntihermitianForms}.

\subsection{Commutation relations among generators}\label{sec:constructXi}

In this subsection $Q = A + x_D B$ is a $2q \times 2q$ symplectic matrix over 
$\FF[x_1^\pm,\ldots,x_{D}^{\pm}]$ 
satisfying $Q^\dagger \lambda Q = \lambda$
where $A,B$ are matrices over $\FF[x_1^\pm,\ldots,x_{D-1}^{\pm}]$.
We suppress the dimension index $q$ from the antisymmetric matrix $\lambda = \lambda_q$
which will be inferred from the context.
Since $Q$ is invertible, its determinant is a unit in $\FF[x_1^\pm,\ldots,x_{D}^{\pm}]$, a monomial.
Except for \cref{cor:bdalgComm} below, all the lemmas in this subsection are true 
for any (not necessarily finite) field $\FF$.

\begin{lemma}\label{lem:bdalg}
The ranks of $A$ and $B$ are both even and sum to $2q$, 
and their smallest nonvanishing determinantal ideals are both unit.
Furthermore, $\ker A^\dagger \lambda = \im B$ and $\ker B^\dagger \lambda = \im A$
over $\FF[x_1^\pm,\ldots,x_{D-1}^{\pm}]$.
\end{lemma}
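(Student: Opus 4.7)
The plan is to translate the symplectic identity $Q^\dagger \lambda Q = \lambda$ into a small set of algebraic relations between $A$ and $B$, use them to establish the internal decomposition $V = \im A \oplus \im B$ for $V := (R')^{2q}$ with $R' := \FF[x_1^\pm,\ldots,x_{D-1}^\pm]$, and then read off all four claims from this decomposition.

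First, substituting $Q = A + x_D B$ and $Q^\dagger = A^\dagger + x_D^{-1} B^\dagger$ into $Q^\dagger \lambda Q = \lambda$ and matching coefficients of $x_D^{-1}, x_D^0, x_D^1$ produces the relations $A^\dagger \lambda B = 0$ (and its $\dagger$) together with $A^\dagger \lambda A + B^\dagger \lambda B = \lambda$. Because $Q$ is symplectic, $Q^{-1} = -\lambda Q^\dagger \lambda = \tilde A + x_D^{-1} \tilde B$ with $\tilde A := -\lambda A^\dagger \lambda$ and $\tilde B := -\lambda B^\dagger \lambda$; expanding $Q Q^{-1} = I$ in the same way yields the companion relations $A \tilde B = B \tilde A = 0$ and $A \tilde A + B \tilde B = I$. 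The inclusions $\im B \subseteq \ker A^\dagger \lambda$ and $\im A \subseteq \ker B^\dagger \lambda$ are then immediate; conversely, if $A^\dagger \lambda v = 0$ then $\tilde A v = -\lambda A^\dagger \lambda v = 0$, so $v = A\tilde A v + B\tilde B v = B \tilde B v \in \im B$. Hence $\ker A^\dagger \lambda = \im B$, and symmetrically $\ker B^\dagger \lambda = \im A$. The same identity $A\tilde A + B \tilde B = I$ forces $\im A + \im B = V$, while any $v \in \im A \cap \im B$ is killed by both $A^\dagger \lambda$ and $B^\dagger \lambda$, so $\tilde A v = \tilde B v = 0$ and $v = 0$. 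Thus $V = \im A \oplus \im B$ as $R'$-modules.

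With the splitting in hand, the remaining claims follow. Both summands are direct summands of the free module $V$, hence projective, and by Quillen--Suslin for Laurent polynomial rings over a field they are free; set $r := \rank \im A$ and $s := \rank \im B$, so $r + s = 2q$. Choosing a basis of $V$ adapted to the decomposition presents $A$ as a block matrix with only the first $r$ rows nonzero and $B$ with only the last $s$ rows nonzero; pulling the factor $x_D$ out of each of the last $s$ rows by multilinearity then gives $\det(A + x_D B) = x_D^{s} \cdot u$ for some $u \in (R')^\times$ (namely $\det(A+B)$). Combined with \cref{lem:detQ}, which forces $\det Q = c c^{-\dagger}$ to carry an even $x_D$-exponent, this shows $s$, and hence $r = 2q - s$, is even. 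The decomposition also gives $\coker A \cong \im B$ free of rank $s$, so Fitting's lemma yields $I_r(A) = R'$ and $I_{r+1}(A) = 0$, which is exactly the claim about the smallest nonvanishing determinantal ideal of $A$; the analogous statement for $B$ is symmetric.

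The principal obstacle is obtaining the trivial-intersection half of $V = \im A \oplus \im B$: orthogonality under $\lambda$ alone does not force $\im A \cap \im B = 0$ over a Laurent polynomial ring, and one has to exploit that $Q^{-1}$ is again two-term in $x_D$ in order to get the splitting identity $A\tilde A + B \tilde B = I$. A secondary technicality is the evenness of the ranks, where one must convert the monomial shape of $\det Q$ from \cref{lem:detQ} into a parity constraint via the block-matrix determinant computation.
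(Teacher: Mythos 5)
Your proof is correct, and it takes a genuinely different route from the paper's. The paper proves the determinantal-ideal and rank statements by expanding $\det Q$ in the \emph{columns} of $A+zB$ and locating the unit coefficient of $z^\ell$ inside $I_{2q-\ell}(A)\cap I_\ell(B)$ via cofactor expansion, and it then obtains the exactness $\ker A^\dagger\lambda=\im B$ from the Buchsbaum--Eisenbud criterion applied to $R^{2q}\xrightarrow{A}R^{2q}\xrightarrow{B^\dagger\lambda}R^{2q}$. You instead exploit that $Q^{-1}=-\lambda Q^\dagger\lambda$ is again a two-term polynomial in $x_D$, extract the partition of identity $A\tilde A+B\tilde B=I$ from $QQ^{-1}=I$, and deduce everything from the resulting splitting $(R')^{2q}=\im A\oplus\im B$. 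This makes the exactness claim an elementary two-line computation with no homological input, and it delivers the rank and determinantal-ideal statements via Quillen--Suslin (freeness of the summands), a row-multilinear determinant expansion in an adapted basis, and Fitting ideals of the free cokernel. The trade-off is roughly: the paper's argument needs Buchsbaum--Eisenbud but gets the unit determinantal ideals with no input beyond \cref{lem:detQ}, whereas yours avoids Buchsbaum--Eisenbud entirely at the cost of invoking Quillen--Suslin up front --- a tool the paper uses anyway in the very next lemma, so nothing new is really being assumed. Your closing identification of the key obstacle (that $\lambda$-orthogonality alone does not give $\im A\cap\im B=0$ over $R'$, and that the two-term shape of $Q^{-1}$ is what supplies the missing splitting identity) is exactly right.
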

\begin{proof}
Put $z = x_D$ so that
$\det Q = x_1^{e_1} \cdots x_{D-1}^{e_{D-1}} z^\ell$ for $e_j,\ell \in 2\ZZ$ by \cref{lem:detQ}.
For any $k=1,\ldots,2q$, let $A_k$ and $B_k$ be the $k$-th column of $A$ and $B$, respectively.
Since the determinant is multilinear in columns, we expand $\det Q$ in $z$ as
\begin{align*}
\det Q 
&=
\det ( A_1 + z B_1 | \cdots | A_{2q} + z B_{2q} )\\
&=
z^0 \det ( A_1 | \cdots | A_{2q} ) \\
&\quad + \cdots  \\
&\quad + z^\ell \Big[ \det (A_1 | \cdots | A_{2q-\ell}| B_{2q-\ell+1} | \cdots | B_{2q} ) + \cdots + 
\det(B_1| \cdots | B_\ell| A_{\ell+1} | \cdots |A_{2q}) \Big] \\
&\quad + \cdots \\
&\quad + z^{2q}\det ( B_1 | \cdots | B_{2q} ) .
\end{align*}
The $z^\ell$ term must be equal to $\det Q$, and hence all the other terms vanish.
The $z^\ell$ term is a sum of $\binom{2q}{\ell}$ determinants,
each of which contains $\ell$ columns of $B$ and $2q-\ell$ columns of $A$.
Now, the cofactor expansion formula for the determinant implies that
\begin{align}
\det ( A_{j_1} | \cdots |A_{j_m} | B_{j_{m+1}} | \cdots | B_{j_{2q}} ) \in I_m(A) \cap I_{2q-m} (B)
\end{align}
for any $m$ distinct columns of $A$ and $2q-m$ distinct columns of $B$
where $I_k(M)$ is the $k$-th determinantal ideal of $M$,
i.e., the ideal generated by all $k \times k$ minors of $M$.
Therefore, the $z^\ell$ term of $\det Q$, which is a unit,
is in $I_{2q - \ell}(A) \cap I_\ell(B)$.
It follows that $I_{2q-\ell}(A) = I_\ell(B) = (1)$, the unit ideal.
In particular, the sum of ranks of $A$ and $B$ is $\ge 2q$.

Expanding the condition $Q^\dagger \lambda Q = \lambda$ in $z$,
we see that $A^\dagger \lambda A + B^\dagger \lambda B = \lambda$ and $A^\dagger \lambda B = 0$.
The latter implies that the sum of the ranks of $A$ and $B$ is $\le 2q$.
Therefore, the rank of $A$ is $2q - \ell$ and that of $B$ is $\ell$,
and the first claim is proved.

The exactness claim follows from \cite[Lem.20.10]{Eisenbud}.
\end{proof}

Since $B$ has the unit determinantal ideal, 
we may choose a free basis for $\im B$ by the following.
\begin{lemma}\label{lem:Quillen-Suslin-Swan}
Let $M$ be an $n \times m$ matrix over $R = \FF[x_1 ^\pm, \ldots, x_D^\pm]$ where $\FF$ is any field.
If the smallest nonzero determinantal ideal $I(M)$ of $M$ is unit,
then there exists invertible matrices $E,E'$ such that
\begin{align}
E M E' = \begin{pmatrix} I_{\rank M} & 0 \\ 0 & 0 \end{pmatrix}.
\end{align}
\end{lemma}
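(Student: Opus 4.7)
The plan is to reduce the statement to the Quillen--Suslin theorem, extended by Swan to Laurent polynomial rings, which asserts that every finitely generated projective module over $R = \FF[x_1^\pm, \ldots, x_D^\pm]$ is free. The argument has three layers: (a) show $\coker M$ is projective, hence free; (b) deduce that $\im M$ is also free; (c) assemble the change-of-basis matrices $E,E'$.

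Set $r = \rank M$, so by hypothesis $I_r(M) = (1)$ while $I_{r+1}(M) = 0$. Localizing at an arbitrary prime $\mathfrak p \subset R$, the unit ideal condition forces some $r \times r$ minor of $M$ to be a unit in $R_\mathfrak p$, and the vanishing of $I_{r+1}(M)$ forces all larger minors to vanish. After permuting rows and columns so that this unit minor occupies the top-left $r \times r$ block $M_{11}$, standard Schur-complement row and column operations over $R_\mathfrak p$ clear the off-diagonal blocks, and the vanishing of all $(r+1) \times (r+1)$ minors forces the remaining Schur complement $M_{22} - M_{21} M_{11}^{-1} M_{12}$ to be zero. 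Rescaling by $M_{11}^{-1}$ then brings $M_\mathfrak p$ to $\begin{pmatrix} I_r & 0 \\ 0 & 0 \end{pmatrix}$, so $(\coker M)_\mathfrak p \cong R_\mathfrak p^{n-r}$ is free of constant rank $n-r$. Since $\coker M$ is finitely presented (as the cokernel of a map of finite free modules over the Noetherian ring $R$) and locally free of constant rank, it is projective, hence free by Quillen--Suslin--Swan. The short exact sequence $0 \to \im M \to R^n \to \coker M \to 0$ therefore splits, yielding $R^n \cong \im M \oplus R^{n-r}$, so $\im M$ is stably free and hence itself free by a second appeal to the same theorem. The analogous argument on $0 \to \ker M \to R^m \to \im M \to 0$ gives $\ker M \cong R^{m-r}$.

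For the final step, choose bases of $R^n$ and $R^m$ compatible with the decompositions $R^n = \im M \oplus F$ and $R^m = \ker M \oplus G$, where $F \cong R^{n-r}$ and $G \cong R^r$ maps isomorphically onto $\im M$. In these bases $M$ is represented as $\begin{pmatrix} \phi & 0 \\ 0 & 0 \end{pmatrix}$ with $\phi \in GL_r(R)$; multiplying on the left by $\begin{pmatrix} \phi^{-1} & 0 \\ 0 & I_{n-r} \end{pmatrix}$ absorbs $\phi$, producing invertible $E$ and $E'$ with $E M E' = \begin{pmatrix} I_r & 0 \\ 0 & 0 \end{pmatrix}$ as required.

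The only substantive obstacle is the appeal to the Quillen--Suslin--Swan theorem; the rest is routine Fitting-ideal bookkeeping and standard manipulations with free bases. In particular, no restriction on $n$, $m$, $D$, or on the field $\FF$ is needed beyond what Swan's theorem itself requires.
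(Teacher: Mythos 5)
Your proof is correct and follows the same overall strategy as the paper's: localize to show $\coker M$ is locally free (your Schur-complement computation is a spelled-out version of the paper's ``obvious by elementary row and column operations''), invoke Quillen--Suslin--Swan to conclude $\coker M$ is free, and then extract the normal form. The two arguments differ only in the final step. The paper diagonalizes the induced surjection $N\colon R^n \to \coker M \cong R^k$ explicitly via Suslin's unimodular-row completion theorem applied row by row, and then recurses on the resulting surjective block $M_0$; this forces it to treat small matrix sizes as a separate (trivial) case, since the cited completion theorem carries a size restriction. You instead split the exact sequences $0 \to \im M \to R^n \to \coker M \to 0$ and $0 \to \ker M \to R^m \to \im M \to 0$ --- both split because the right-hand terms are projective, hence free, by further appeals to Quillen--Suslin--Swan --- and read off the block form from bases adapted to the resulting direct-sum decompositions, absorbing the invertible block $\phi$ at the end. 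Your finish is marginally cleaner: it avoids explicit row manipulation and the size caveat entirely, at the cost of invoking freeness of projectives two more times. Both routes are correct and rest on the same key theorem.
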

This is a direct consequence of the Quillen-Suslin-Swan theorem~\cite{Suslin1977Stability,Swan1978}.
This formulation has been used in \cite{nta3}, but here we spell out details.
\begin{proof}
Upon localization at any maximal ideal of $R$,
the claim is obvious by elementary row and column operations.
Thus, $\coker M$ is locally free, 
and hence is projective~\cite[I.3.4]{LamSerreBook}.
The Quillen-Suslin-Swan theorem says that $\coker M$ is actually free.
Put $\coker M \cong R^k$.
Composing this isomorphism with the projection from $R^n$ onto $\coker M$,
we have an exact sequence
\begin{align}
R^m \xrightarrow{M} R^n \xrightarrow{N} R^k \to 0.
\end{align}
In particular, each row of $N$ generates the unit ideal (unimodular)
and \cite[Thm.7.2]{Suslin1977Stability} implies that any row can be brought
to the unit row vector with sole nonzero entry~$1$ by some basis change of~$R^n$,
which in turn can be used to diagonalize $N$ by some basis change of~$R^k$.
(The cited theorem does not cover cases where $n \le 2$ but in those cases this claim is trivial.)
Once $N$ is diagonalized with the identity matrix as a principal matrix,
the map $M$ has to map onto the components of $R^n$ corresponding to zero columns of $N$:
\begin{align}
N = \begin{pmatrix} I & 0 \end{pmatrix}, 
\quad M = \begin{pmatrix} 0 \\ M_0 \end{pmatrix} .
\end{align}
Here, $M_0$ is now onto, so a similar argument shows that $M_0$ can be diagonalized.
\end{proof}

Let $B_0$ be a matrix whose columns form a free basis of $\im B$,
as provided by \cref{lem:Quillen-Suslin-Swan}.
\begin{lemma}\label{lem:detXi}
Put $\Xi = B_0^\dagger \lambda B_0 = - \Xi^\dagger$.
Then, $\det \Xi = c^2$ for some $c \in \FF^\times$.
\end{lemma}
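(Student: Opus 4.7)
The plan is to establish, in order, that $\det\Xi$ is a unit in $R_0:=\FF[x_1^\pm,\ldots,x_{D-1}^\pm]$, that this unit lies in $\FF^\times$, and that it is a square there.

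For the unit property, I would produce an analogue of $B_0$ for $A$: by \cref{lem:bdalg} the matrix $A$ has rank $2q-\ell$ and unit $(2q-\ell)$-th determinantal ideal, so \cref{lem:Quillen-Suslin-Swan} furnishes a $2q\times(2q-\ell)$ matrix $A_0$ whose columns freely generate $\im A$. Writing $A_0=AS$ and $B_0=BT$ and invoking $A^\dagger\lambda B=0$ (obtained from expanding $Q^\dagger\lambda Q=\lambda$ in $x_D$ as in the proof of \cref{lem:bdalg}) yields $A_0^\dagger\lambda B_0=0$, so
\[
[A_0\mid B_0]^\dagger\,\lambda\,[A_0\mid B_0]\;=\;\begin{pmatrix}\Xi_A & 0\\ 0 & \Xi\end{pmatrix},\qquad \Xi_A:=A_0^\dagger\lambda A_0.
\]
To see the concatenation $[A_0\mid B_0]$ is invertible, I would note that its image equals $\im A+\im B$, which contains the image of $Q|_{x_D=1}=A+B$; the latter is invertible over $R_0$ because $\det Q$ is a unit of the full ring $R$ by \cref{lem:detQ}, whose specialization to $x_D=1$ remains a unit. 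Thus $[A_0\mid B_0]$ is a surjective endomorphism of a finitely generated module over the Noetherian ring $R_0$, hence an isomorphism, and $d:=\det[A_0\mid B_0]$ is a unit. Taking determinants of the block identity (with $\det\lambda=1$) gives $\overline d\,d=\det\Xi_A\cdot\det\Xi$; since $R_0$ is a domain, each factor on the right is a unit of $R_0$.

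Next, the antihermitian identity $\Xi^\dagger=-\Xi$, combined with the evenness of $\ell$ supplied by \cref{lem:bdalg}, gives $\overline{\det\Xi}=(-1)^\ell\det\Xi=\det\Xi$. Any unit of $R_0$ is a monomial $u\,x^f$ with $u\in\FF^\times$, and bar-invariance $\overline{ux^f}=ux^{-f}=ux^f$ forces $f=0$; hence $\det\Xi\in\FF^\times$.

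Finally, to extract the square root, I would specialize every $x_j\mapsto 1$: the involution $\dagger$ reduces to ordinary transpose, so $\Xi|_{x=1}$ is a skew-symmetric $\ell\times\ell$ matrix over $\FF$ with $\ell$ even. When $\mathrm{char}\,\FF\neq 2$ this matrix is alternating and the Pfaffian formula yields $\det\Xi=(\det\Xi)|_{x=1}=\mathrm{Pf}(\Xi|_{x=1})^2\in(\FF^\times)^2$; when $\mathrm{char}\,\FF=2$ the field $\FF=\FF_2$ relevant to the paper is perfect, so every nonzero element is already a square. The main subtlety I anticipate is the surjectivity step for $[A_0\mid B_0]$, which hinges on $\det Q$ being a \emph{pure} monomial (rather than a more general polynomial) so that setting $x_D=1$ preserves invertibility; without that feature of \cref{lem:detQ}, the passage $\im A+\im B=R_0^{2q}$ would not be automatic.
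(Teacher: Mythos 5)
Your proposal is correct and reaches the same conclusion, but the route to invertibility of $\Xi$ is genuinely different from the paper's. The paper proves surjectivity of $\Xi$ by explicitly correcting a preimage candidate, $v_j'' = v_j + A\lambda A^\dagger \lambda v_j$, and verifying $A^\dagger\lambda v_j''=0$ via the identities $Q\lambda Q^\dagger=\lambda$ and $B\lambda A^\dagger=0$, then proves injectivity from the exactness statement $\ker B_0^\dagger\lambda=\im A$ of \cref{lem:bdalg}; it then concludes via \cref{lem:detInvertibleAntihermitian}. You instead introduce a free basis $A_0$ for $\im A$ (legitimate, since \cref{lem:bdalg} gives $A$ a unit smallest determinantal ideal so \cref{lem:Quillen-Suslin-Swan} applies), observe $[A_0\mid B_0]^\dagger\lambda[A_0\mid B_0]=\Xi_A\oplus\Xi$, and deduce invertibility of the concatenation from the fact that $\im A+\im B\supseteq\im(Q|_{x_D=1})=R_0^{2q}$ — which, as you correctly flag, leans on $\det Q$ being a monomial (\cref{lem:detQ}) so that specialization at $x_D=1$ preserves invertibility, plus the surjective-endomorphism-is-isomorphism fact. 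Taking determinants of the block identity then gives the unit property of $\det\Xi$ in one stroke. This is a clean, more global argument; the paper's version is more hands-on but also exhibits the inverse of $\Xi$ concretely. Your final two steps (bar-invariance forces $\det\Xi\in\FF^\times$; specialization at $x=1$ plus the Pfaffian gives a square) coincide with the paper's \cref{lem:detInvertibleAntihermitian}. One small remark: your characteristic-$2$ fallback to perfectness suffices for $\FF=\FF_p$ but not for an arbitrary field of characteristic $2$ (the subsection claims validity for any field); this is avoidable, since each diagonal entry of $B_0^\dagger\lambda B_0$ has zero constant term and vanishes at $x=1$ in every characteristic, so $\Xi|_{x=1}$ is genuinely alternating and the Pfaffian argument applies uniformly.
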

\begin{proof}
Let us first show that $\Xi = B_0^\dagger \lambda B_0$ is invertible.
\cref{lem:Quillen-Suslin-Swan} says that there exists a matrix $B'$ such that $B_0^\dagger B' = I_{\rank B}$.
Since $\lambda$ is invertible, we see that there exists a column matrix $v_j$ 
such that $B_0^\dagger \lambda v_j$ is the unit column matrix with a sole nonzero entry at the $j$-th component,
for any $j = 1,\ldots, \rank B_0$.
We claim that there exists $v_j'' \in \im B_0$ such that $B_0^\dagger \lambda v_j'' = B_0^\dagger \lambda v_j$.
Since $j$ is arbitrary, this shows that
$\Xi : \FF[x_1^\pm, \ldots, x_{D-1}^\pm]^\ell \to \FF[x_1^\pm, \ldots, x_{D-1}^\pm]^\ell$ is surjective.
Let us show that 
\begin{align}
v_j'' = v_j + v_j' = v_j + A \lambda A^\dagger \lambda v_j
\end{align} 
qualifies by checking
\begin{align}
A^\dagger \lambda v_j'' &= 0 \quad \text{so} \quad  v_j'' \in \im B_0, \label{eq:correctWithA}\\
\text{ and } \quad B_0^\dagger \lambda v_j' &= 0. \label{eq:correctWithB}
\end{align}
\cref{eq:correctWithB} is clear by \cref{lem:bdalg}.
To show \cref{eq:correctWithA}
we note that $Q^{-1} = -\lambda Q^\dagger \lambda$ is also symplectic,
which implies that $Q \lambda Q^\dagger = \lambda$ and $B \lambda A^\dagger = 0$.
Then,
\begin{align*}
v_j' 
&= (A + z B) \lambda A^\dagger \lambda v_j = Q \lambda A^\dagger \lambda v_j,\\
A^\dagger \lambda v_j' 
&= (A^\dagger + \bar z B^\dagger) \lambda v_j'  = Q^\dagger \lambda v_j'\\
&= Q^\dagger \lambda Q \lambda A^\dagger \lambda v_j = \lambda \lambda A^\dagger \lambda v_j\\
&= - A^\dagger \lambda v_j.
\end{align*}
To show that $\Xi$ is injective, suppose $B_0^\dagger \lambda B_0 v = 0$.
Then, $B_0 v \in \im B_0 \cap \ker (B_0^\dagger \lambda) = 
\ker (A_0^\dagger \lambda) \cap \ker (B_0^\dagger \lambda)$ by \cref{lem:bdalg}.
This means that $B_0 v \in \ker (Q^\dagger \lambda) = 0$.
But $B_0$ is injective by construction so $v = 0$.
Now the proof is concluded by the next lemma. 
\end{proof}
\begin{lemma}\label{lem:detInvertibleAntihermitian}
Let $\Xi$ be an invertible antihermitian matrix with zero constant terms in the diagonal 
over $\FF[x_1^\pm,\ldots,x_D^\pm]$ where $\FF$ is any field.
Then, $\det \Xi = c^2$ for some $c \in \FF^\times$.
\end{lemma}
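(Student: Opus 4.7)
My plan is to reduce from the Laurent polynomial ring to the base field $\FF$ by specializing all variables to $1$, and then invoke the classical Pfaffian identity for alternating matrices.

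First, I would show that $\det \Xi$ is in fact a nonzero constant in $\FF$. Since $\Xi$ is invertible, $\det \Xi$ is a unit in the Laurent polynomial ring, so $\det \Xi = c \cdot x_1^{e_1} \cdots x_D^{e_D}$ for some $c \in \FF^\times$ and integers $e_j$. The antihermitian condition $\Xi^\dagger = -\Xi$ yields
\begin{align}
\overline{\det \Xi} \;=\; \det(\Xi^\dagger) \;=\; \det(-\Xi) \;=\; (-1)^n \det \Xi,
\end{align}
where $n$ is the size of $\Xi$. Matching the two sides as scalar multiples of Laurent monomials forces $e_j = -e_j$ for every $j$, hence $e_j = 0$, and $\det \Xi = c \in \FF^\times$.

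Next, I would specialize $x_1 = \cdots = x_D = 1$ and call the resulting matrix $M$ over $\FF$. The bar involution becomes trivial upon this substitution, so $M^T = -M$; that is, $M$ is skew-symmetric. I also claim that each $M_{ii} = 0$: writing $\Xi_{ii} = \sum_\alpha c_{ii,\alpha}\, x^\alpha$, the antihermitian relation on the diagonal gives $c_{ii,\alpha} = -c_{ii,-\alpha}$, and the hypothesis says $c_{ii,0} = 0$. Pairing indices $\alpha$ with $-\alpha$ in $M_{ii} = \sum_\alpha c_{ii,\alpha}$ then shows $M_{ii} = 0$, so $M$ is an \emph{alternating} matrix over $\FF$.

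Finally, since $\det M = c \in \FF^\times$, the matrix $M$ is nonsingular; as any alternating matrix of odd size is singular over any field, $n$ must be even. The classical Pfaffian identity $\det M = \mathrm{Pf}(M)^2$, valid for alternating matrices over any field, concludes $c = \mathrm{Pf}(M)^2$, i.e., a square in $\FF^\times$. I do not anticipate a serious obstacle. The only mild subtlety is the characteristic $2$ case: in odd characteristic the vanishing of the constant terms on the diagonal would follow automatically from antihermiticity, but in characteristic $2$ it is a genuine extra hypothesis, needed precisely to make $M$ alternating (rather than merely skew-symmetric) and thus to make the Pfaffian identity applicable.
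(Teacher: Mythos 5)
Your proof is correct and follows essentially the same route as the paper's: establish that $\det \Xi$ is a constant in $\FF^\times$ via the bar-involution, specialize all variables to $1$ to obtain an invertible alternating matrix over $\FF$, and conclude via the Pfaffian identity. You are somewhat more explicit than the paper about why the specialized matrix is genuinely alternating (zero diagonal) and about the role of the no-constant-diagonal hypothesis in characteristic $2$, which is a welcome clarification but not a different argument.
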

\begin{proof}
Setting all the variables to~$1$ in $\Xi$ we have a skew-symmetric matrix $W = \Xi|_ {x_1 =\cdots = x_D = 1}$
with zero diagonal.
The invertibility implies that $W$ is also invertible, 
and therefore $\det W$ is a square of nonzero Pfaffian, that belongs to $\FF$.
Furthermore, $W$ must be a $2n$-by-$2n$ matrix for some integer $n$, and so is $\Xi$.
Since the determinant is multilinear in the rows, we have $\det(-\Xi^\dagger) = (-1)^{2n} \det(\Xi^\dagger)$.
Since the determinant of any matrix is equal to that of its transpose,
we have $\det(\overline{\Xi}^T) = \det(\overline{\Xi})$.
Since the antipode map is an automorphism of the base ring,
we have $\det(\overline{\Xi}) = \overline{\det \Xi}$.
Since $\Xi = - \Xi^\dagger$ by assumption, 
it follows that $\det \Xi = (-1)^{2n} \det (\Xi^\dagger) = \overline{\det \Xi}$.
This means that $\det \Xi$ is invariant under the antipode map,
but any such element must belong to the coefficient field $\FF$.
Therefore, $\det \Xi = \det W$, which is a nonzero square.
\end{proof}

Given a free module of Pauli operators modulo phase factors,
we can pick a basis (nonredundant generating set)
and make a table of commutation relations among them.
The table is nothing but the matrix $\Xi$ of the preceding lemma.
Note that given a symplectic matrix $Q = A + z B$,
the matrix $\Xi$ is only determined up to congruence;
different free bases $B_0$ and $B_0'$ for $\im B$ are related by an invertible matrix $E$ such that $B_0' = B_0 E$.
Hence, $\Xi' = (B_0')^\dagger \lambda B_0' = E^\dagger (B_0^\dagger \lambda B_0) E = E^\dagger \Xi E$.
See \cite[\S IV.A]{nta3}.
\begin{corollary}\label{cor:bdalgComm}
For any TI Clifford QCA $\alpha$ on $\lat q D$ of prime $p$-dimensional qudits,
the set of all Pauli operators in the boundary algebra of $\alpha$ modulo phase factors
form a free module of even rank.
The antihermitian form of commutation relations on this free module 
has determinant that is a nonzero square of some element of $\FF_p$.
\end{corollary}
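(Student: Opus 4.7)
The plan is simply to assemble the results established earlier in this subsection. Under the standard correspondence that encodes a Pauli operator (up to a phase) as a vector in $R^{2q}$ with $R = \FF_p[x_1^\pm, \ldots, x_{D-1}^\pm]$, the Pauli operators in the boundary algebra correspond precisely to the elements of the submodule $\im B \subseteq R^{2q}$. By \cref{lem:bdalg}, $B$ has even rank $\ell$ and its smallest nonzero determinantal ideal is the unit ideal, so \cref{lem:Quillen-Suslin-Swan} applies and provides a matrix $B_0$ whose $\ell$ columns form a free $R$-basis for $\im B$. This gives the first claim: the Pauli operators in the boundary algebra modulo phases form a free module of even rank.

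Next I would examine the commutation table on this free basis. The commutator of two Pauli operators $P, P'$ represented by columns $b, b'$ is encoded by the scalar $b^\dagger \lambda b' \in R$ (each coefficient recording a commutation phase with a translate), so the full table on the chosen basis is exactly $\Xi = B_0^\dagger \lambda B_0$. Since $\lambda^\dagger = -\lambda$, the matrix $\Xi$ is antihermitian, and its diagonal entries have zero constant term because any Pauli operator commutes with itself (equivalently with its zero translate). The preceding lemma already shows $\Xi$ is invertible, and then \cref{lem:detInvertibleAntihermitian} applies and yields $\det \Xi = c^2$ for some $c \in \FF_p^\times$, which is the desired second claim.

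The main thing to check—rather than a genuine obstacle—is that the statement is independent of the free basis chosen for $\im B$, since the boundary algebra itself does not distinguish one basis from another. Any two such bases are related by an invertible matrix $E$ over $R$ with $B_0' = B_0 E$, giving $\Xi' = E^\dagger \Xi E$ and hence $\det \Xi' = (\det E)(\overline{\det E})(\det \Xi)$. Because $\det E$ is a unit in $R$, it has the form $a \cdot x_1^{e_1} \cdots x_{D-1}^{e_{D-1}}$ with $a \in \FF_p^\times$, so $(\det E)(\overline{\det E}) = a^2 \in (\FF_p^\times)^2$, and the property ``$\det \Xi$ is a nonzero square in $\FF_p$'' is a genuine invariant of the boundary algebra together with its commutation form.
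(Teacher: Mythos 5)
Your proof is correct and follows essentially the same route as the paper: the corollary is exactly the assembly of \cref{lem:bdalg}, \cref{lem:Quillen-Suslin-Swan}, the invertibility lemma for $\Xi = B_0^\dagger\lambda B_0$, and \cref{lem:detInvertibleAntihermitian}. Your closing observation that congruence $\Xi' = E^\dagger \Xi E$ only changes $\det\Xi$ by $(\det E)(\overline{\det E}) \in (\FF_p^\times)^2$ matches the paper's own remark that $\Xi$ is well defined up to congruence.
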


Now we discuss some aspects
of our tailored definition of the boundary algebra.
Taking a smaller translation group along directions other than $z$-axis
does alter $\Xi$; even the size of $\Xi$ is not preserved.
The best we could hope for 
is that $\Xi$'s under these choices of smaller group algebras
represent the same element of the Witt group of antihermitian forms that we discuss in the next subsection.
It remains open whether this is the case.
Also, one may wish to consider another axis, say, $y$,
and ask whether the antihermitian form of the boundary algebra at positive $y$-axis
is (stably) congruent to one at the positive or negative $z$-axis.
We do not know the answer to this question either.

For these considerations, 
we do \emph{not} speak of \emph{the} antihermitian form of a TI Clifford QCA $\alpha$.
For a given $\alpha$ and a direction~$z$,
we simply take a small enough translation group,
compose~$\alpha$ with a shift QCA such that the symplectic matrix $Q$ is of form~$A + zB$,
remove redundant columns of~$B$ by \cref{lem:bdalg,lem:Quillen-Suslin-Swan},
and produce an antihermitian matrix.
Nonetheless, given a $z$-direction,
we will prove shortly that taking a smaller translation group along directions other than the $z$-axis
is the only potential ambiguity.

\subsection{Circuit invariance of induced antihermitian forms}

\begin{lemma}\label{lem:CircuitsGiveTrivialAntihermitianForms}
Let $\alpha$ be a TI Clifford QCA, and $\beta$ be a TI Clifford circuit composed with a TI shift QCA.
Suppose $\alpha$ has an associated antihermitian form $\Xi$ at positive $z$-axis.
Then, there exist a smaller translation group generated by $x_1,\ldots,x_{D-1},z^\ell$
($\ell$ is the new unit cell size under the smaller translation group)
and TI shift QCA $\gamma$ such that
$\alpha \beta \gamma$ has an associated antihermitian form at the positive $z$-axis
that is congruent to $\diag(\Xi, \lambda_n)$ for some $n \ge 0$. 

In particular, if $\alpha$ is a TI Clifford circuit composed with shifts, 
then under some choice of smaller translation group
an antihermitian form of $\alpha$ is congruent to $\lambda_n$ for some $n \ge 0$.
\end{lemma}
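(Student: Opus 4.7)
The plan is to induct on the number of elementary symplectic factors comprising $Q_\beta$ and reduce the lemma to a single-step claim. Since $\beta$ is a TI Clifford circuit composed with a shift, its symplectic matrix factors as $Q_\beta = E_1 E_2 \cdots E_k$ where each $E_j$ is either one of the elementary generators in \cref{eq:ElemSymp} or a monomial multiple of the identity (a shift). The lemma then follows by iterating the single-step claim: \emph{if $\alpha'$ has antihermitian form $\Xi'$ at the positive $z$-axis, then for a single elementary or shift $E$ and some shift $\gamma'$, $\alpha' E \gamma'$ has antihermitian form congruent to $\diag(\Xi', \lambda_m)$ for some $m \ge 0$, after possibly enlarging the unit cell.} Applying this inductively for $j=1,\ldots,k$ and using that TI shift QCAs commute with all TI QCAs to absorb the auxiliary shifts $\gamma'_j$ into a single $\gamma$, we obtain the first statement with $n = \sum_j m_j$. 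The ``In particular'' second statement follows by specializing to $\alpha = \Id$, whose antihermitian form is the empty matrix ($\lambda_0$).

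For the single-step analysis, I enlarge the unit cell by passing to the translation subgroup generated by $x_1,\ldots,x_{D-1}, Z = x_D^\ell$ for $\ell$ exceeding the combined $z$-ranges of $\alpha'$ and $E$. After composing with a shift $\gamma'$ chosen to ensure nonnegative $Z$-exponents, each of $Q_{\alpha'}$, $Q_{E\gamma'}$, and $Q_{\alpha' E \gamma'}$ has $Z$-exponents in $\{0,1\}$. Writing $Q_{\alpha'} = A + ZB$ and $Q_{E\gamma'} = A' + ZB'$, the vanishing of the $Z^2$-coefficient of the product forces $BB' = 0$, and by the same argument applied to $Q_{E\gamma'} Q_{\alpha'}$ with $\ell$ large enough, $B'B = 0$ as well, leaving
\begin{align}
Q_{\alpha' E \gamma'} = AA' + Z(AB' + BA'), \qquad B_\text{new} := AB' + BA'.
\end{align}

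The structural core is the $\lambda$-orthogonal direct sum decomposition $R'^{2q\ell} = \im A \oplus \im B$, which follows from $A^\dagger \lambda B = 0$, the identities $\ker A^\dagger \lambda = \im B$ and $\ker B^\dagger \lambda = \im A$ in the proof of \cref{lem:bdalg}, and invertibility of $Q_{\alpha'}$ (since $A + B = Q_{\alpha'}|_{Z=1}$ is invertible). The two summands of $B_\text{new}$ respect this splitting: $AB' v \in \im A$ and $BA' v \in \im B$. A direct computation using $A^\dagger \lambda B = 0$ shows that the cross terms of the $\lambda$-pairing on $\im B_\text{new}$ vanish, so the induced antihermitian form splits as a form on $\im(AB') \subseteq \im A$ plus a form on $\im(BA') \subseteq \im B$. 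Exploiting $BB' = 0$ together with the identity $A^\dagger \lambda A = \lambda - B^\dagger \lambda B$ from symplecticity of $Q_{\alpha'}$, the restriction $A|_{\im B'}$ is seen to be an isometric embedding into $\im A$, so the $\im(AB')$-piece is form-isomorphic to $\im B'$ equipped with $\Xi_E$. After applying \cref{lem:Quillen-Suslin-Swan} to produce free bases and symplectic row/column operations to diagonalize, the form on $\im B_\text{new}$ becomes congruent to $\diag(\Xi', \Xi_E)$ up to stabilization, and the conclusion reduces to showing $\Xi_E \cong \lambda_m$ for each elementary $E$.

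The main obstacle is the case-by-case verification that $\Xi_E \cong \lambda_m$ for every elementary generator of \cref{eq:ElemSymp} and for a shift. For Hadamard and $J$ with $c \in \FF_p^\times$ we have $B' = 0$, so $\Xi_E = \lambda_0$ trivially. A shift not involving $x_D$ acts as multiplication by a unit, preserving $\im B$ and leaving $\Xi'$ unchanged. The substantive cases are Control-X, Control-Phase, $J$ with a nontrivial monomial $c$, and the $x_D$-shift combined with $\gamma'$; for each, one must identify $\Xi_E$ directly and recognize it as a hyperbolic form $\lambda_m$ using symplecticity of $E$ and \cref{lem:detInvertibleAntihermitian}. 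A secondary technical hurdle is ensuring that the free-basis changes and disentangling operations needed over the Laurent polynomial ring $R' = \FF_p[x_1^\pm,\ldots,x_{D-1}^\pm, Z^\pm]$ indeed exist; this is justified by Suslin's stability theorem and the Quillen-Suslin-Swan theorem (\cref{lem:Quillen-Suslin-Swan}), as used earlier in \cref{eq:suslin}.
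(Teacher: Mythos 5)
Your reduction to a single-step claim and the orthogonal splitting $\im A \oplus \im B$ are fine in themselves, but the single-step claim rests on an assertion that fails in exactly the case that carries all the content. You claim that after coarse-graining one can arrange that $Q_{\alpha'}$, $Q_{E\gamma'}$ \emph{and} their product all have $Z$-exponents in $\{0,1\}$, so that the $Z^2$-coefficient $BB'$ of the product vanishes. No choice of $\ell$ or $\gamma'$ achieves this when $E$ genuinely involves $z$ (the shift along the $z$-direction, or a Control-X/Control-Phase whose monomial contains $z$): two maps each reaching one coarse unit forward compose to one reaching two coarse units forward, and $BB'\neq 0$ in general. (Your fallback — applying "the same argument" to the reversed product to get $B'B=0$ — does not help: the reversed product is a different QCA, and neither coefficient is forced to vanish.) The cases where $BB'=0$ does hold (variable-free $E$) are precisely the cases the paper dispatches in one line, since then $QE=AE+zBE$ and $E^\dagger B^\dagger\lambda BE\cong\diag(\Xi,0)$ immediately. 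The paper's actual work is the complementary case: it reduces every generator to a $z$-shift $E=E_0+zE_1$ with $E_0,E_1$ orthogonal projectors (via the decomposition of Control-X/Control-Phase as shift-conjugates of variable-free gates, \cref{eq:cnotDecomp}), accepts that the product $AE_0+z(BE_0+AE_1)+z^2BE_1$ has degree $2$, coarse-grains the \emph{product} via $z^2\to z'$, and computes the resulting form explicitly, identifying it as (a rearrangement of) $B^\dagger\lambda B$ plus $E_1^\dagger\lambda E_1\cong\lambda_n\oplus 0$ using $A^\dagger\lambda A+B^\dagger\lambda B=\lambda$. That computation is what produces the $\lambda_n$ summand in the statement; in your framework the $\lambda_m$'s are supposed to come from the unproven case checks "$\Xi_E\cong\lambda_m$", which you defer.

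A secondary gap: even granting $BB'=0$, your splitting argument shows the new form restricts to $\im(AB')$ and $\im(BA')$ with vanishing cross terms, and you identify the first piece with $\Xi_E$; but you never argue that the second piece is congruent to $\Xi'$ rather than to some degenerate or smaller form — $A'$ is not invertible when $E$ involves $z$, so $B\mapsto BA'$ need not carry the full form on $\im B$. Both gaps are repaired by following the paper's route: reduce to $E=E_0+zE_1$ with complementary projectors $E_0,E_1$, coarse-grain the product once more, and compute the block form directly.
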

\begin{proof}
Suppose $Q = A + zB$ is a symplectic matrix of $\alpha$ over $\FF[x_1^\pm,\ldots, x_{D-1}^\pm, z^\pm]$
where $A,B$ do not involve $z$.
We have shown that $B^\dagger \lambda B \cong \diag( \Xi, 0 )$ where $\det \Xi$ is a nonzero square of $\FF$.

First, we claim that 
taking a smaller $z$-translation group does not change the congruence class of $\Xi$.
Indeed, for example, if we let two sites along $z$-axis
of the original lattice to form one supersite in a coarser lattice,
then the corresponding Laurent polynomial description is obtained by replacing every $z$ with 
the matrix $\phi_2(z) = \begin{pmatrix} 0 & z' \\ 1 & 0 \end{pmatrix}$,
which is the matrix representation of $z$ as an automorphism of $\FF[z^\pm]$ 
that is viewed as a rank~2 free module over $\FF[z^{\pm 2}]$.
That is, $Q = A + z B$ over the coarser translation group algebra becomes 
$A \otimes \begin{pmatrix} 1 & 0 \\ 0 & 1 \end{pmatrix}  + B \otimes \begin{pmatrix} 0 & z' \\ 1 & 0 \end{pmatrix}$.
The new antihermitian matrix with respect to this smaller translation group
is computed from $B \otimes \begin{pmatrix} 0 & 1 \\ 0 & 0 \end{pmatrix}$
and thus is unchanged up to congruence.
More generally, any representation $\phi_n(z)$ of $z$ as an automorphism of a smaller group algebra
is a similarity transformation of
\begin{align}
\begin{pmatrix}
0 & 0 & \cdots & 0 & z' \\
1 & 0 & \cdots & 0 & 0 \\
0 & 1 & \ddots & 0 & 0 \\
0 & 0 & \hdots & 1 & 0
\end{pmatrix}
\end{align}
and the antihermitian form will be a congruence transform of the original $\Xi$.

Next, if $\beta$'s symplectic matrix $E$ does not involve $z$,
then $\alpha \beta$ has the symplectic matrix $QE = AE + z BE$ and 
$E^\dagger B^\dagger \lambda B E \cong \diag( \Xi, 0 )$;
we don't even need the assumption that $E$ is elementary.
Hence, the only nontrivial case is when $E$ involves $z$.

Observe that Control-Phase and Control-X in \cref{eq:ElemSymp} 
is a product of $S E' S^{-1}$ under suitably smaller translation group,
where $E'$ does not involve any variables, and $S$ is a shift.
Indeed, under a smaller translation group Control-Phase is either a single qubit operator 
(repeated translation invariantly),
Control-X conjugated by Hadamard,
or a product thereof.
Control-X is of form
\begin{align}
\begin{pmatrix}
1 & 0 &   &    \\
m & 1 &   &    \\
  &    & 1 & - m^{-1} \\
  &   & 0 & 1
\end{pmatrix}
=
\begin{pmatrix}
m^{-1} &  &   &    \\
 & 1 &   &    \\
 &     & m^{-1} &  \\
  &    &  & 1
\end{pmatrix}
\begin{pmatrix}
1 & 0 &   &    \\
1 & 1 &   &    \\
   &   & 1 & -1 \\
    &  & 0 & 1
\end{pmatrix}
\begin{pmatrix}
m &  &   &    \\
 & 1 &   &    \\
 &     & m &  \\
 &     &  & 1
\end{pmatrix} \label{eq:cnotDecomp}
\end{align}
where $m$ is any monomial, or a product thereof.
Hadamard does not involve any variable.
Hence, any elementary symplectic matrix is a product of shifts (the extra gate $J$) 
and elementary symplectic matrices that lack any variable.

Therefore, it suffices for us to prove the lemma when $\beta$ is a shift along $z$-direction,
i.e., $E$ is the extra gate $J$.
Since taking smaller translation group along $z$-direction does not change the antihermitian form,
we only have to consider $E = E_0 + z E_1$ where $E_0,E_1$ are diagonal matrices with $0,1$ entries;
see e.g. the last matrix of \cref{eq:cnotDecomp}.
Note that $E_0$ and $E_1$ are mutually orthogonal projectors 
and that there is some $n \ge 0$ such that
\begin{align}
E_1^\dagger \lambda E_1 \cong \lambda_n \oplus 0. \label{eq:E1lambda}
\end{align}
Now let us expand $QE$ to compute the antihermitian form.
\begin{align}
&(A+zB)(E_0 + z E_1) = AE_0 + z BE_0 + z AE_1 + z^2 B E_1 \nonumber \\
\xrightarrow{z^2 \to z'}&
A E_0 \otimes \begin{pmatrix} 1 & 0 \\ 0 & 1 \end{pmatrix} 
+ (B E_0 + AE_1) \otimes \begin{pmatrix} 0 & z' \\ 1 & 0 \end{pmatrix} + 
B E_1 \otimes \begin{pmatrix} z' & 0 \\ 0 & z' \end{pmatrix}.
\end{align}
So, the new antihermitian form is the ``nonsingular part'' that is
provided by \cref{lem:bdalg,lem:Quillen-Suslin-Swan} from
\begin{align}
&\begin{pmatrix}
E_1^\dagger B^\dagger & 0 \\ E_0^\dagger B^\dagger + E_1^\dagger A^\dagger & E_1^\dagger B^\dagger
\end{pmatrix}
\begin{pmatrix}
\lambda & \\ & \lambda
\end{pmatrix}
\begin{pmatrix}
B E_1 & B E_0 + A E_1 \\ 0 & B E_1
\end{pmatrix}
\end{align}
This simplifies using $A^\dagger \lambda A + B^\dagger \lambda B = \lambda$,
which follows from 
$Q^\dagger \lambda Q = \lambda$,
to give
\begin{align}
=&\begin{pmatrix}
E_1^\dagger B^\dagger \lambda B E_1 & E_1^\dagger B^\dagger \lambda B E_0 \\
E_0^\dagger B^\dagger \lambda B E_1 & 
E_0^\dagger B^\dagger \lambda B E_0 
+ E_1^\dagger A^\dagger \lambda AE_1 
+ E_1^\dagger B^\dagger \lambda B E_1
\end{pmatrix} \\
=&
\begin{pmatrix}
E_1^\dagger B^\dagger \lambda B E_1 & E_1^\dagger B^\dagger \lambda B E_0 \\
E_0^\dagger B^\dagger \lambda B E_1 & 
E_0^\dagger B^\dagger \lambda B E_0 
+ E_1^\dagger \lambda E_1 
\end{pmatrix}  \nonumber
\end{align}
which is rewritten as
\begin{align}
=&
\begin{pmatrix}
E_1^\dagger & \\ & E_0^\dagger
\end{pmatrix}
\begin{pmatrix}
B^\dagger\lambda B & B^\dagger\lambda B\\ B^\dagger\lambda B & B^\dagger\lambda B
\end{pmatrix}
\begin{pmatrix}
E_1 & \\ & E_0
\end{pmatrix}
+
\begin{pmatrix}
0 & \\ & E_1^\dagger \lambda E_1
\end{pmatrix}.
\end{align}
Here, the first and second summands act on disjoint sets of components due to projectors $E_0$ and $E_1$.
The second summand is $\lambda_n \oplus 0$ as noted in \cref{eq:E1lambda}.
The first summand is a row and column rearrangement of $B^\dagger \lambda B$
whose ``nonsingular'' part is the antihermitian form of $Q$.
\end{proof}

We note a fact that sometimes facilitates calculation.
\begin{lemma}
Let $Q = A+ z B$ be a $2q \times 2q$ symplectic matrix where $A,B$ does not involve variable $z$.
Let $B_0$ be the last $q$ columns of $B$, and put $\Xi = B_0^\dagger \lambda_q B_0$.
If the matrix $\Xi$ is invertible, 
then $\Xi$ is an antihermitian form of \cref{cor:bdalgComm}.
\end{lemma}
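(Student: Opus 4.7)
The plan is to show that the columns of $B_0$ already form a free basis of $\im B$, so that $\Xi = B_0^\dagger \lambda_q B_0$ is literally one of the antihermitian matrices produced by the procedure in \cref{sec:constructXi}, and hence is an antihermitian form of \cref{cor:bdalgComm}.

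First I would observe linear independence of the columns of $B_0$: if $B_0 v = 0$ then $\Xi v = B_0^\dagger \lambda_q B_0 v = 0$, and $v = 0$ follows from the invertibility of $\Xi$.

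The crucial step is a rank count establishing $\rank B = q$. Writing $A = (A_1 \mid A_0)$ and $B = (B_1 \mid B_0)$ according to the first and last $q$ columns, and reading off the lower-right $q \times q$ block of the symplectic identity $A^\dagger \lambda_q A + B^\dagger \lambda_q B = \lambda_q$ (the $z^0$ coefficient of $Q^\dagger \lambda_q Q = \lambda_q$), I obtain
\[
A_0^\dagger \lambda_q A_0 + \Xi = 0,
\]
since the lower-right block of $\lambda_q$ vanishes. Invertibility of $\Xi$ then forces $A_0^\dagger \lambda_q A_0$ to be invertible, so $\rank A_0 = q$ and $\rank A \ge q$. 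Combined with the rank sum $\rank A + \rank B = 2q$ supplied by \cref{lem:bdalg} and with $\rank B \ge \rank B_0 = q$, this pins down $\rank B = q$.

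To finish, I would introduce the idempotent $P = B_0 \Xi^{-1} B_0^\dagger \lambda_q$ on $R^{2q}$. A direct check shows $P^2 = P$, that $P$ acts as the identity on $\im B_0$, and that $P(\im B) \subseteq \im B_0 \subseteq \im B$, yielding the decomposition $\im B = \im B_0 \oplus (\im B \cap \ker P)$. Comparing ranks, the complement has rank zero; since $\im B$ is a free $R$-module by \cref{lem:bdalg,lem:Quillen-Suslin-Swan}, any rank-zero submodule of it must vanish. Therefore $\im B_0 = \im B$, and $B_0$ is a free basis to which the construction in \cref{sec:constructXi} applies directly, so $\Xi$ serves as an antihermitian form of \cref{cor:bdalgComm}.

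The main obstacle I expect is the corner-block observation that identifies $A_0^\dagger \lambda_q A_0$ with $-\Xi$; once that is recognized, the chain of rank inequalities forcing $\rank B = q$ is essentially unavoidable, and the subsequent direct-sum argument reduces to standard bookkeeping about submodules of free modules over the Laurent polynomial ring.
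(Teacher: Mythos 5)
Your proof is correct, and it takes a genuinely different route from the paper's. You prove the stronger statement that the last $q$ columns of $B$ already form a free basis of $\im B$: the corner-block identity $A_0^\dagger\lambda_q A_0=-\Xi$ forces $\rank A\ge q$, which together with $\rank A+\rank B=2q$ from \cref{lem:bdalg} and $\rank B\ge\rank B_0=q$ pins down $\rank B=q$, and the idempotent $P=B_0\Xi^{-1}B_0^\dagger\lambda_q$ then shows $\im B_0=\im B$ because the complementary summand is a rank-zero submodule of a torsion-free module. This makes $\Xi$ \emph{literally} the form of \cref{cor:bdalgComm} (an honest congruence, in fact an equality of modules), with no further input. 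The paper argues differently: it completes the right half $Q_0=A_0+zB_0$ to a \emph{new} symplectic matrix $\left(-zB_0\Xi^{-1}+Q_0E \mid Q_0\right)$ whose $z$-coefficient has column span exactly $\im B_0$, so that this auxiliary QCA induces $\Xi$ on the nose, and then invokes the fact that two symplectic matrices with the same right half differ by a Clifford circuit together with \cref{lem:CircuitsGiveTrivialAntihermitianForms}; that route only delivers $\Xi$ up to the stable equivalence of \cref{eq:stableEquivalenceOfXi}, but it needs no rank bookkeeping and doubles as a recipe for reconstructing a left half of the symplectic matrix from the separator alone (the same trick reused in \cref{lem:XitoQCA}). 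Your argument buys a sharper conclusion and is more self-contained; the paper's buys a construction that is reused elsewhere. Both are valid proofs of the stated lemma.
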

The last $q$ columns of a symplectic matrix $Q$ corresponds to a (locally flippable) separator~\cite{nta3}.
This lemma allows to determine an antihermitian form 
without explicitly knowing the first $q$ columns of~$Q$,
beyond their existence.
\begin{proof}
Since $\Xi \Xi^{-1} = I$, we see $B_0^\dagger \lambda (B_0\Xi^{-1}) = I$.
If $A_0$ is the last $q$ columns of $A$ 
(so $Q_0 = A_0 + z B_0$ is the last $q$ columns of $Q$),
we have $(A_0 + z B_0)^\dagger \lambda (-z B_0 \Xi^{-1}) = -I$
since $A^\dagger \lambda B = 0$. Put $\Xi^{-\dagger} = E^\dagger - E$
so that
\begin{align}
\left(\begin{array}{c|c}
-z B_0 \Xi^{-1} + Q_0 E & Q_0 
\end{array}\right)^\dagger \lambda
\left(\begin{array}{c|c}
-z B_0 \Xi^{-1} + Q_0 E & Q_0
\end{array}\right)
= 
\left(\begin{array}{cc}
0 & I \\ -I & 0
\end{array}\right) = \lambda.
\end{align}
Note that neither $\Xi^{-1}$ nor $E$ involves the variable $z$.
The constructed symplectic matrix is thus in the form 
\begin{align}
\left(\begin{array}{c|c}
-z B_0 \Xi^{-1} + Q_0 E & Q_0
\end{array}\right) =
\left(\begin{array}{c|c}
 A_0 E & A_0
\end{array}\right) +
z\left(\begin{array}{c|c}
-B_0 \Xi^{-1} + B_0 E & B_0
\end{array}\right),
\end{align}
where the column span of the coefficient matrix of $z$ is the same as that of $B_0$.
Therefore, $\Xi$ is an antihermitian form induced from this new symplectic matrix.

But all symplectic matrices with the same right (or left) half 
are equivalent up to a product of elementary symplectic matrices~\cite[Lem.IV.10]{nta3}.
\cref{lem:CircuitsGiveTrivialAntihermitianForms} now guarantees that $\Xi$ is what we desire.
\end{proof}

\subsection{Witt group of antihermitian forms}

In this subsection we digress from TI Clifford QCA, but focus on antihermitian forms over $R$.

Recall that the Witt group of a field (of characteristic different from~$2$) 
is the group of all equivalence classes of nondegenerate symmetric bilinear forms~\cite{Lam,Kniga}.
The equivalence relation is established by treating hyperbolic planes as trivial elements.
We can define an analogous group for antihermitian forms over $R = \FF[x_1^\pm,\ldots,x_D^\pm]$.
We declare that invertible antihermitian matrices $\Xi$ and $\Xi'$ of possibly different dimensions
are {\bf equivalent} and write $\Xi \simeq \Xi'$
if there is an invertible matrix $E$ and nonnegative integers $n,n'$ such that
\begin{align}
E^\dagger (\Xi \oplus \lambda_n) E = \Xi' \oplus \lambda_{n'}. \label{eq:stableEquivalenceOfXi}
\end{align}
Then we consider the abelian monoid of all equivalence classes of 
finite dimensional invertible antihermitian matrices (with zero constant terms in the diagonal) over $R$
where the monoid operation is given by the orthogonal (direct) sum.
This monoid is actually a group, that we call {\bf Witt group of antihermitian matrices},
as the following lemma constructs inverses regardless of whether $\FF$ has characteristic~2.
\begin{lemma}\label{lem:inverseXiExists}
Write $\Xi \cong \Xi'$ if $E^\dagger \Xi E = \Xi'$ for some invertible $E$ over $R$.
Let $\Xi = - \Xi^\dagger$ be an invertible antihermitian matrix with no constant term in the diagonal.
Then,
\begin{align}
\begin{pmatrix} \Xi & 0 \\ 0 & -\Xi  \end{pmatrix}
\cong
\begin{pmatrix} \Xi & 0 \\ 0 & \Xi^{-1}  \end{pmatrix} 
\cong 
\lambda.
\end{align}
\end{lemma}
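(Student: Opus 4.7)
The plan is to chain two explicit congruences, each obtained by a compact change of basis. For the first congruence $\diag(\Xi,-\Xi) \cong \diag(\Xi,\Xi^{-1})$, I would just take $E_1 = \diag(I,\Xi^{-1})$. Since $\Xi^\dagger = -\Xi$ forces $\Xi^{-\dagger} = -\Xi^{-1}$, a one-line computation gives $E_1^\dagger \diag(\Xi,-\Xi) E_1 = \diag(\Xi, \Xi^{-\dagger}(-\Xi)\Xi^{-1}) = \diag(\Xi,\Xi^{-1})$.

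For the second congruence $\diag(\Xi,\Xi^{-1}) \cong \lambda$, I would apply two successive unipotent congruences. First, $E_2 = \begin{pmatrix} I & 0 \\ \Xi & I \end{pmatrix}$ (invertible with obvious inverse) produces
\[
E_2^\dagger \diag(\Xi,\Xi^{-1}) E_2 = \begin{pmatrix} 0 & -I \\ I & \Xi^{-1} \end{pmatrix},
\]
where the $(1,1)$-block vanishes because $\Xi^\dagger = -\Xi$. Next, I would apply $E_3 = \begin{pmatrix} I & Z \\ 0 & I \end{pmatrix}$, choosing $Z$ so that $Z - Z^\dagger = -\Xi^{-1}$; this replaces the lower-right entry by $-Z^\dagger + Z + \Xi^{-1} = 0$, leaving $\begin{pmatrix} 0 & -I \\ I & 0 \end{pmatrix} = -\lambda$. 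Finally, $-\lambda \cong \lambda$ via the block swap $\begin{pmatrix} 0 & I \\ I & 0 \end{pmatrix}$.

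The one nontrivial step is solving $Z - Z^\dagger = -\Xi^{-1}$ over $R$. When $\mathrm{char}\,\FF \ne 2$, one simply takes $Z = -\Xi^{-1}/2$. In characteristic two, where halving is unavailable, the equation becomes $Z + Z^\dagger = \Xi^{-1}$. For off-diagonal entries, set $Z_{ij} = (\Xi^{-1})_{ij}$ for $i<j$ and $Z_{ij}=0$ for $i>j$; the antihermicity $\overline{(\Xi^{-1})_{ji}} = (\Xi^{-1})_{ij}$ (which in char~$2$ just says $\Xi^{-1}$ is $\dagger$-fixed) then handles both triangles. The diagonal entries demand that $(\Xi^{-1})_{ii}$ lie in the image of $y \mapsto y + \bar y$, equivalently that its constant Laurent coefficient vanish. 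I would establish this by specialization: since $\det\Xi \in \FF^\times$ (\cref{lem:detInvertibleAntihermitian}), the matrix $\Xi|_{x_1=\cdots=x_D=1}$ is an invertible alternating matrix over $\FF_2$; its inverse is also alternating (classical fact for symplectic forms over a field), so $(\Xi^{-1})_{ii}|_{\vec 1}=0$; and for a $\dagger$-fixed (palindromic) Laurent polynomial over $\FF_2$, evaluation at $\vec 1$ equals precisely the constant coefficient, which must therefore vanish.

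The main obstacle is this last characteristic-two step: the block-matrix manipulations and the first congruence are routine, but verifying that inversion preserves the ``no constant term on the diagonal'' hypothesis requires leaving the purely formal level and invoking the scalar alternating-inverse-is-alternating fact via specialization. Everything else is a bookkeeping exercise in the antihermitian involution.
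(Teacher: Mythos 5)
Your proof is correct, but it routes the second congruence through $\Xi^{-1}$ where the paper routes it through $\Xi$, and this costs you an extra lemma. The paper's choice of basis change is $E=\bigl(\begin{smallmatrix} I & \Xi^{-1}\\ 0 & I\end{smallmatrix}\bigr)\bigl(\begin{smallmatrix} I & 0\\ -M & I\end{smallmatrix}\bigr)$: the first (upper-triangular) shear turns $\diag(\Xi,\Xi^{-1})$ into $\bigl(\begin{smallmatrix}\Xi & I\\ -I & 0\end{smallmatrix}\bigr)$, so the residual block to be cancelled is $\Xi$ itself, and the hypothesis ``no constant term on the diagonal'' hands you $M$ with $\Xi=M-M^\dagger$ directly. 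Your ordering (lower-triangular shear first) instead leaves $\Xi^{-1}$ in the corner, so you must prove that the no-constant-diagonal-term property passes to the inverse --- automatic when $2$ is a unit, but a genuine extra step in characteristic~$2$. Your specialization argument for that step is sound: $\Xi|_{\vec 1}$ is an invertible alternating matrix over $\FF_2$ by \cref{lem:detInvertibleAntihermitian}, its inverse is alternating, and evaluation at $\vec 1$ of a $\dagger$-fixed Laurent polynomial over $\FF_2$ picks out exactly the constant coefficient. So you have in effect proved a small auxiliary fact the paper never needs. Both arguments are a few lines of block algebra; the paper's is shorter only because its factorization consumes the hypothesis where it is stated rather than where it must be re-derived.
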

The no-diagonal-constant-term condition is automatically met whenever $2 = 2^\dagger$ is a unit.
\begin{proof}
The first congruence is by $\Xi$ on the second direct summand.
Since $\Xi$ has no constant diagonal term,
we may find $M$ such that $\Xi = M - M^\dagger$.
Then, $E = \begin{pmatrix}I & \Xi^{-1} \\ 0 & I\end{pmatrix}
\begin{pmatrix} I & 0 \\ -M & I\end{pmatrix}$ is a desired matrix for the second congruence.
\end{proof}

The exponent of TI Clifford QCA is consistent 
with the exponent of the Witt group of antihermitian forms.
\begin{lemma}\label{lem:WittExponentBound}
Let $\FF$ be any field of characteristic not $2$,
and $R \supseteq \FF$ be a ring with an involution that fixes $\FF$ elementwise.
If the Witt group of $\FF$ has exponent $n$,
then the Witt group of invertible antihermitian matrices over $R$ has exponent $n$.
\end{lemma}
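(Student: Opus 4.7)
The plan is to transport a Witt-triviality witness for the symmetric form $n \cdot \langle 1 \rangle = \diag(1,\ldots,1)$ ($n$ copies) over $\FF$ up to one for $\Xi^{\oplus n}$ over $R$, directly generalizing the tricks of \cref{sec:exponents}; the cases $n = 2$ and $n = 4$ there correspond precisely to $\FF_p$ with $p \equiv 1 \pmod 4$ and $p \equiv 3 \pmod 4$.

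First I would extract a trivializing matrix over $\FF$.  Since $W(\FF)$ has exponent $n$, in particular $n \cdot \langle 1\rangle = 0$ in $W(\FF)$.  The class $\langle 1 \rangle$ is itself nontrivial (a $1$-dimensional form is never hyperbolic over a field of characteristic not $2$), so $n \ge 2$, and because hyperbolic forms are even-dimensional, $n$ must be even.  The vanishing of $n \cdot \langle 1 \rangle$ in even dimension is equivalent to $n \cdot \langle 1 \rangle$ being hyperbolic, which provides a matrix $F \in GL_n(\FF)$ with
\begin{align}
F^T F \;=\; H \;:=\; \begin{pmatrix} 0 & I_{n/2} \\ I_{n/2} & 0 \end{pmatrix}.
\end{align}

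Next I would lift $F$ to act on the $R$-space underlying $\Xi^{\oplus n}$, where $\Xi$ is an arbitrary invertible antihermitian $m \times m$ matrix over $R$.  Setting $\widetilde F = F \otimes I_m$ and using that the entries of $F$ lie in the $\dagger$-fixed subring $\FF$, so $\widetilde F^\dagger = F^T \otimes I_m$, a Kronecker-product calculation gives
\begin{align}
\widetilde F^\dagger \, \Xi^{\oplus n} \, \widetilde F \;=\; (F^T F)\otimes \Xi \;=\; H \otimes \Xi \;=\; \begin{pmatrix} 0 & M \\ M & 0 \end{pmatrix}, \qquad M := \Xi^{\oplus n/2}.
\end{align}
Since $M$ is invertible antihermitian, $M^{-\dagger} = -M^{-1}$, and a further congruence by $\diag(I,\, M^{-1})$ collapses this block antidiagonal matrix to $\lambda_{nm/2}$.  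This exhibits $\Xi^{\oplus n} \cong \lambda_{nm/2}$ over $R$, hence $\Xi^{\oplus n} \simeq 0$ in the antihermitian Witt group; since $\Xi$ was arbitrary, the exponent of that group divides $n$.

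The only nontrivial step is the first one: securing a trivializing matrix $F$ whose entries lie in the involution-fixed subring $\FF$.  This is what makes $\widetilde F^\dagger = F^T \otimes I_m$ and lets the Kronecker product collapse cleanly to $(F^T F) \otimes \Xi$, so that a symmetric-form hyperbolicity statement over $\FF$ implies an antihermitian-form hyperbolicity statement over $R$ uniformly in $\Xi$.  The remaining block-matrix manipulation using $M^\dagger = -M$ is routine.
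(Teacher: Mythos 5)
Your proposal is correct and follows essentially the same route as the paper: write the $n$-th Witt power as $I_n\otimes\Xi$, use the hypothesis to replace $I_n$ by the hyperbolic form over the involution-fixed field $\FF$ (forcing $n$ even), and then collapse the resulting block-antidiagonal matrix to $\lambda$ by a congruence using $M^{-\dagger}=-M^{-1}$. The paper performs that last congruence on each $2\times 2$ hyperbolic block $\bigl(\begin{smallmatrix}0&1\\1&0\end{smallmatrix}\bigr)\otimes\Xi$ rather than on $H\otimes\Xi$ all at once, but this is only a cosmetic difference.
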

\begin{proof}
Let $\Xi = - \Xi^\dagger$ be an invertible antihermitian matrix over $R$.
The $n$-th power of $\Xi$ in the Witt group is represented by $I_n \otimes \Xi$
where $\otimes$ is the Kronecker product.
The assumption is that $I_n$ is congruent over $\FF$ to a direct sum of 
$\begin{pmatrix}
0 & 1 \\ 1 & 0
\end{pmatrix}$. Necessarily, $n$ is an even number.
This implies that
\begin{align}
I_n \otimes \Xi
\cong 
\begin{pmatrix}
0 & 1 \\ 1 & 0
\end{pmatrix}^{\oplus (n/2)}
\otimes \Xi.
\end{align}
By observing
\begin{align}
\begin{pmatrix}
I & 0 \\ 0 & \Xi^{- \dagger} 
\end{pmatrix}
\begin{pmatrix}
0 & \Xi \\ -\Xi^\dagger & 0
\end{pmatrix}
\begin{pmatrix}
I & 0 \\ 0& \Xi^{-1}
\end{pmatrix}
=
\begin{pmatrix}
0 & I \\ -I & 0
\end{pmatrix}
\end{align}
the proof is complete.
\end{proof}

\begin{lemma}\label{lem:WittGroupComputationForD01}
The Witt group of antihermitian matrices over any field $\FF$ ($D=0$) is trivial.
The Witt group of antihermitian matrices over $\FF[x^\pm]$ ($D=1$) is trivial if $\FF$ is a finite field.
\end{lemma}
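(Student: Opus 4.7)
The two statements are handled separately.

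For $D=0$, the ring is simply $\FF$ with trivial involution, so the condition ``antihermitian with zero constant terms on the diagonal'' reduces to ``alternating.'' The classical normal-form theorem for nondegenerate alternating bilinear forms over a field (see, e.g., \cite{Lam}) asserts that every such form is congruent to a direct sum of copies of $\lambda_1$. Hence every class in the Witt group is trivial.

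For $D=1$ with $\FF$ finite, my plan is to induct on the rank $2n$ of $\Xi$ over $R = \FF[x^\pm]$ and show $\Xi \cong \lambda_n$ by splitting off one hyperbolic plane $\lambda_1$ at each step. To initiate, take $u := e_1$ and $w := \Xi^{-1} e_1$ (the first column of $\Xi^{-1}$); both are unimodular (the latter as a column of an invertible matrix) and satisfy $u^\dagger \Xi w = e_1^\dagger e_1 = 1$. I will then adjust $u, w$ within their $R$-span to arrange the hyperbolic conditions $u^\dagger \Xi u = w^\dagger \Xi w = 0$. The key algebraic ingredient is that the equation $\mu - \bar\mu = f$ admits an explicit solution in $R$ for every antihermitian scalar $f \in R$: writing $f = \sum_{k \ne 0} f_k x^k$ with $f_{-k} = -f_k$, one sets $\mu := \sum_{k > 0} f_k x^k$.

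Once a hyperbolic pair is in hand, the rank-2 submodule $W := Ru + Rw$ satisfies $\Xi|_W = \lambda_1$, which is invertible, and the $R$-linear map $\phi : R^{2n} \to R^2$ defined by $\phi(z) := (u^\dagger \Xi z,\, w^\dagger \Xi z)$ restricts to an isomorphism on $W$. The short exact sequence $0 \to W^\perp \to R^{2n} \xrightarrow{\phi} R^2 \to 0$ then splits, yielding $R^{2n} = W \oplus W^\perp$ with $W^\perp$ free of rank $2n-2$ (projective modules over the PID $R$ are free by \cref{lem:Quillen-Suslin-Swan}). The restricted form $\Xi|_{W^\perp}$ is again invertible antihermitian with zero diagonal constant terms, so the induction hypothesis gives $\Xi|_{W^\perp} \cong \lambda_{n-1}$, whence $\Xi \cong \lambda_1 \oplus \lambda_{n-1} = \lambda_n$.

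The main obstacle I anticipate is the production of the hyperbolic pair itself. The substitution $w \to w + \mu u$ changes $w^\dagger \Xi w$ by $\bar\mu - \mu$ plus a quadratic correction $\bar\mu \mu \cdot u^\dagger \Xi u$, and analogously $u \to u - \nu w$ carries a quadratic correction involving $w^\dagger \Xi w$. The linear solve for $\mu - \bar\mu = f$ therefore works cleanly only once one diagonal entry already vanishes. I plan to handle this either by ordering the adjustments so that each step is a genuine linear solve (for instance, using a Hadamard-type basis change to zero one diagonal entry first and then adjusting the other) or by stabilizing $\Xi$ with one trivial $\lambda_1$ summand, whose isotropic directions can absorb the quadratic cross-term. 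The finiteness of $\FF$ is not essential for the solvability of $\mu - \bar\mu = f$ and appears in the statement primarily for compatibility with the surrounding setup.
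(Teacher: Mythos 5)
Your $D=0$ argument is fine. For $D=1$, however, the step you flag as ``the main obstacle'' --- producing the hyperbolic pair --- is not a technicality to be deferred; it is the entire content of the lemma, and neither of your proposed fixes closes it. Writing $a = u^\dagger\Xi u$ and $b = w^\dagger\Xi w$, arranging the hyperbolic conditions by substitutions inside $\mathrm{span}_R(u,w)$ is exactly the problem of finding an isotropic unimodular vector for the rank-2 antihermitian form $\begin{pmatrix} a & 1\\ -1 & b\end{pmatrix}$. The substitution $u\mapsto u+\mu w$ changes $a$ by $\mu - \bar\mu + \bar\mu\mu\, b$, and $w \mapsto w + \nu u$ changes $b$ by $\bar\nu-\nu + \bar\nu\nu\, a$, so no ordering of ``linear solves'' avoids the quadratic term unless one diagonal entry already vanishes --- which is precisely what you are trying to arrange; a Hadamard-type change merely recombines $a$ and $b$. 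Stabilizing by $\lambda_1$ does produce an isotropic unimodular vector in $\Xi\oplus\lambda_1$ (take $e_1 + \mu f_1 + f_2$ with $\mu - \bar\mu = a$), but splitting off the resulting hyperbolic plane leaves a form of the \emph{same} rank as $\Xi$, so the induction makes no progress in the Witt group.

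The decisive sanity check is your closing claim that finiteness of $\FF$ is inessential. Your key algebraic ingredient --- solvability of $\mu - \bar\mu = f$ for antihermitian $f$ with zero constant term --- holds verbatim over $\FF_p[x^\pm,y^\pm]$ (pick one monomial from each pair $\{m,\bar m\}$). If your argument were complete, it would therefore also trivialize the forms $\Xi_{p,f}$ of \cref{sec:3DQCA}, and via \cref{lem:trivialXi-gives-trivialQCA} every 3D TI Clifford QCA, contradicting the paper's main nontriviality result. The missing input is an isotropy theorem: the paper's proof passes from $\Xi$ to the hermitian form $(x-\bar x)\Xi$ over the quotient field $\FF(x)$, identifies it with a symmetric form of doubled dimension over $\FF(y)$ with $y = x+\bar x$, and invokes the fact that nondegenerate symmetric forms of dimension $>4$ over the rational function field of a \emph{finite} field are isotropic (with a separate rank-2 argument exploiting $\det\Xi\in(\FF^\times)^2$, and a separate argument in characteristic 2). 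That arithmetic fact is exactly where the hypotheses $D=1$ and $\FF$ finite enter, and it has no analogue in your proposal. The surrounding scaffolding you describe (splitting off $W^\perp$ once a unimodular hyperbolic pair exists, freeness over the PID, induction) is sound and close to the paper's, but it all hinges on the isotropy step you have not supplied.
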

\begin{proof}
The case of $D=0$ is well known; see e.g. \cite[XV.8.1]{Lang}.

Since $\FF[x^\pm]$ is a principal ideal domain, 
it suffices to show that there is a column matrix $v \neq 0$ such that $v^\dagger \Xi v = 0$ (isotropy).
The reason is that if such $v$ exists then we can factor out 
the common divisor of the the entries of $v$
to obtain another isotropic column matrix $v'$,
and by Euclid's algorithm we can find an invertible matrix $E$ whose first column is $v'$.
Then, $E^\dagger \Xi E$ has the top left corner zero, 
and further application of Euclid's algorithm 
focusing on the first row of $E^\dagger \Xi E$ should reveal~$1$ in the $(1,2)$ entry,
yielding a hyperbolic plane, allowing induction in the dimension of $\Xi$.

The isotropy can be tested by taking the quotient field $\FF(x)$:
if $v^\dagger \Xi v = 0$ for some nonzero $v$ with entries in $\FF(x)$,
then $(cv)^\dagger \Xi (cv) = 0$ where $c$ is a common multiple of all denominators of the entries of $v$,
and $cv$ belongs to $\FF[x]$ and is nonzero.
Therefore, it suffices to consider $\Xi$ over $\FF(x)$.

It is not allowed that $\rank \Xi = 1$ 
because the invertibility forces the sole entry be a unit of $\FF[x^\pm]$,
and in turn the antihermicity forces it be in $\FF$ (actually zero if $1 \neq -1 \in \FF$),
but we are assuming that the constant term in the diagonal be zero.

If $\rank \Xi = 2$ and if $\Xi$ is anisotropic, 
then we can diagonalize it as $\Xi \cong \diag(h,1/h)$
with $h = - h^\dagger$, but $v = (1,h)$ satisfies $v^\dagger \diag(h,1/h) v = 0$, which is absurd.
So, any rank~$2$ invertible antihermtian matrix is isotropic.

In general, $\Xi$ is isotropic if and only if $(x-\bar x) \Xi$ is isotropic,
where the latter is a hermitian matrix of nonzero determinant.
Let $\FF(y)$ be the involution-invariant subfield of~$\FF(x)$.
{\it i.e.}, $\FF(y) = \FF(x + \bar x)$  where~$y = x + \bar x$,
so $\FF(x) = \FF(y)[x]/(x^2 - y x + 1)$ is a degree~$2$ field extension over~$\FF(y)$.
Given a hermitian form~$\Phi$ of dimension~$m$,
we can consider
a quadratic function~$\xi : \FF(x)^m = \FF(y)^m \oplus x \FF(y)^m = a + x b \mapsto (a^\dag + \bar x b^\dag) \Phi (a + x b)$
which is valued in the involution-invariant subfield~$\FF(y)$.
So, we have a quadratic form~$\xi$ over~$\FF(y)$ of dimension~$2m$.
Tautologically $\Phi$ is anisotropic if and only if $\xi$ is anisotropic.

For a finite field $\FF$ of odd characteristic,
any nondegenerate symmetric form over $\FF(y)$ of dimension~$>4$ is isotropic~\cite[XI.1.5]{Lam}.
This implies that $\Xi$ of dimension $> 2$ is isotropic.
For $\FF$ of characteristic~2,
one can ``diagonalize''~$\xi$ so that $\xi$ is an orthogonal sum 
of two- and one-dimensional quadratic forms~\cite[7.31]{Kniga}.
--- Think of $\xi$ as a $2m$-by-$2m$ matrix over~$\FF(y)$ up to equivalence $\xi \sim \xi + \phi - \phi^T$.
Any off diagonal element in the matrix gives a two-dimensional subspace that is orthogonal to all the rest 
after a congruence transformation.
Hence, by induction, one is left with two- and one-dimensional subforms. ---
The anisotropy implies that, in particular, 
a set of ``diagonal elements,'' one diagonal element from each two-dimensional block
and one diagonal element from each one-dimensional block, 
must be linearly independent over~$\FF(y^2)$~\cite[Lem.\,36.8]{Kniga}.
Such a set may contain $\tfrac 1 2 (2m)$ elements or more, 
which must be at most the field extension degree~$[\FF(y):\FF(y^2)] = 2$.
Hence, the anisotropy of~$\xi$ implies that~$m \le 2$.
\end{proof}

Note that in usual Witt groups over fields 
one declares that a symmetric form $\Xi$ is equivalent to another $\Xi'$
if either $\Xi$ is congruent to $\Xi' \oplus H'$ or $\Xi'$ is congruent to $\Xi \oplus H$ for some hyperbolic spaces $H, H'$.
We only demand stable equivalence in \cref{eq:stableEquivalenceOfXi}.
We do not know if our equivalence implies the nonstabilized version.
This is because we do not have an analog of the Witt cancellation theorem.

\subsection{Boundary algebra determines QCA}

We now show that the boundary algebra 
contains all the information about the ``bulk'' TI Clifford QCA
modulo Clifford circuits and shifts.
To this end, we first establish the equivalence 
between two notions of equivalence of QCA.
The first notion is what we have defined using (TI Clifford) quantum circuits and shifts.
The second is what follows.

Let us say two QCA $\alpha$ and $\beta$ of finite ranges on a system {\bf blend}~\cite{FreedmanHastings2019QCA},
or $\alpha$ {\bf blends into} $\beta$,
from a subsystem $A$ to another $B$
if there exists an {\bf interpolating} QCA $\gamma$ of finite range
such that $\gamma$ agrees with $\alpha$ on $A$ and with $\beta$ on $B$.
We will only consider cases where the regions $A$ and $B$ are both infinite.

\begin{lemma}\label{lem:gappable-triviality}
Suppose a translation invariant Clifford QCA $\alpha$ of range $r$ 
on $\lat q D$ of prime $p$-dimensional qudits
blends into a shift QCA
from a half space $\{ \vec x : x_D \le - r \}$
to another disjoint half space $\{ \vec x :  x_D > 0\}$
by a Clifford QCA $\gamma$ that is weakly translation invariant 
along all directions orthogonal to the $x_D$-axis.
Then, $\alpha$ is a Clifford circuit that is weakly translation invariant, followed by a shift.
\end{lemma}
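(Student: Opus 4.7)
The plan is to obtain an explicit decomposition of $\alpha$ into a weakly translation invariant Clifford circuit followed by a shift by telescoping $x_D$-translates of the interpolating QCA $\gamma$. As an initial reduction, I absorb the shift $\sigma$ that $\alpha$ blends into: replacing $\alpha$ and $\gamma$ by $\alpha\sigma^{-1}$ and $\gamma\sigma^{-1}$ respectively, which is valid because shifts are trivial in our equivalence, reduces to the case where $\gamma$ agrees with $\alpha$ on $A=\{x_D\le -r\}$ and with $\Id$ on $B=\{x_D>0\}$. Now choose an integer $h$ larger than the combined ranges of $\alpha$ and $\gamma$ and define the $x_D$-translates $\gamma_k=T^{kh}\gamma T^{-kh}$ for $k\in\ZZ$, where $T$ is the unit $x_D$-translation. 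By the $x_D$-translation invariance of $\alpha$ and $\Id$, each $\gamma_k$ agrees with $\alpha$ on $A_k=\{x_D\le -r+kh\}$ and with $\Id$ on $B_k=\{x_D>kh\}$. The consecutive difference $\delta_k:=\gamma_{k+1}\gamma_k^{-1}$ then equals the identity outside the slab $S_k=\{-r+kh<x_D\le(k+1)h\}$, since both factors agree as $\alpha$ on $A_k$ and as $\Id$ on $B_{k+1}$. By the choice of $h$, the slabs $S_k$ are pairwise disjoint, so the $\delta_k$'s mutually commute; moreover $\delta_k=T^{kh}\delta_0 T^{-kh}$.

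The main technical step is to show that $\delta_0$ itself is a weakly TI Clifford circuit, possibly composed with a shift localized to $S_0$; by translation the same conclusion will then hold for every $\delta_k$. Being identity outside a slab of bounded thickness $O(h)$, $\delta_0$ has a symplectic matrix whose $x_D$-exponents lie in a bounded window. Unfolding the $x_D$-dependence along this window yields a symplectic matrix of bounded rank over $\FF_p[x_1^\pm,\ldots,x_{D-1}^\pm]$, to which Suslin's stability theorem together with the elementary symplectic generators of \cref{eq:ElemSymp} applies, producing a factorization into weakly TI Clifford circuit and shift layers confined to $S_0$. This is essentially the assertion that a weakly TI Clifford QCA supported in a finite slab admits a weakly TI circuit decomposition.

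Finally, the telescoping identities $\delta_{N-1}\cdots\delta_0=\gamma_N\gamma_0^{-1}$ and $\delta_{-1}\cdots\delta_{-M}=\gamma_0\gamma_{-M}^{-1}$, together with the observation that $\gamma_N$ agrees with $\alpha$ and $\gamma_{-M}$ with $\Id$ on any prescribed bounded region for $N$ and $M$ sufficiently large (since $A_N$ and $B_{-M}$ eventually contain that region), yield $\prod_{k\in\ZZ}\delta_k=\alpha$ as an identity of QCAs. Because all $\delta_k$ commute and are $x_D$-periodic translates of $\delta_0$, this infinite composite assembles into a single weakly TI Clifford circuit of bounded depth, with period $h$ in $x_D$ and compatible weak TI in the transverse directions, possibly accompanied by an overall shift. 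Reinstating $\sigma$ yields the original $\alpha$ as a weakly TI Clifford circuit followed by a shift, as required. The principal obstacle is the middle step: rigorously establishing the slab decomposition of $\delta_0$, which requires careful identification of the bounded-degree structure of its symplectic matrix and the applicability of Suslin-style factorizations in the mixed TI/non-TI setting.
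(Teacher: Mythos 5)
Your overall architecture --- absorb the shift, form consecutive differences of $x_D$-translates of the interpolating QCA, observe that each difference is a QCA supported on a slab of bounded thickness, and reassemble --- is the same skeleton as the paper's proof. But the middle step, where you claim that the individual slab QCA $\delta_0$ is a weakly TI Clifford circuit plus shift ``by Suslin's stability theorem together with the elementary symplectic generators,'' is a genuine gap, and it is the whole difficulty. A QCA supported on a slab of bounded thickness in $\ZZ^D$ is precisely a TI Clifford QCA on a $(D-1)$-dimensional lattice (with an enlarged unit cell), so your claim amounts to asserting $\mathfrak C(D-1,p)=0$ for every $D$ --- which the paper's own \cref{sec:3DQCA} refutes for $D-1=3$. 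Suslin's theorem factors matrices in $SL_n(R)$ of determinant one into elementary matrices; it says nothing about factoring an element of the symplectic group $Sp^\dagger$ with the nontrivial involution $x_j\mapsto x_j^{-1}$ into the elementary symplectic generators of \cref{eq:ElemSymp}. Indeed \cref{app:forms} stresses that the presence of this involution is exactly what separates the known elementary-generation results from the open problem $Sp^\dagger/ESp^\dagger=\mathfrak C(D,p)$. So the step is not merely lacking rigor; as stated it begs a question strictly harder than the lemma.

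The repair is the one the paper uses, and your setup is already positioned for it: the $\delta_k$ are mutual $x_D$-translates of a single $(D-1)$-dimensional TI Clifford QCA, so the product of four consecutive ones (supported on disjoint slabs) is, up to translation, $\delta_0^{\otimes 4}$, and this \emph{is} trivial by the exponent result of \cref{sec:exponents} ($\alpha^{\otimes 4}\cong\Id$ for any TI Clifford QCA over a prime $p$, in any dimension). Grouping the slabs in fours trivializes the full product, and the leftover (again a disjoint union of translates of one lower-dimensional QCA) is handled the same way. No claim about an individual slab being a circuit is ever needed. Two smaller points: your slabs $S_k=\{-r+kh<x_D\le(k+1)h\}$ have width $h+r$ but are offset by only $h$, so consecutive ones overlap by $r$ and are not pairwise disjoint as claimed; this is fixable by spacing the translates further apart and accepting a remainder on the gaps (as the paper does) or by treating even and odd $k$ separately. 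Also, the reassembly of infinitely many commuting, periodically translated circuits into one weakly TI circuit of bounded depth is fine, but only once disjointness is actually arranged.
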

The converse is true and obvious: 
Given a circuit, we can drop all the gates on a half space 
and the remaining gates define an interpolating QCA.
Ref.~\cite[\S 3.5]{FreedmanHastings2019QCA} shows a similar result for a more general class of QCA.
However, to the present author's understanding, 
the argument there is not applicable to this lemma.
See \cref{app:trivialQCA}.
\begin{proof}
Without loss of generality we assume that the shift QCA $\beta$ into which $\alpha$ blends
is actually the identity QCA;
if $\beta$ is nontrivial, we instead consider $\beta^{-1}\alpha$ in place of $\alpha$
where $\beta^{-1}\alpha$ blends into the identity QCA.

Shifting $\gamma^{-1}$ along $x_D$-axis by distance $10r$ to the negative $x_D$-direction,
and composing the shifted $\gamma^{-1}$ with the unshifted $\gamma$,
we obtain a QCA $\alpha_P$ that interpolates $\Id$ on $\{ \vec x : x_D \le -11r \}$,
$\alpha$ on the quasi-hyperplane $P = \{ \vec x : -10 r \le x_D \le -r \}$,
and $\Id$ on $\{ \vec x : x_D \ge 0\}$,
since $\alpha$ is translation invariant.
Similarly by shifting $\alpha_P$ we have $\alpha_{P_j}$ for any $j \in \ZZ$
that interpolates $\Id$, $\alpha$, and $\Id$,
which is supported on
\begin{align}
P^+_j = \{ \vec x : -11 r + 20r j \le x_D \le 20r j \}.
\end{align}

Since each $\alpha_{P_j}$ is supported on the quasi-hyperplane $P^+_j$,
we may regard $\alpha_P$ as a TI Clifford QCA on $\lat{12rq}{D-1}$.
But we have proved in \cref{sec:exponents} that the tensor product of four 
TI Clifford QCA on a system of prime dimensional qudits
is always a Clifford circuit followed by a shift.
Since the quasi-hyperplanes $P^+_j$ are disjoint,
this means that the product $\alpha_{P_{j}}\alpha_{P_{j+1}}\alpha_{P_{j+2}}\alpha_{P_{j+3}}$
is a Clifford circuit followed by a shift.
Thus, we trivialize the product of all $\alpha_{P_j}$.

Composing this trivializing circuit plus shift with $\alpha$,
we are left with a QCA that is a product of disjoint QCA,
supported on the $2r$-neighborhood of the complement of all $P^+_j$.
The argument of the preceding paragraph trivializes this remainder.
\end{proof}

The blending is happening at the geometric boundary of the lattice,
and hence the possibility of blending between two QCA 
should be regarded as a property of the boundary algebras.
Let us make this intuition more precise.

\begin{lemma}\label{lem:trivialXi-gives-trivialQCA}
Let $\Xi$ be an invertible antihermitian form induced from a TI Clifford QCA $\alpha$.
If $\Xi \cong \lambda$, then $\alpha$ is a weakly TI Clifford circuit followed by a shift.
\end{lemma}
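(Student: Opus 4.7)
The plan is to first reduce to the case where $\alpha$'s boundary antihermitian form is literally $\lambda_{n'}$ via a weakly TI Clifford circuit, then blend $\alpha$ into a shift using the standard conjugate pair structure at the boundary, and finally invoke \cref{lem:gappable-triviality}.

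For the reduction, I unpack $\Xi \simeq \lambda$ to obtain nonnegative integers $n,n'$ and an invertible $E$ over $R$ with $E^\dagger(\Xi \oplus \lambda_n) E = \lambda_{n'}$. The $\lambda_n$ direct summand is realized geometrically by tensoring $\alpha$ with $n$ copies of the single-qudit $z$-shift (symplectic matrix $z I_2$), which is trivial in $\mathfrak C(D,p)$ and contributes a $\lambda_1$ block to the boundary form per copy, since in the notation $Q = A + zB$ of \cref{lem:bdalg} one has $A=0$, $B=I_2$ and $B^\dagger \lambda B = \lambda_1$. After passing to a sufficiently fine translation subgroup so that Suslin stability applies, $E$ factors as a product of elementary symplectic row operations (\cref{eq:ElemSymp}) and hence corresponds to a weakly TI Clifford circuit $\beta$. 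By \cref{lem:CircuitsGiveTrivialAntihermitianForms}, the boundary form of $\alpha\beta$ is congruent to $\lambda_{n'}$ after a further stabilization by some $\lambda_m$, and by refining the translation group once more I may assume the boundary form of $\alpha\beta$ is exactly $\lambda_{n'+m}$. It therefore suffices to show $\alpha\beta$ is trivial.

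For the blending, the exact hyperbolic boundary form implies that the leading-$z$ column module of $\alpha\beta$ splits as an orthogonal sum of $n'+m$ rank-$2$ free symplectic pieces, each spanned by a conjugate pair $(P_i,Q_i)$ with standard pairing $P_i^\dagger \lambda Q_j = \delta_{ij}$ and all other $\lambda$-pairings zero. A further weakly TI Clifford layer localized near $x_D = 0$ rotates each such pair into a pair of single-qudit generators $X_{s_i}, Z_{s_i}$ at distinct boundary sites $s_i$. After this correction, the image under $\alpha\beta$ of the half-space algebra $\{x_D \le 0\}$ exceeds the half-space algebra itself only by this local, trivial Pauli subalgebra at the boundary. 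From this data I can construct an interpolating Clifford QCA $\gamma$ that equals (the corrected) $\alpha\beta$ on $\{x_D \le -L\}$ and a shift on $\{x_D > 0\}$ for some large $L$: the standard conjugate pairs at the boundary supply precisely the generators needed to glue the two descriptions together into a finite-range, weakly TI automorphism across the transition slab.

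\cref{lem:gappable-triviality} applied to $\gamma$ then shows that $\alpha\beta$ is a weakly TI Clifford circuit followed by a shift, and hence so is $\alpha$. The main obstacle will be the explicit construction of $\gamma$ in the second step: the polynomial identity $\Xi = \lambda_{n'+m}$ must be upgraded to a genuinely geometric blending, which requires checking both that the boundary rotation of conjugate pairs to single-qudit Paulis can be realized by a finite-range, weakly TI Clifford layer, and that the truncation-plus-extension across the slab yields a bona fide algebra automorphism rather than a purely local modification. The hyperbolic structure of $\lambda_{n'+m}$ is exactly what makes this local zipping feasible: standard conjugate pairs are the simplest possible boundary data and can be introduced or destroyed by local Clifford layers without disturbing the rest of the lattice.
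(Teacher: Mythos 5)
Your skeleton matches the paper's (realize the hyperbolic boundary form by conjugate pairs, blend $\alpha$ into a shift across the $x_D$-boundary, invoke \cref{lem:gappable-triviality}), but the proof has two problems, one of which is the actual mathematical content of the lemma. First, the reduction step is a detour that does not typecheck: the matrix $E$ in $E^\dagger(\Xi\oplus\lambda_n)E=\lambda_{n'}$ is merely an invertible matrix over $\FF[x_1^\pm,\ldots,x_{D-1}^\pm]$ recording a change of free basis of $\im B$ (cf.\ the discussion after \cref{lem:detInvertibleAntihermitian}); it is not symplectic and does not correspond to a Clifford circuit acting on $\alpha$, so ``factoring $E$ into elementary symplectic row operations from \cref{eq:ElemSymp}'' is not meaningful. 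No circuit is needed here at all: once $\Xi\simeq\lambda$, one simply \emph{chooses} the basis $B_0$ of $\im B$ (via \cref{lem:Quillen-Suslin-Swan}, after absorbing the $\lambda_n$ stabilization into ancillas) so that $B_0^\dagger\lambda_q B_0=\lambda_{n'}$ exactly; the generators then split into standard conjugate pairs $(\tilde X_s,\tilde Z_s)$ for free.

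Second, and decisively, the construction of the interpolating QCA $\gamma$ --- which you yourself flag as ``the main obstacle'' --- is exactly the step the lemma exists to supply, and your proposed route to it is unjustified. You assert that a weakly TI Clifford layer near $x_D=0$ rotates each conjugate pair $(\tilde X_s,\tilde Z_s)$ into single-qudit Paulis at boundary sites. But the existence of such a finite-range automorphism extending $\tilde X_s\mapsto X_{s_i}$, $\tilde Z_s\mapsto Z_{s_i}$ to the whole boundary slab is essentially equivalent to what you are trying to prove, and asserting it is circular. The paper instead defines $\gamma$ \emph{directly} on generators: $\gamma=\alpha$ on the half space $z\ge 0$, $\gamma(X)=\tilde X_s$ and $\gamma(Z)=\tilde Z_s$ for designated qudits at $z=-s$, and a shift on the remaining qudits at $z<0$. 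The nontrivial work is then to verify that this locally defined $*$-monomorphism is actually \emph{surjective}, i.e., a genuine automorphism: given any single-qudit Pauli $T$, one builds a preimage candidate $U$ from the commutation phases of $T$ with the $\gamma(X_j),\gamma(Z_j)$, and uses the nondegeneracy of $\Xi$ (together with $\ker A^\dagger\lambda=\im B$ from \cref{lem:bdalg}) to conclude $\gamma(U)\propto T$. This surjectivity argument is the heart of the proof and is entirely absent from your proposal; without it, ``the truncation-plus-extension yields a bona fide algebra automorphism'' remains an unproven claim rather than a checked fact.
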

\begin{proof}
Without loss of generality, we may assume that the symplectic matrix of $\alpha$
is $Q = A + z B$ where $A,B$ do not involve $z$.
We will construct a blending QCA $\gamma$ that interpolates $\alpha$ on the space with $z \ge 0$ 
and some shift QCA on the space with sufficiently negative $z$;
the blending region is some slab of finite thickness.
Then, the claim will follow from \cref{lem:gappable-triviality}.

Let $\HH = \{ (x_1,x_2,\ldots,x_{D-1},z) \in \ZZ^D : z \ge 0 \}$ be a half space.
Let $\Mat \HH$ be the algebra of all operators with finite support contained in $\HH$.
The image $\alpha(\Mat \HH)$ is also supported on $\HH$ since $Q$ has only nonnegative exponents on $z$.
Since $Q^{-1} = -\lambda Q^\dagger \lambda = -\lambda A^\dagger \lambda - z^{-1} \lambda B^\dagger \lambda$,
any operator whose support has strictly positive $z$ coordinates maps under $\alpha^{-1}$ into $\Mat \HH$;
such an operator cannot be in the commutant of $\alpha(\Mat \HH)$ within $\Mat \HH$.
Hence, the commutant $\calC$ of $\alpha(\Mat \HH)$ within $\Mat \HH$
is supported on the $z=0$ plane, and is equal to the commutant within $z=0$ plane 
of the operators corresponding to columns of $A$.
That is, the commutant $\calC$ corresponds to $\ker (A^\dagger \lambda)$,
which is $\im B$ by \cref{lem:bdalg},
which also says that the smallest nonzero determinantal ideal of $B$ is unit.
Then, \cref{lem:Quillen-Suslin-Swan} implies that $\im B$ has a free basis
which can be written in the columns of a matrix $B_0$,
i.e., $B_0$ has full rank and $\im B = \im B_0$.
The boundary antihermitian form is then $\Xi = B_0^\dagger \lambda_q B_0$.
The assumption $\Xi \cong \lambda$ implies that we may assume that $\Xi = \lambda_n$ for some $n$.
This means that the generating set (over the translation group algebra) of $\im B$ is partitioned
into $n$ pairs $(\tilde X_s,\tilde Z_s)$ , $s = 1,\ldots,n$ 
where each pair obeys the commutation relation of $X,Z$ of a single $p$-dimensional qudit.

We define a blending $\gamma$ as follows.
For $P\in \Mat \HH$, we let $\gamma(P) = \alpha(P)$.
Due to the translation invariance in $x_1,\ldots,x_{D-1}$ directions,
we only have to define $\gamma$ for a fixed $\vec x = (x_1,\ldots,x_{D-1})$.
On the semi-infinite line $\ell(\vec x) = \{(\vec x, z): z < 0 \}$,
we choose a set $S(\vec x) \subset \ell(\vec x)$ of $n$ qudits
on which there are single-qudit Pauli operators $X_s,Z_s$.
We define $\gamma(X_s) = \tilde X_s$ and $\gamma(Z_s) = \tilde Z_s$ for $s = 1,\ldots, n$.
Though the particular choice of $S(\vec x)$ does not matter 
(but we insist on the translation invariance with respect to $\vec x$), 
one may choose those that are as close to $\HH$ as possible.
The qudits in $\ell(\vec x) \setminus S(\vec x)$ are ordered 
(first by $z$ coordinate and then by some fixed ordering within a unit cell),
so the operators on $\ell(\vec x) \setminus S(\vec x)$ are, as part of definition of $\gamma$,
shifted to the operators on $\ell(\vec x)$.
Note that any operator on $\ell(\vec x)$ has the obvious inverse image in this definition of $\gamma$.

Thus constructed $\gamma$ is defined everywhere and agrees with $\alpha$ on $\HH$
and acts as a shift QCA on the space with $z$ coordinates $< -n$.
It is obvious that $\gamma$ is a $*$-monomorphism of the algebra of finitely supported operators and is Clifford.
To show that $\gamma$ is surjective,
we find a preimage for each single-qudit Pauli operator $P$ at $(\vec x, z)$.
If $P$ has $z > 0$, then it has a preimage under $\gamma$
because it does under $\alpha$.
If $P$ has $z < 0$, then $P$ is a shift of some operator with $z < 0$.
Suppose $P$ is on $z=0$ plane, and let $p$ be the Laurent polynomial column matrix of $P$.
From now on until the end of this proof, 
we do not distinguish an operator from the corresponding Laurent polynomial column matrix.
Define $p|_A = -A \lambda A^\dagger \lambda p \in \im A$ and $p|_B = - B \lambda B^\dagger \lambda p \in \im B$.
Expanding $I = Q Q^{-1} = Q (-\lambda A^\dagger \lambda - z^{-1} \lambda B^\dagger \lambda)$
in $z$ we see that $I = -A\lambda A^\dagger \lambda - B \lambda B^\dagger \lambda$.
(We have noted this identity in the proof of \cref{lem:detXi}.)
Hence, $p|_A + p|_B = p$.
Now, $p|_B \in \im B$ certainly has a preimage under $\gamma$ on $\bigcup_{\vec x} S(\vec x)$.
In addition, $p|_A = QQ^{-1} p|_A = Q ( -\lambda A^\dagger \lambda p|_A + z^{-1} \lambda B^\dagger \lambda p|_A) = Q ( -\lambda A^\dagger \lambda p|_A)$
because $B^\dagger \lambda a = 0$ for any $a \in \im A$.
Here, $-\lambda A^\dagger \lambda p|_A$ is a preimage of $p|_A$ under $\gamma$
since it does not involve any $z$,
i.e., the operator is on $z=0$ plane.
Hence, both $p|_A$ and $p|_B$ have preimages under $\gamma$, so does $p$.
\end{proof}

\begin{lemma}\label{lem:inverse-qca-gives-inverse-Xi}
Let $\Xi$ be an invertible antihermitian form 
induced from the symplectic matrix $A + z B$ of a TI Clifford QCA $\alpha$,
and $\Xi'$ be that from $Q^{-1} = -\lambda(A^\dagger + z^{-1} B^\dagger)\lambda$ of $\alpha^{-1}$, after a shift.
Then, $\Xi^{-1} \simeq \Xi'$.
\end{lemma}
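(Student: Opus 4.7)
The plan is to exploit the fact that $\alpha \otimes \alpha^{-1}$ is itself a TI Clifford circuit---this is Arrighi's result cited just after \cref{def:equivQCA}. Under the correspondence between tensor products of QCAs and direct sums of symplectic matrices, a symplectic matrix of $\alpha \otimes \alpha^{-1}$ is $Q \oplus Q^{-1}$, up to the trivial shift needed to bring $Q^{-1}$ into a form with $z$-exponents in $\{0,1\}$. Since \cref{lem:CircuitsGiveTrivialAntihermitianForms} tells us that circuits and shifts yield hyperbolic antihermitian forms, it suffices to argue that the antihermitian form attached to $Q \oplus (zQ^{-1})$ is $\Xi \oplus \Xi'$, and then convert the resulting identity $\Xi \oplus \Xi' \simeq \lambda$ into $\Xi' \simeq \Xi^{-1}$ via \cref{lem:inverseXiExists}.

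Concretely, I would first write out
\begin{align*}
zQ^{-1} = -z\lambda Q^\dagger \lambda / z \cdot z = -\lambda B^\dagger \lambda - z \lambda A^\dagger \lambda,
\end{align*}
so that in the decomposition $zQ^{-1} = A' + zB'$ one reads off $B' = -\lambda A^\dagger \lambda$. This is the matrix whose columns generate the ``$B$-part'' of the inverse QCA, and hence $\Xi'$ is obtained by trimming $B'^\dagger \lambda B'$ to a nondegenerate submatrix via \cref{lem:bdalg,lem:Quillen-Suslin-Swan}. Next I would observe that, in the standard block convention used throughout the paper, the symplectic matrix of $\alpha \otimes (\text{shift}\circ\alpha^{-1})$ has $z$-coefficient $B \oplus B'$; a free basis of $\im(B\oplus B') = \im B \oplus \im B'$ is just the concatenation of free bases of each summand, so the resulting antihermitian form is the orthogonal sum $\Xi \oplus \Xi'$.

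Third, because $\alpha \otimes \alpha^{-1}$ is a weakly TI Clifford circuit followed by a shift, \cref{lem:CircuitsGiveTrivialAntihermitianForms} applied to this QCA yields $\Xi \oplus \Xi' \simeq \lambda_n$ for some $n\ge 0$, i.e.\ $[\Xi] + [\Xi'] = 0$ in the Witt group of antihermitian forms over $R$. But \cref{lem:inverseXiExists} gives $[\Xi] + [-\Xi] = 0$ as well, so $[\Xi'] = [-\Xi] = [\Xi^{-1}]$, which is exactly the claimed stable equivalence $\Xi^{-1} \simeq \Xi'$.

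The step I expect to be the main technical nuisance is verifying that the antihermitian form of a direct sum of symplectic matrices is equivalent to the direct sum of their respective antihermitian forms. One must check that the inherent ambiguity in choosing $B_0$ (congruence by an arbitrary invertible matrix) and the shift used to bring $Q^{-1}$ into the required $z$-exponent range do not affect the Witt class; the first is handled by the congruence ambiguity already built into $\Xi$, and the second is precisely the content of \cref{lem:CircuitsGiveTrivialAntihermitianForms}, so neither should cause genuine trouble---but the bookkeeping is where the proof's careful writing lies.
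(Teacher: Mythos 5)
Your proposal is correct and takes essentially the same route as the paper's proof: both rest on the fact that $\alpha\otimes\alpha^{-1}$ is a weakly TI Clifford circuit, so that \cref{lem:CircuitsGiveTrivialAntihermitianForms} forces $\Xi\oplus\Xi'\simeq\lambda_n$, after which \cref{lem:inverseXiExists} converts this into $\Xi'\simeq\Xi^{-1}$. The paper leaves the direct-sum bookkeeping and the final Witt-group step implicit, which you spell out, but there is no substantive difference in approach.
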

\begin{proof}
We know $\alpha \otimes \alpha^{-1}$ is a weakly TI Clifford QCA;
this follows from the same argument~\cite{Arrighi2007} that 
$\alpha \otimes \alpha^{-1} = (\alpha \otimes \Id) (\SWAP) (\alpha^{-1} \otimes \Id) (\SWAP)$
is a circuit, with the observation that $(\alpha \otimes \Id) (\SWAP_{s,s'}) (\alpha^{-1} \otimes \Id)$
where $\SWAP_{s,s'}$ swaps two sites is a Clifford circuit.
Therefore,
$Q \oplus Q^{-1}$ is a product of elementary symplectic transformations that are weakly TI.
By \cref{lem:CircuitsGiveTrivialAntihermitianForms}
we know $\Xi' \oplus \Xi$ has to be trivial.
\end{proof}

The next lemma shows that every invertible antihermitian form determines a valid boundary algebra.

\begin{lemma}\label{lem:XitoQCA}
For any invertible antihermitian form $\Xi$ over $\FF[x_1^\pm,\ldots,x_{D-1}^\pm]$
there exists a $D$-dimensional TI Clifford QCA $\alpha$ 
that induces $\Xi$ at the positive $x_D$-axis.
\end{lemma}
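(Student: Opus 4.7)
The plan is to exhibit an explicit $2n \times 2n$ symplectic matrix $Q = A + zB$ over $R[z^\pm]$, where $z = x_D$, $R = \FF_p[x_1^\pm, \ldots, x_{D-1}^\pm]$, and $n = \dim \Xi$, whose last $n$ columns of $B$ realize the given $\Xi$ directly. The immediately preceding lemma will then identify $\Xi$ as an antihermitian form of the QCA $\alpha$ associated to $Q$ at the positive $x_D$-axis.

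First, write $\Xi = N - N^\dagger$ for some $n \times n$ matrix $N$ over $R$. This is possible because $\Xi$ is antihermitian with zero constant terms on the diagonal: in odd characteristic take $N = \Xi/2$; in general build $N$ entry by entry (say $N$ strictly upper triangular off the diagonal with $N_{ij} = \Xi_{ij}$ for $i<j$), using the vanishing-constant-term hypothesis to solve $N_{ii} - \bar N_{ii} = \Xi_{ii}$ monomial by monomial. Setting
\begin{align*}
B_0 = \begin{pmatrix} I_n \\ N \end{pmatrix}, \quad A_0 = \begin{pmatrix} I_n \\ N^\dagger \end{pmatrix}, \quad Q_0 = A_0 + z B_0,
\end{align*}
a direct calculation yields $B_0^\dagger \lambda_n B_0 = N - N^\dagger = \Xi$, $A_0^\dagger \lambda_n A_0 = -\Xi$, and $A_0^\dagger \lambda_n B_0 = 0$; hence $Q_0^\dagger \lambda Q_0 = 0$, so the columns of $Q_0$ correspond to a set of pairwise commuting Pauli operators (a separator).

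Next apply the construction from the proof of the immediately preceding lemma verbatim: choose $E$ over $R$ with $E - E^\dagger = \Xi^{-1}$ (e.g.\ $E = \tfrac12 \Xi^{-1}$ when $p$ is odd), and set
\begin{align*}
Q := \bigl( -z B_0 \Xi^{-1} + Q_0 E \ \big|\ Q_0 \bigr).
\end{align*}
Since $\det \Xi \in \FF_p^\times$ by \cref{lem:detInvertibleAntihermitian}, the entries of $\Xi^{-1}$ lie in $R$, so $Q$ has entries in $R[z^\pm]$ with $z$-exponents only in $\{0, 1\}$, i.e., $Q = A + zB$ for some matrices $A, B$ over $R$. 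The preceding lemma's proof already checks that such a $Q$ is symplectic. By construction the last $n$ columns of $B$ form $B_0$ with invertible $B_0^\dagger \lambda B_0 = \Xi$, so the preceding lemma identifies $\Xi$ as an antihermitian form of $\alpha$ at the positive $x_D$-axis, as desired.

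The one minor technical issue is the existence of $N$ and $E$ in characteristic $p = 2$; in each case this reduces to writing an antihermitian element $f = -\bar f \in R$ in the form $g - \bar g$, which is solvable iff $f$ has zero constant term. For $\Xi$ this is built into the hypothesis; for $\Xi^{-1}$ a short direct check (or else replacing $\Xi$ by the Witt-equivalent $\Xi \oplus \lambda$) handles the case.
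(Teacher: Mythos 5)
Your proof is correct, and at its core it is the same construction as the paper's: both reduce to writing $\Xi = N - N^\dagger$ and exhibiting an explicit symplectic matrix $A + zB$ whose $z$-coefficient has image freely spanned by columns with commutation form $\Xi$. In fact your pair $(A_0\,|\,B_0)=\bigl(\begin{smallmatrix} I & I\\ N^\dagger & N\end{smallmatrix}\bigr)$ is exactly an explicit choice of the invertible matrix $E$ satisfying $E^\dagger\lambda E=\diag(-\Xi,\Xi)$ that the paper extracts from \cref{lem:inverseXiExists}; the paper then finishes in one line by noting that $(A_0\,|\,x_D B_0)(A_0\,|\,B_0)^{-1}$ is symplectic, whereas you complete $Q_0=A_0+zB_0$ via the explicit left-half formula from the unnamed lemma at the end of the circuit-invariance subsection. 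Your route is equally valid — your verification of $A_0^\dagger\lambda B_0=0$ and $Q_0^\dagger\lambda Q_0=0$ is all that completion formula needs — but it costs one extra decomposition $E-E^\dagger=\Xi^{-1}$. For odd $p$ this is free; for $p=2$ it needs $\Xi^{-1}$ to also have zero constant diagonal terms. That does hold: $\Xi^{-1}=(\Xi^{-1})^\dagger\,\Xi\,\Xi^{-1}$ is congruent to $\Xi$, and the zero-constant-diagonal property is preserved under congruence because each diagonal entry of $E^\dagger\Xi E$ is a sum of terms of the form $u+\bar u$. However, your parenthetical fallback of passing to $\Xi\oplus\lambda$ does not address the issue, since $(\Xi\oplus\lambda)^{-1}=\Xi^{-1}\oplus\lambda^{-1}$ still carries the diagonal of $\Xi^{-1}$. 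The paper's $GE^{-1}$ completion sidesteps this wrinkle entirely, needing only $\Xi=M-M^\dagger$.
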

\begin{proof}
Let $q$ be the dimension of $\Xi$.
\Cref{lem:inverseXiExists} gives an invertible matrix $E = \begin{pmatrix} A & B \end{pmatrix}$
over $\FF[x_1^\pm,\ldots,x_{D-1}^\pm]$
where $A$ and $B$ are $2q$-by-$q$ submatrices of $E$ consising of first and last $q$ columns, respectively,
such that 
\begin{align}
\begin{pmatrix}-\Xi & 0 \\ 0 & \Xi\end{pmatrix}
= 
\begin{pmatrix} A^\dagger \\ B^\dagger \end{pmatrix}
\begin{pmatrix}0 & I_q \\ - I_q & 0 \end{pmatrix}
\begin{pmatrix} A & B \end{pmatrix}.
\end{align}
Observe that the matrix $G = \begin{pmatrix} A & x_D B \end{pmatrix}$ also satisfies
the same equation with $G$ in place of $E$.
It follows that $G E^{-1}$ is symplectic and defines a desired $D$-dimensional QCA.
\end{proof}

\begin{corollary}\label{cor:Xi-to-QCA}
The stable equivalence class (as defined in \cref{eq:stableEquivalenceOfXi})
of any invertible antihermitian matrix $\Xi$ over $\FF_p[x_1^\pm,\ldots,x_{D-1}^\pm]$
uniquely determines a $D$-dimensional TI Clifford QCA $\alpha_\Xi$
up to the equivalence of \cref{def:equivQCA}
which induces $\Xi$ at the positive $x_D$-axis.
\end{corollary}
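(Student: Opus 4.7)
The plan is to reduce everything to lemmas already in the excerpt. Existence of a QCA $\alpha_\Xi$ inducing $\Xi$ is exactly \cref{lem:XitoQCA}, so the real content is uniqueness: if two TI Clifford QCAs $\alpha,\beta$ both induce antihermitian forms stably equivalent to $\Xi$, then $\alpha \cong \beta$ in the sense of \cref{def:equivQCA}. My strategy is to consider $\alpha \otimes \beta^{-1}$, identify its induced antihermitian form, and invoke \cref{lem:trivialXi-gives-trivialQCA}.

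The first step is to observe that the induced antihermitian form is additive under tensor products. Since the symplectic matrix of $\alpha \otimes \beta$ is block diagonal $Q_\alpha \oplus Q_\beta$, and the boundary algebra is extracted column by column from the $x_D$-coefficient block, a choice of free bases on each factor yields $\Xi_{\alpha \otimes \beta} \cong \Xi_\alpha \oplus \Xi_\beta$. One may need to compose with shifts and pass to a smaller translation group so that the $x_D$-exponents of both $\alpha$ and $\beta$ lie in $\{0,1\}$ simultaneously; the resulting ambiguity is exactly what \cref{lem:CircuitsGiveTrivialAntihermitianForms} shows is harmless at the level of stable classes. Combining this additivity with \cref{lem:inverse-qca-gives-inverse-Xi} gives
\begin{align}
\Xi_{\alpha \otimes \beta^{-1}} \simeq \Xi_\alpha \oplus \Xi_\beta^{-1} \simeq \Xi \oplus \Xi^{-1},
\end{align}
and then \cref{lem:inverseXiExists} yields $\Xi_{\alpha \otimes \beta^{-1}} \simeq \lambda$.

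At this point \cref{lem:trivialXi-gives-trivialQCA} declares that $\alpha \otimes \beta^{-1}$ is a weakly TI Clifford circuit followed by a shift, so $\alpha \cong \beta$. Specializing $\beta = \alpha_{\Xi'}$ for any $\Xi' \simeq \Xi$ shows in particular that the equivalence class of $\alpha_\Xi$ depends only on the stable class of $\Xi$, giving the asserted well-definedness. The main obstacle I anticipate is the bookkeeping around the $\{0,1\}$-exponent restriction and the independent choices of free bases for the $x_D$-coefficient blocks of $\alpha$ and $\beta$: one must verify that these manipulations, carried out separately on each factor, combine coherently so that $\Xi_{\alpha \otimes \beta^{-1}}$ is genuinely stably equivalent to $\Xi_\alpha \oplus \Xi_{\beta^{-1}}$ rather than merely to it after some further unforeseen modification. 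The circuit-invariance statement \cref{lem:CircuitsGiveTrivialAntihermitianForms} is the single tool that absorbs such mismatches by allowing the addition of hyperbolic summands $\lambda_n$.
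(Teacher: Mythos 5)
Your proposal is correct and follows essentially the same route as the paper: the paper's proof also takes $\gamma=\alpha\otimes\beta^{-1}$, invokes \cref{lem:inverse-qca-gives-inverse-Xi} to identify its induced form as $\Xi_1\oplus\Xi_2^{-1}\simeq\lambda$, and concludes via \cref{lem:trivialXi-gives-trivialQCA}. The bookkeeping concerns you flag (additivity under tensor products, exponent normalization, basis choices) are indeed absorbed by \cref{lem:CircuitsGiveTrivialAntihermitianForms} exactly as you suggest; the paper leaves this implicit.
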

\begin{proof}
Let $\alpha$ and $\beta$ be two TI Clifford QCA constructed by \cref{lem:XitoQCA} 
from two invertible antihermitian matrices $\Xi_1$ and $\Xi_2$, respectively,
where $\Xi_1$ and $\Xi_2$ are stably equivalent.
By \cref{lem:inverse-qca-gives-inverse-Xi},
$\gamma = \alpha \otimes \beta^{-1}$ induces $\Xi_1 \oplus \Xi_2^{-1}$,
which is assumed to be trivial in the presence of possibly a shift QCA.
(This additional shift QCA is to supply the stabilization in our notion of equivalence;
see \cref{eq:stableEquivalenceOfXi}.)
By \cref{lem:trivialXi-gives-trivialQCA}, $\gamma$ is trivial.
\end{proof}

\subsection{Triviality of Clifford QCA in 2D}\label{sec:trivial2dqca}

\begin{proof}[Proof of \cref{thm:main}~(i)]
Any antihermitian form induced from a 2D TI Clifford QCA is over a one-variable Laurent polynomial ring
and is invertible (\cref{cor:bdalgComm}).
Such an antihermitian form is always congruent to $\lambda_n$ for some $n\ge 0$ (\cref{lem:WittGroupComputationForD01}).
This means that any 2D TI Clifford QCA always blends into a TI shift QCA 
and thus it is trivial (\cref{lem:trivialXi-gives-trivialQCA}).
\end{proof}

Note that a special property of 2D enters this triviality proof
only through \cref{lem:WittGroupComputationForD01},
which is conceptually saying that it is possible to ``gap out'' 1D boundary of a 2D system.
It is highly nontrivial to ``gap out'' the boundary in general.

\begin{remark}\label{rem:necessityOfStabilization}
We do not know whether one can show the triviality of a TI Clifford QCA in 2D
while preserving both the full translation symmetry of a QCA and the number of qudits per unit cell
(i.e., no translation symmetry breaking and no stabilization with ancillas).
If one insists on the full translation symmetry in every layer of Clifford gates,
then one should find a decomposition of the following symplectic matrix $Q$ of a QCA on $\lat 2 2$ 
into elementary matrices associated with Clifford circuit layers.
\begin{align}
Q = \begin{pmatrix} E & 0 \\ 0 & E^{-\dagger} \end{pmatrix} 
\quad \text{where} \quad
E = \begin{pmatrix} 1 + xy & x^2 \\ -y^2 & 1 - xy \end{pmatrix}.
\end{align}
The matrix $E$ here is the Cohn matrix~\cite[\S 7]{Cohn1966}
that has determinant $1$ but is {\em not} a product of elementary row operations.
This means that the TI Clifford QCA associated with $Q$
cannot be written as a circuit of control-Not gates 
where each layer obeys the full translation structure.
We have avoided this issue by weak translation symmetry breaking.
Indeed, $Q$ is a circuit of control-Not gates 
where at least one layer does not obey the full translation group
but only a subgroup of index~2 by Suslin's stability theorem~\cite[Cor.7.11]{Suslin1977Stability}.

In spirit of Suslin's result,
one may ask whether a 2D TI Clifford QCA with a sufficiently large unit cell
is a Clifford circuit with full translation symmetry in every layer in the circuit.
$\square$
\end{remark}

\section{Clifford QCA in 3D}\label{sec:3DQCA}

In this section, we construct examples of nontrivial TI Clifford QCA in three dimensions.
We will present explicit examples in terms of symplectic matrices,
explain the idea behind the construction,
and then prove that they are nontrivial.
For the nontriviality, it is unfortunate that we cannot rely on our theory of boundary algebras.
The hindering piece is the ambiguity of the antihermitian form under weak translation symmetry breaking. (See the discussion at the end of \cref{sec:constructXi}.)
So, we resort to a surface topological order which is manifestly scale invariant.

To define a TI Clifford QCA on $\lat q D$ 
it suffices to specify a TI commuting Pauli Hamiltonian 
(Pauli stabilizer Hamiltonian) on $\lat q D$
that has unique ground state on a torus of sufficiently large size~\cite[Thm.IV.4]{nta3};
the Hamiltonian terms can be rearranged 
so that they form a locally flippable separator,
and a QCA can be constructed and is unique up to a TI Clifford circuit~\cite[\S IV.B]{nta3}.
The most compact representation of such a Hamiltonian is by the polynomial method~\cite{Haah2013}
which we used in \cref{sec:exponents}.
For example,
the 2D toric code is represented by a stabilizer map
\begin{align}
\sigma_{toric} = 
\left(
\begin{array}{c|c}
 y-1 & 0 \\
 1-x & 0 \\
 \hline
 0 & \frac{1}{x}-1 \\
 0 & \frac{1}{y}-1 \\
\end{array}
\right). \label{eq:toricCode}
\end{align}
Each monomial is a Pauli matrix where the coefficient determines whether it is $X$ or $Z$
and the exponents of variables denote the location.
Compare this equation with \cref{fig:toricCode}.
\begin{figure}[hb]
\includegraphics[width=0.6\textwidth, trim ={0cm 16cm 26cm 0cm},clip]{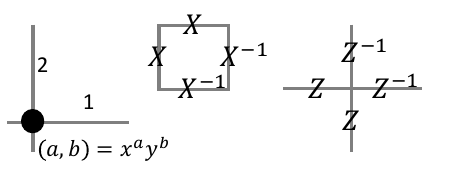}
\caption{
Toric code terms. This is a transcription of \cref{eq:toricCode}.
A similar diagram can be drawn for \cref{eq:expandedSigmaQCA}.
}
\label{fig:toricCode}
\end{figure}

Our Hamiltonian is defined on $\lat 2 3$ of {\em odd} prime $p$-dimensional qudits
and there are exactly two types of terms per site.
The stabilizer map $\sigma_{QCA}$ is given by a $4$-by-$2$ matrix over $R = \FF_p[x^\pm,y^\pm,z^\pm]$
\begin{align}
\sigma_{QCA} &= \left(
\begin{array}{c|c}
 z f+f & 0 \\
 0 & z f+f \\
 \hline
 0 & y z-y \\
 x-x z & 0 \\
\end{array}
\right)
+
\sigma_{toric} 
\left(
\begin{array}{cc}
 x z f+f & -f y-f z \\
 0 & 0 \\
\end{array}
\right)
\label{eq:sigmaQCA} \\
&=
\left(
\begin{array}{c|c}
 f y+f x z y+f z-f x z & -f y^2+f y-f z y+f z \\
 -f z x^2-f x+f z x+f & x y f-y f+x z f+f \\
 \hline
 0 & y z-y \\
 x-x z & 0 \\
\end{array}
\right)\label{eq:expandedSigmaQCA}
\end{align}
where $f \in \FF_p^\times$.
The symplectic matrix associated with the TI Clifford QCA
that creates the ground state of our Hamiltonian from that of $H_0 = - \sum_j Z_j + h.c.$,
is 
\begin{align}
&Q_{p,f} = \label{eq:qca}\\
&{\small
\left(
\begin{array}{c|c|c|c}
 \frac{x y z}{4}-\frac{x z}{4}+\frac{z}{4 y}+\frac{1}{4} & \frac{1}{4} x z y^2+\frac{z y}{4}-\frac{x z y}{4}+\frac{z}{4} & f y+f x z y+f z-f x z & -f y^2+f y-f z y+f z \\
 -\frac{z x^2}{4}+\frac{z x}{4 y}+\frac{x}{4}+\frac{1}{4 y} & -\frac{1}{4} y z x^2+\frac{y z x}{4}+\frac{z x}{4}+\frac{1}{4} & -f z x^2-f x+f z x+f & x y f-y f+x z f+f \\
 \hline
 -\frac{x z y}{4 f}-\frac{y}{4 f}+\frac{x z}{4 f}-\frac{1}{4 f} & \frac{x y z}{4 f}-\frac{y}{4 f x} & 0 & y z-y \\
 \frac{x y}{4 f}-\frac{x z}{4 f y} & -\frac{x z y}{4 f}+\frac{y}{4 f}-\frac{x z}{4 f}-\frac{1}{4 f} & x-x z & 0 \\
\end{array}
\right). 
} \nonumber
\end{align}
This matrix means that e.g. the single qudit operator $X$ at the first qudit at coordinate $(0,0,0)$
is mapped to the product of single qudit operators described by the first column.
Each monomial $h x^a y^b z^c$ represents a factor of either $X^h$ or $Z^h$ depending on whether the monomial 
is in the upper or lower half block, supported on the site of coordinate $(a,b,c)$.
Whether the factor acts on the first or second qudit in that site is determined by the row index
within the upper or lower half block.
The first column of $Q$ represents a Pauli operator of $10$, not $14$, nonidentity tensor factors.
The overall phase factor is undetermined but any choice gives a valid Clifford QCA.
Note that $\sigma_{QCA}$ appears in the right two columns of $Q$.
One may check $Q_{p,f}^\dagger \lambda_2 Q_{p,f} = \lambda_2$ directly
or look up the supplementary material.%
\footnote{
Our exposition here is somewhat redundant in view of our use of algebra above with perhaps terse explanation.
The author hopes that a reader who is not too familiar with the polynomial method
can at least understand what the new QCA is.
}

The symplectic matrix $Q_{p,f}$ is in the form $A + z B$ with $A,B$ not involving $z$.
An antihermitian form is
\begin{align}
\Xi_{p,f} =
\left(
\begin{array}{cc}
 \frac{f}{x}-f x & x f+x y f-y f+f \\
 -\frac{f}{x}+\frac{f}{y}-\frac{f}{x y}-f & f y-\frac{f}{y} \\
\end{array}
\right)
\end{align}
whose determinant is $4f^2$.

We have displayed $\sigma_{QCA}$ as in \cref{eq:sigmaQCA}
because it best represents the construction behind it,
which we now explain.

\subsection{Construction by condensation}

The toric code ($\ZZ_p$ gauge theory in 2+1D)~\cite{Kitaev_2003} with odd prime $p$ dimensional qudits
have anyons whose fusion rules follow the group law of $\ZZ_p \times \ZZ_p$.
(The Hamiltonian is depicted in \cref{fig:toricCode}.)
In fact, the fusion, topological spin, and mutual braiding statistics
can be succinctly captured by a quadratic form 
\begin{align}
\Theta_{TC}(a,b) = ab = 
\begin{pmatrix} a & b \end{pmatrix}
\begin{pmatrix} 0 & \half \\ \half & 0 \end{pmatrix}
\begin{pmatrix} a \\ b \end{pmatrix}
\end{align}
on $\FF_p^2$.
Here the first component $a \in \FF_p$ denotes the charge and $b \in \FF_p$ the flux of an anyon.
The value $e^{2\pi i \Theta_{TC} / p}$ is the topological spin of an anyon labeled by $(a,b)$
and the value of the associated symmetric form
\begin{align}
S((a,b),(a',b')) = \Theta_{TC}(a+a',b+b') - \Theta_{TC}(a,b) - \Theta_{TC}(a',b')
\end{align}
gives the mutual statistics between anyons $(a,b)$ and $(a',b')$.
The fusion rule is inscribed in the $\FF_p$-vector space $\FF_p^2$ on which the forms are defined.

It is standard in Witt theory that any quadratic form can be diagonalized
to a direct (orthogonal) sum of one-dimensional forms.
Indeed, for any $f \in \FF_p^\times$
\begin{align}
\begin{pmatrix}1 & f \\ 1 & -f \end{pmatrix}
\begin{pmatrix} 0 & \half \\ \half & 0 \end{pmatrix}
\begin{pmatrix} 1 & 1 \\ f & -f \end{pmatrix} 
=
\begin{pmatrix} f & 0 \\ 0 & -f \end{pmatrix}. \label{eq:basischange}
\end{align}
This means that the anyon theory of the $\ZZ_p$-toric code can be thought of as
a tensor product of two noninteracting anyon theories%
\footnote{
I thank L. Fidkowski for bringing this decomposition to my attention.
}
that are described by
\begin{align}
\Theta_{p,f}(u) = f u^2 \quad \text{ and }  \quad \Theta_{p,-f}(u) = -f u^2,
\end{align}
respectively.
The basis change in \cref{eq:basischange} means that
the one theory with $\Theta_{p,f}$ consists of {\bf dyons} 
that are multiples of the bound state of the unit charge and $f$ flux.
The scalar $f$ is arbitrary but only meaningful up to squares $(\FF_p^\times)^2$,
since squares in $f$ correspond to isomorphic anyon theories.%
\footnote{
Since $\FF_p^\times / (\FF_p^\times)^2$ consists of two classes,
this suggests that the toric code is decomposed in two different ways.
However, this is true only for $p \equiv 1 \mod 4$.
If $p \equiv 3 \mod 4$, then two different choices of $f$ up to squares are $f = \pm 1$.
The different behavior in the two cases is aligned with the Witt group of $\FF_p$.
We will come back to this point later.
}
The dyon theory of $\Theta_{p,-f}$ is the time reversal conjugate of that of $\Theta_{p,f}$.
Below we overload our notation to denote by the symbol $\Theta_{p,\pm f}$ the corresponding dyon theory.

Now we employ an idea in Ref.~\cite[\S IV]{WangSenthil2013}.
Consider an infinite stack of parallel 2D layers of the toric code embedded in 3-dimensional space.
Each layer is viewed as two noninteracting dyon theories $\Theta_{p,f}$ and $\Theta_{p,-f}$.
Let us bind every dyon of $\Theta_{p,f}$ in one layer with its time reversal conjugate of $\Theta_{p,-f}$
in an adjacent, say, upper layer.
The bound state of the two dyons has zero topological spin (boson),
and thus some external interaction may condense these bosons.
In a microscopic lattice Hamiltonian, 
this condensation is implemented by 
turning on a hopping term for the boson and making the strength of that term large.
More concretely,
there are two operators that moves the boson into horizontal and vertical directions.
Each operator is supported across two adjacent layers,
and is written in polynomials in the two columns of the first matrix of \cref{eq:sigmaQCA}.

Deep in the condensed phase,
the ground state should be an eigenstate of its creation and annihilation operators of nonzero eigenvalue,
and in particular all the hopping operators should also have nonzero expectation value.
Then, it is necessary for the hopping operators on the lattice to commute with each other.
The chosen hopping operators in the first matrix of \cref{eq:sigmaQCA} do not commute,
but some modification makes them do.
The modification is the second term in \cref{eq:sigmaQCA},
which was obtained by solving an inhomogeneous linear equation over $\FF_p$.
We have chosen a particular solution to the inhomogeneous linear equation
in order to have a simple expression as in \cref{eq:expandedSigmaQCA}.

Since every dyon has nontrivial mutual statistics with the boson that is condensed,
no layer will have any anyon left after condensation.
This implies that there will be no topological excitation above the condensate,
and the ground state will be nondegenerate.
Indeed, it turns out that 
the terms written as polynomials in \cref{eq:sigmaQCA} suffice to define a commuting Pauli Hamiltonian
whose ground state is nondegenerate on any torus.
A mathematical criterion for this nondegeneracy was established in \cite[Cor.4.2]{Haah2013},
and we have verified the criterion by a direct computation 
that can be found in the supplementary material.
By \cite[Thm.IV.4]{nta3} we obtain a TI Clifford QCA that is shown in \cref{eq:qca}
which creates the ground state of the condensate from a trivial product state.

Note that this construction prefers a particular direction in space.
It is not obvious whether the final result may show emergent isotropy.

\subsection{Nontriviality}

We are going to prove that the TI Clifford QCA $\alpha(p,f)$ 
defined by the symplectic matrix in \cref{eq:qca} is
not equal to a Clifford circuit followed by a shift.
The overall argument is almost the same as that of \cite[Thm.III.16]{nta3}:
We open up a boundary of the 3D system,
the bulk of which is governed by the Hamiltonian of \cref{eq:sigmaQCA}.
We define a ``surface Hamiltonian'' such that the full Hamiltonian on the half space 
satisfies the local topological order condition~\cite{BravyiHastingsMichalakis2010stability};
in colloquial terms, this means that all {\em local} 
degrees of freedom, be it microscopic or emergent, are gapped out.
Since each $\Theta_{p,f}$ of a layer is strongly coupled to $\Theta_{p,-f}$ in the upper layer,
there should be an unpaired $\Theta_{p,f}$ at the top boundary.

If there is a circuit that equals $\alpha_{p,f}^{-1} \otimes \Id$,
then we can disentangle the entire energy spectrum of the bulk Hamiltonian.
Here $\Id$ acts on some ancilla qudits.
Concretely, if the surface is at $z=0$ and the bulk occupies $z \le 0$,
then we may drop all the gates of the hypothetical circuit which are on the region $z > -w$,
where the constant $w$ should be chosen so large that the surface $z=0$ is not acted on by any nonidentity gates.
The action of the truncated circuit $\beta$ agrees with that of $\alpha_{p,f}^{-1} \otimes \Id$
in the deep bulk region $z < -w - r$ where $r$ is the range of the hypothetical circuit.
Deep in the bulk, the Hamiltonian terms are mapped to single site operators $Z$ under $\beta$,
so we can unambiguously define a strictly two-dimensional system living on a slap of thickness $w+r$
where the surface Hamiltonian is left intact.
Further if the hypothetical circuit consisted of Clifford gates that were weakly TI,
then the resulting two-dimensional system would have a commuting Pauli Hamiltonian that is weakly TI.
However, such a Hamiltonian must have a nontrivial boson~\cite[Cor.III.20]{nta3},
which contradicts the fact that $\Theta_{p,f}$ does not have any boson other than the vacuum.
Therefore, the hypothesis that $\alpha_{p,f}$ is a Clifford circuit of a constant depth is disproved.

We will rigorously check that there exists a certain translation invariant surface Hamiltonian
that makes the full system locally topologically ordered~\cite{BravyiHastingsMichalakis2010stability}.
Moreover we will check that every topological charge is represented by some excitation at the surface.
This ensures that the hypothetical circuit does not alter any topological particle content.
These statements depend on the detail of our Hamiltonian including the surface;
unfortunately, we do not have a general way of constructing a surface Hamiltonian that suits our purpose.

Our construction of the surface Hamiltonian will be valid for any $f \in \FF_p^\times$.
Therefore, we will complete the proof of \cref{thm:main} (ii) and (iii).
Indeed, let $g \in \FF_p$ be not a square of any element of $\FF_p$.
(e.g., if $p = 5$ then $g = 2$ qualifies; if $p=3$ then $g = -1$ qualifies.)
Then, the quadratic form $\Theta_{p,1} \oplus \Theta_{p,-g}$ cannot evaluate to zero 
on any input other than the zero.
This implies that the surface Hamiltonian of 
the juxtaposition of two independent systems on which $\alpha_{p,1} \otimes \alpha_{p,-g}$ acts,
does not have any nontrivial boson.
Therefore, $\alpha_{p,1} \otimes \alpha_{p,-g} \not\cong \Id$.
If $p \equiv 1 \mod 4$, then we know $\mathfrak C (D,p)$ has exponent 2,
and $\mathfrak C(D=3,p \equiv 1 \mod 4) \supseteq \ZZ_2 \times \ZZ_2$,
where $\alpha_{p,1}$ and $\alpha_{p,g}$ generate the two factors of $\ZZ_2$, respectively.%
\footnote{
The toric code on qudits of dimension $p \equiv 1 \mod 4$ is thus decomposed in two inequivalent ways:
$\Theta_{p,1} \oplus \Theta_{p,-1} = \Theta_{TC} = \Theta_{p,g} \oplus \Theta_{p,-g}$.
We have $\Theta_{p,1} \cong \Theta_{p,-1}$ and $\Theta_{p,g} \cong \Theta_{p,-g}$ but $\Theta_{p,1} \not\cong \Theta_{p,g}$.
}
If $p \equiv 3 \mod 4$,
then $g = -1 \in \FF_p$ is not a square of any element of $\FF_p$.
Therefore, $\alpha_{p,1} \otimes \alpha_{p,1} \not \cong \Id$,
proving that $\mathfrak C(D, p)$ contains an element $[\alpha]$ of order different from $2$.
Since every element of $\mathfrak C(D, p)$ has exponent 4,
the order of $[\alpha]$ must be $4$.
Therefore, $\mathfrak C(D=3, p \equiv 3 \mod 4)$ contains $\ZZ_4$ generated by $\alpha_{p,1}$.

\begin{remark}
While we consider Clifford circuits in this paper,
let us remark that the QCA $\alpha_{p,f}$ is likely nontrivial 
even if we allow more general quantum circuits in the notion of trivial QCA.
That is, we conjecture that $\alpha_{p,f}$ 
is {\em not} a quantum circuit of a constant depth,
where the local gates are arbitrary and not necessarily Clifford.
This relies on the beliefs that (i) a commuting Hamiltonian cannot realize a chiral theory in 
$2+1$D~\cite{Kitaev_2005},
and further that (ii) any commuting Hamiltonian can only realize an anyon theory 
that is the quantum double (Drinfeld center) of a fusion category.
It is conjectured~\cite{Walker_pc} that a spatially homogeneous commuting Hamiltonian in 2D
is constant depth quantum circuit equivalent to a Levin-Wen model~\cite{Levin_2005}.

Note that the belief (ii) implies that the group of all QCA
modulo constant depth quantum circuits and shifts contains the infinite direct sum
over all prime numbers
\begin{align}
\ZZ_2 
\oplus 
\left(\bigoplus_{p: p \equiv 1 \bmod 4} (\ZZ_2 \oplus \ZZ_2) \right)
\oplus
\left(\bigoplus_{p : p \equiv 3 \bmod 4} \ZZ_4  \right). 
\end{align}
This is because any finite tensor product of dyon theories over distinct primes 
does not contain any boson.
Indeed, a boson is a nonzero solution to an equation
\begin{align}
q_0 \frac{x^2 + xy + y^2}{2} + \sum_{j=1}^n \frac{q_j(z_j)}{p_j} = 0 \in \mathbb Q / \ZZ
\end{align}
where $p_j$ are distinct odd primes; $q_0$ is either $0$ or $1$;
$q_1,\ldots,q_n$ are anisotropic quadratic forms over $\FF_{p_j}$;
$x,y, z_j$ are unknowns.
All $q_j$ are from the examples of this paper, 
and the first $\ZZ_2$ factor corresponding to the prime $2$,
is from \cite{nta3}.
If we multiply $2 p_1 p_2 \cdots p_n$ to the equation
and take reduction modulo $p_j$ that is one of the participating primes,
then we see that every unknown has to be zero modulo $p_j$.
This means that any boson of any finite tensor product of dyon theories 
over distinct primes is the vacuum (anyon).
$\square$
\end{remark}

\begin{remark}
There is a formula~\cite[Prop.6.20]{FroehlichGabbiani1990braid}\cite[App.E]{Kitaev_2005}
\begin{align}
\frac{1}{\mathcal D}\sum_{a} d_a^2 \theta_a  = e^{2\pi i c_- / 8 }, 
\qquad \mathcal D^2 = \sum_b d_b^2 \label{eq:chirality}
\end{align}
where $d_a$ and $\theta_a$ are the quantum dimension and topological spin of anyon $a$, respectively,
and $c_-$ is the chiral central charge.
This relates the anyon theory of the bulk with a property of the edge theory.
Let us evaluate the left-hand side for dyon theories $\Theta_{p,f}$.
Any dyon is an abelian anyon, so $d_a = 1$.
Put 
\begin{align}
F(p,f) = \frac{1}{\sqrt p}\sum_{k=1}^p e^{2\pi i f k^2 / p}.
\end{align} 
Note that $F(p,f) = F(p,f h^2)$ for any $h \in \FF_p^\times$.
If $g$ is a nonsquare element of $\FF_p^\times$ where $p$ is an odd prime,
the two sets $\{ k^2 : k \in \FF_p^\times \}$ and $\{ g k^2 : k \in \FF_p^\times \}$
of the same cardinality partition $\FF_p^\times$. 
Hence,
\begin{align}
F(p,1) + F(p,g) &\propto 2 + \sum_{m=1}^{p-1} \left(e^{2\pi i m^2 / p} + e^{2\pi i g m^2/p}\right)
 = 2 + 2\sum_{m=1}^{p-1} e^{2\pi i m / p} = 0,\\
F(p,1)F(p,1)^* &= \frac 1 p \sum_{a,b=1}^p e^{2\pi i (a^2 - b^2)/p} 
= \frac 1 p \sum_{a,b=1}^p e^{2\pi i ab/p} = 1
\end{align}
where in the second line we use the basis change of \cref{eq:basischange}.
It follows that,
\begin{align}
F(p,1)^* = F(p,-1) = F(p,1) = \pm 1 \text{ if }p \equiv 1 \mod 4 \text{ so that } -1 \in (\FF_p^\times)^2,\\
F(p,1)^* = F(p,-1) = - F(p,1) = \pm i \text{ if } p \equiv 3 \mod 4 \text{ so that } -1 \notin (\FF_p^\times)^2.
\end{align}
Indeed, $F(5,1) = 1$ and one cannot argue that $\alpha_{p=5,f=1}$ is nontrivial
by resorting to \cref{eq:chirality} only.
$\square$
\end{remark}

\subsection{Surface topological order}

Now we have to show that there exists a surface Hamiltonian 
with local topological order~\cite{BravyiHastingsMichalakis2010stability}
and examine the topological particle content to see that the surface Hamiltonian realizes $\Theta_{p,f}$.

The system with the boundary is the lattice $\{ (x,y,z) \in \ZZ^3 : z \le 0 \}$ with two qudits per site.
The (top) boundary is at $z = 0$.
The bulk Hamiltonian is the sum of all terms defined by \cref{eq:sigmaQCA} 
which are supported on the half space $z \le 0$.
We introduce surface terms supported on the boundary $z=0$ by
\begin{align}
\sigma_{bd} =
\left(
\begin{array}{c}
 f x (y-1) y \\
 -f (x-1) x y \\
 (x-1) y \\
 x (y-1) \\
\end{array}
\right).
\end{align}

Our first goal is to show that any finitely supported Pauli operator $O$ 
that commutes with all bulk and all surface terms,
is a product of bulk and surface terms supported within a constant neighborhood of $O$.
To this end,
we observe that the QCA $\alpha_{p,f}$ defined in \cref{eq:qca} maps $X_j,Z_j$ at $z=z_0$ plane
to operators supported on just two planes $z=z_0$ and $z=z_0+1$.
Pretend that $O$ was an operator in $\lat 2 3$ without boundary and look at $\alpha^{-1}_{p,f}(O)$.
Since $\alpha^{-1}_{p,f}(O)$ commutes with every $Z_j$ on a site of $z$-coordinate $\le -1$,
the tensor factors in the half space $z \le -1$ is all $Z$,
and thus can be removed by multiplying by $Z$ there.
So we are reduced to the situation where $\alpha^{-1}_{p,f}(O)$ is supported on $z=0$ plane,
which is equivalent to the situation where $O$ is supported on the top boundary, the $z=0$ plane.
Now, $O$ commutes with the surface terms as well as with the parts $B^{top}$ 
of the bulk terms on the top boundary,
which are represented by polynomials as
\begin{align}
B^{top}=
\text{coefficient of }z\text{ of }\sigma_{QCA} =
\left(
\begin{array}{c|c}
 -x f+x y f+f & f-f y \\
 f x-f x^2 & f x \\
 \hline
 0 & y \\
 -x & 0 \\
\end{array}
\right).
\end{align}
The commutation assumption means that $O$ as a polynomial column matrix
belongs to 
\begin{align}
\ker\left( \begin{pmatrix} B^{top} & \sigma_{bd} \end{pmatrix}^\dagger\lambda_2 \right)
\end{align}
where $\lambda_2$ is the $4$-by-$4$ unit symplectic matrix.
$\begin{pmatrix} B^{top} & \sigma_{bd} \end{pmatrix}$ is a 4-by-3 matrix.
We have used the Buchsbaum-Eisenbud criterion~\cite{BuchsbaumEisenbud1973Exact}
(see the supplementary material)
to check that this kernel is equal to $\im \sigma_{bd}$,
the module that represents the set of all finite products of the surface terms.
Since a membership of a module can be explicitly tested by a division algorithm with a Gr\"obner basis,
any Pauli operator of $\im \sigma_{bd}$ is a product of the generators (the surface terms)
within a constant neighborhood (determined by the degree of the Gr\"obner basis) of the support of the operator.
This is what we have promised to show.
The argument is similar to that presented in \cite[\S III.A.2]{nta3},
but the calculation is more systematic here.

The technical reason we needed to show the local topological order condition
is to apply \cite[Cor.III.20]{nta3} to the two-dimensional system
that would be obtained by a hypothetical circuit matching $\alpha_{p,f}$.
The cited result asserts the existence of a nontrivial boson
if the two-dimensional commuting Pauli Hamiltonian has local topological order
and contains some nontrivial topological charge.

It remains to examine the topological particle content of the system with the top boundary.
By definition, a {\bf topological excitation} for our purpose 
is an equivalence class of finite energy excitations (flipped terms of the Hamiltonian)
modulo locally created ones.
Since we mod out locally created excitations, we may assume that any flipped term
is a surface term;
for any bulk term in our system with the top boundary, the matching element of the flipper 
(the first and second columns of \cref{eq:qca})
is supported in our system with the top boundary,
and hence any flipped bulk term can be eliminated by a local operator.
A pair of surface terms that are flipped can be consolidated (fused)
to a single surface term by the following hopping operators along $x$- and $y$-directions,
\begin{align}
h_x &= \frac{xy}{2} \left(
\begin{array}{c}
 y \\
 1-x \\
 0 \\
 x/f \\
\end{array}
\right),
&h_y &= \frac{xy}{2}\left(
\begin{array}{c}
 y-y^2 \\
 x y-y+1 \\
 -y/f \\
 0 \\
\end{array}
\right), \label{eq:hopping}
\\
\sigma_{bd}^\dagger\lambda_2 h_x &= x-1, 
&\sigma_{bd}^\dagger\lambda_2 h_y &= y-1,
\end{align}
which commute with all the bulk terms
\begin{align}
(B^{top})^\dagger \lambda_2 h_x &= 0,
&(B^{top})^\dagger \lambda_2 h_y &= 0. 
\end{align}
Therefore, a single surface term represents all possible topological charges by its eigenvalue
on an excited state.
\begin{remark}
In fact, the hopping operators generate the commutant $\mathcal A$ of the algebra $\mathcal B$
of all elements of the separator and of the local flipper which are supported on the half space $z \le 0$.
Since each element of the local flipper 
(the first two columns of $Q_{p,f}$ in \cref{eq:qca}) occupies just two layers,
the commutant $\mathcal A$ must be supported on the plane $z = 0$.
By \cref{lem:bdalg}, we know $\mathcal A$ is generated by operators of the columns of $A$ 
in $Q_{p,f} = A + z B$ where $A,B$ do not involve $z$.
The claim that $\mathcal A$ is also generated by the hopping operators,
can be checked either by inspection or by the Buchbaum-Eisenbud criterion
that shows
\begin{align}
\ker \left( \text{coefficient of }z\text{ in } Q_{p,f} ^\dagger \lambda_2 \right) 
= \im \begin{pmatrix} h_x & h_y \end{pmatrix}.
\end{align}
$\square$
\end{remark}

We define the topological spin $\theta_a$ of a topological charge $a$ by
the commutation relation between the hopping operators~\cite{LevinWen2003Fermions}:
\begin{align}
t_1 t_2^\dagger t_3 = \theta_a t_3 t_2^\dagger t_1
\end{align}
where $t_j$ are sufficiently long hopping operators (string operators) 
that move an excitation from a common point (the origin) to locations $p_1,p_2,p_3$
which wind around the common point counterclockwise.
The precise locations of $p_j$ do not affect $\theta_a$.
We choose 
$p_1$ to be a point on the positive $x$-axis,
$p_2$ to be a point on the positive $y$-axis,
and
$p_3$ to be a point on the negative $x$-axis.
In terms of polynomials, the long hopping operators are
conveniently expressed as e.g. $t_1 = (1+x+x^2+ \cdots + x^n)h_x$.
Using these long hopping operators,
we have confirmed that $\theta_a = e^{2\pi i m/p}$ where $m = 1/(4f) \in \FF_p$, independent of $n \ge 1$.
See the supplementary material.
This implies that the system with the boundary has anyons that are captured by a quadratic form $\Theta_{p,1/(4f)}$,
which is congruent to $\Theta_{p,f}$ since the ratio $4f^2$ is a square.
This completes the examination of the surface topological order,
and \cref{thm:main} (ii) and (iii) are now proved.

\section{Discussion}\label{sec:discussion}

We have shown that translation invariant Clifford QCA on systems of prime dimensional qudits 
can be understood via ``small'' groups $\mathfrak C(D,p)$.
Modulo Clifford circuits and shifts, they are trivial in zero, one, and two dimensions ($D=0,1,2$).
In general, at most the fourth power of any TI Clifford QCA is trivial.
The proven exponents $2$ and $4$ cannot be reduced
since we have constructed examples in three dimensions which have true order $2$ and $4$.

However, we do not know how many nontrivial QCA are there in three or higher dimensions.
We expect that the topological order at the boundary of the separator of a Clifford QCA
is simple enough that we can classify them all,
but currently the problem to determine $\mathfrak C(D \ge 3, p)$ remains open.
It is conceivable that the two dimensional surface topological order should be one of $\Theta_{p,f}$,
which would imply
\begin{conjecture}\label{conjecture}
Let $p$ be a prime. Then, $\mathfrak C(D=3,p=2) = \ZZ_2$, 
$\mathfrak C(D=3,p \equiv 1 \mod 4) = \ZZ_2 \times \ZZ_2$,
and
$\mathfrak C(D=3,p \equiv 3 \mod 4) = \ZZ_4$.
\end{conjecture}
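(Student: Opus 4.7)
The plan is to promote the boundary antihermitian form into a complete invariant valued in the classical Witt group $W(\FF_p)$ of the ground field. Since \cref{thm:main} already establishes the correct exponents ($2$ for $p \equiv 1 \bmod 4$ and for $p=2$; $4$ for $p \equiv 3 \bmod 4$) and the examples in \cref{sec:3DQCA} (together with \cite{nta3} for $p=2$) supply lower bounds that already match $W(\FF_p)$, the remaining task is to construct an \emph{injective} group homomorphism $\mathfrak C(D=3,p) \to W(\FF_p)$ taking the explicit examples to the known generators.

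The first step is to upgrade $\Xi \in \mathcal W_{ah}(R)$ with $R = \FF_p[x^\pm,y^\pm]$ from the ``partial'' invariant it currently is (see the paragraph after \cref{cor:bdalgComm}) into one that is independent of the choice of translation subgroup and of boundary direction. Invariance under refinement of the $z$-translation group is already implicit in the proof of \cref{lem:CircuitsGiveTrivialAntihermitianForms}; I would extend the same $\phi_n$-matrix device to refinements along $x$ and $y$. Invariance under changing the boundary direction is more delicate: I would realize a $90^\circ$ reorientation of the boundary normal as an interpolating Clifford QCA built from the associated separator (analogous to the blending $\gamma$ in the proof of \cref{lem:trivialXi-gives-trivialQCA}), and apply a slab version of \cref{lem:gappable-triviality} together with \cref{lem:CircuitsGiveTrivialAntihermitianForms} to conclude that the two induced antihermitian forms are stably congruent.

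The second step is the purely algebraic computation $\mathcal W_{ah}(R) \cong W(\FF_p)$. In analogy with the Karoubi--Ranicki fundamental theorem in L-theory, one expects a decomposition whose summands beyond the coefficient piece $W(\FF_p)$ are Witt groups over $\FF_p[x^\pm]$ and $\FF_p$ with appropriate twisted involutions; these should all vanish by isotropy arguments analogous to \cref{lem:WittGroupComputationForD01}. Combining this with \cref{cor:Xi-to-QCA} and \cref{lem:trivialXi-gives-trivialQCA} would then give $\mathfrak C(D=3,p) \cong \mathcal W_{ah}(R) \cong W(\FF_p)$ for odd $p$; the $p=2$ case would be handled in parallel after replacing the antihermitian/hermitian dichotomy with the characteristic-$2$ formalism used in \cite{nta3}, where $W(\FF_2) \cong \ZZ_2$ is generated by the unit form.

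The principal obstacle is the boundary-direction invariance in the first step. The paper explicitly flags that it has no control over changing the axis of the boundary, and every classification invariant available so far (including the surface topological order $\Theta_{p,f}$) has been extracted only from specific Hamiltonians with hand-crafted surface terms, as in \cref{sec:3DQCA}. Bypassing this will likely require a new structural result: that \emph{any} Clifford separator in three dimensions admits a locally topologically ordered boundary along any chosen axis, whose anyon theory is described by a quadratic form over $\FF_p$ whose Witt class is direction-independent. Without such a theorem, the best one can currently hope for is the weaker statement that $\mathfrak C(D=3,p)$ surjects onto $W(\FF_p)$, which the results of the paper already imply.
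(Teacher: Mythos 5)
The statement you are trying to prove is stated in the paper as \cref{conjecture} and is explicitly left \emph{open}: the paper proves only the containments $\supseteq$ (via \cref{thm:main} and \cref{sec:3DQCA}) together with the exponent bounds, and \cref{sec:discussion} names the missing ingredients. Your proposal is therefore not a proof but a program, and to your credit you say so in your last paragraph. The two steps you outline are exactly the open points. Step one --- promoting the boundary antihermitian form to an invariant of the equivalence class --- founders on precisely the ambiguity the paper flags after \cref{cor:bdalgComm} and in \cref{app:forms}: \cref{lem:CircuitsGiveTrivialAntihermitianForms} controls refinement along the $z$-axis and composition with circuits and shifts, but refinement of the translation group \emph{transverse} to the boundary changes $\Xi$ by a nontrivial base-ring endomorphism $\FF_p[x^{\pm m},y^{\pm m}] \hookrightarrow \FF_p[x^\pm,y^\pm]$ whose effect on the Witt group is not understood, and changing the boundary axis has no analogue of \cref{lem:gappable-triviality} available (your proposed ``slab version'' for a $90^\circ$ reorientation is not something the paper's blending technology provides, since \cref{lem:gappable-triviality} requires the interpolating QCA to be weakly TI transverse to a \emph{fixed} axis). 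Step two --- the computation $\mathcal{W}_{ah}(\FF_p[x^\pm,y^\pm]) \cong W(\FF_p)$ --- is also not done anywhere: \cref{lem:WittGroupComputationForD01} handles only zero and one variables, and the Karoubi--Ranicki splitting you invoke is for Laurent extensions with the \emph{trivial} involution on the new variable, whereas here the involution inverts $x$ and $y$; \cref{app:forms} explicitly warns that this nontrivial involution ``makes a substantial difference.''

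Two smaller corrections. First, your closing claim that the paper's results ``already imply'' a surjection $\mathfrak C(D=3,p)\twoheadrightarrow W(\FF_p)$ is only true in the vacuous, abstract sense that a subgroup of an abelian group of exponent $2$ or $4$ isomorphic to $\ZZ_2\times\ZZ_2$ or $\ZZ_4$ is a direct summand; what the conjecture needs is a surjection realized by the boundary invariant, which is exactly what is missing. Note that the opposite direction \emph{is} available: \cref{lem:XitoQCA,cor:Xi-to-QCA,lem:inverse-qca-gives-inverse-Xi,lem:trivialXi-gives-trivialQCA} give a well-defined surjective homomorphism from the (direct limit of) Witt groups of antihermitian forms \emph{onto} $\mathfrak C(D=3,p)$, so the open question is its injectivity, i.e.\ the well-definedness of the inverse. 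Second, the $p=2$ case cannot be ``handled in parallel'' as an afterthought: the antihermitian-form machinery carries a no-constant-diagonal caveat in characteristic $2$, the relevant quadratic refinement is different, and the lower bound $\mathfrak C(D=3,2)\supseteq\ZZ_2$ comes from \cite{nta3}, not from the constructions of \cref{sec:3DQCA}, which require $p$ odd.
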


We have found boundary algebras useful to show triviality of QCA.
It has become apparent that Witt groups of antihermitian forms is an almost identical subject.
The only piece that stays in the way for the complete equivalence is the notion of ``coarse-graining.''
When we take a smaller subgroup of finite index of the translation group,
the antihermitian form is transformed by a certain injective endomorphism of the base ring.
We do not yet understand how this transformation behaves.
It is plausible that the coarse-graining induces nontrivial endomorphisms on the Witt group.
These endomorphisms constitute a directed system over a single object, the Witt group.
In view of \cref{cor:Xi-to-QCA}, the direct limit is our $\mathfrak C(D,p)$.
This remark goes side-by-side with \cref{app:forms}.

An analogous problem on more general QCA, not necessarily prime qudit Clifford, 
is of course an interesting problem.
Although we have proved that 3D QCA $\alpha_{p,f}$ are nontrivial as Clifford QCA,
the proof is rather improvised.
We introduced the boundary and defined some two-dimensional theory,
which is only physically well motivated but {\em not} necessarily derived.
Only the hopping operators in \cref{eq:hopping} {\em are} derived,
since they are just generators of the boundary algebra.
A proper solution to the classification problem of QCA should include a definition of a better invariant,
which is likely to be in terms of this boundary algebra~\cite{GNVW,FreedmanHastings2019QCA}.
Perhaps, we should phrase the appearance of $\Theta_{p,f}$
as a feature of the boundary algebra that
it contains a locally generated maximal commutative subalgebra 
whose spectrum is not realizable in a disentangled qudit algebra on a plane,
constructed from a direct limit of local matrix algebras.

Our study of antihermitian forms associated with TI Clifford QCA
makes us wonder about an \emph{intrinsic} definition of boundary algebras.
What are analogous conditions for more general von Neumann algebras of quasi-local operators?
Is the notion of visibly simple algebras~\cite{FreedmanHastings2019QCA} enough?
These questions invite a theory of Brauer groups of subalgebras of quasi-local operator algebras.
Trivial elements of this Brauer group should correspond to disentangled algebras.
Such a Brauer group should be at least as rich as the Witt group of finite dimensional antihermitian forms.
Will that group on 2D lattices
match the Witt group of braided fusion categories~\cite{Davydov2010,Davydov2011}?

\appendix

\section{Every QCA in two dimensions is trivial}\label{app:trivialQCA}

In this section, we consider QCA that is not necessarily Clifford or translation invariant.
Recall that a QCA is an automorphism of $*$-algebra of operators on a system of qudits
such that every single qudit operator at $s$ is mapped to an operator supported on an $r$-neighborhood of $s$.
The constant $r$ is called the range and is independent of the system size.
In the main text we considered translation invariant Clifford QCA on infinite lattices directly,
as we did not encounter any subtleties involving infinities.
Actually it was easier for us not to discuss boundary conditions,
which would have complicated the notation without giving any further insight or rigor.
In this section, we resort to (a sequence of) finite systems, 
but are concerned with scaling of the depth of the quantum circuit
(a product of layers of nonoverlapping gates that act on a pair of adjacent qudits)
as a function of the number of qudits or the diameter of the system.
Specifically, we consider a lattice of finitely many qudits on a metric space 
that is a {\em two}-dimensional manifold.
We require that each qudit have a prime dimension upper bounded by a constant,
and there be only a constant number of qudits per unit area.
Here and below, 
whenever we refer to a constant we mean a number that is independent of the system size or the number of qudits in the system.
The assumption of qudit having a prime dimension, which is not necessarily uniform across the lattice,
is just a technical convenience; $\CC^{pq} = \CC^p \otimes \CC^q$ for any positive $p,q \in \ZZ$.
All quantum circuits below are of constant depth.
Every algebra we consider is a $\dagger$-closed finite dimensional complex algebra.

Ref.~\cite{FreedmanHastings2019QCA} shows that any QCA $\alpha$ on 2D is trivial.
The key point in their argument is to construct a QCA $\gamma$ (blending)
that interpolates $\alpha$ on a disk and the identity outside the disk.
The interpolation was made possible by introducing ancillas on the boundary of the disk.
The dimension of the ancilla per unit area is exponential in the diameter of the disk on which $\gamma$ agrees with $\alpha$.
Given this blending, they trivialize the action of $\gamma$ on the disk 
by a constant depth quantum circuit, using another batch of ancillas whose local dimension is exponential in the volume of the disk.
The exponential ancillas are unavoidable in a blending between an arbitrary QCA and the identity;
if $\alpha$ is a parallel shift of all the qudits to the right,
then somewhere near the disk the full flux has to reside.
After all the liberal use of ancillas,
\cite[Thm.3.9]{FreedmanHastings2019QCA} gives an argument for the triviality of 2D QCA which removes the need for large ancillas
by using the blending only for small disks.

Here we revisit the argument for the blending of an arbitrary 2D QCA $\alpha$ into a trivial QCA.
We show that {\em $\alpha$ followed by a shift QCA blends into the identity}.
This simplifies the triviality proof for 2D QCA 
with an extra strength that it requires {\em no} ancillas manifestly.
\begin{theorem}\label{thm:trivial2D}
There is a constant $c \ge 1$ and a real function $d=d(r)$ such that 
for any QCA $\alpha$ of range $r$ on a two-dimensional compact manifold of diameter $L$ with or without boundary 
there exists a shift QCA $\gamma$ of range $\le c r$ such that $\alpha \gamma$ is a quantum circuit (without any ancillas) of depth at most $d(r)$.
\end{theorem}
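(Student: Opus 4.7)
The plan is to revisit the blending strategy of Freedman and Hastings, replacing their use of ancillas on an interpolating curve by a genuine shift QCA on the existing lattice. Writing $\alpha\gamma$ for the composite, I would construct $\gamma$ so that $\alpha\gamma$ admits an interpolating QCA that agrees with $\alpha\gamma$ outside a tubular neighborhood of a 1D curve and with the identity inside it; once such blending is available across a cellular decomposition of the manifold, a standard patching assembles the interpolations into a single constant-depth circuit.

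First, I would fix a cellular mesh on the manifold at scale $\sim r$, giving a finite cell complex with zero-, one-, and two-cells. Along each 1-cell $e$, the restriction of $\alpha$ to operators near $e$ induces a 1D QCA $\alpha_e$ on a quasi-1D algebra supported in a width-$O(r)$ collar of $e$, defined via the commutant relations between the algebras on the two sides of $e$, in the spirit of \cref{sec:bdalg} and \cite{FreedmanHastings2019QCA}. By the GNVW classification, $\alpha_e$ equals a 1D shift of some rational index $n_e$ followed by a constant-depth circuit. This $n_e$ is the local flux of $\alpha$ across $e$.

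Second, and at the heart of the argument, I would build a global shift QCA $\gamma$ on the 2D lattice whose local flux across each $e$ equals $-n_e$. This is done one 1-cell at a time, by shifting qudits along a direction transverse to $e$; on overlapping neighborhoods of 0-cells the shifts must merge consistently, and consistency is guaranteed because the collection $\{n_e\}$ satisfies a closure condition inherited from the fact that $\alpha$ is a genuine algebra automorphism (the signed flux into any 2-cell must vanish, since the algebras on the 2-cell are preserved globally). Because the manifold is two-dimensional, this closure condition is exactly the cocycle obstruction to realizing $\{n_e\}$ as the local-flux pattern of a genuine translation QCA on the lattice, and I expect one can solve it without introducing ancillas by composing elementary transversal shifts along each 1-cell.

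Third, with $\alpha\gamma$ now having trivial flux across every 1-cell, each induced 1D boundary QCA is a pure circuit. The Freedman-Hastings blending then goes through without ancillas, because the two-sided commutant algebras at every 1-cell are already isomorphic to subalgebras of the existing local qudit algebras; the interpolating QCA can therefore be written as a unitary supported in the collar of $e$, on the existing qudits. Iterating over all 1-cells and 2-cells, whose number and overlap pattern depend only on the combinatorics of the mesh and not on $L$, produces a quantum circuit of depth bounded by a function $d(r)$ only. The main obstacle I anticipate is the second step: showing that a shift $\gamma$ with prescribed local fluxes exists as a genuine lattice shift of range $\le c r$ without enlarging the local Hilbert space. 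Resolving it will require a careful case analysis of how two 1-cell shifts join at a 0-cell, leveraging that each qudit dimension is prime and that shift QCAs on coprime qudit subsystems can be arranged independently.
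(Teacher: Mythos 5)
Your high-level strategy (cancel the ``shift part'' of $\alpha$ and then argue the remainder is a circuit) matches the statement of the theorem, but your route through local fluxes leaves the two hardest steps unproved, and they are exactly where the difficulty lives. First, step two: a shift QCA must map each single-qudit algebra $\mathcal P_s$ onto some $\mathcal P_{s'}$, hence induces a \emph{dimension-preserving} bijection of qudits of bounded displacement. Realizing an abstractly prescribed closed flux pattern $\{-n_e\}$ by such a bijection --- where each $n_e$ is a ratio of products of the various primes occurring as local dimensions, which need not be uniform --- is itself a nontrivial matching problem; the cocycle condition around 2-cells guarantees only that the total dimensions balance, not that a bounded-range, dimension-preserving assignment of qudits exists. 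You flag this yourself, but the resolution you sketch (``case analysis at 0-cells'') is precisely the content that needs a theorem. Second, step three is asserted rather than proved: having zero GNVW flux across every 1-cell does not by itself align the image algebras $\alpha(\mathcal P_s)$ with the tensor factors of the lattice, and the Freedman--Hastings blending uses ancillas not only to absorb flux but to provide the room in which that realignment happens. There is also a definitional issue upstream: the ``flux across a finite 1-cell'' is not a well-defined single rational number (the GNVW index is defined for bi-infinite or circular 1D QCAs), so the cocycle bookkeeping needs care near 0-cells.

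The paper's proof avoids both gaps by never prescribing the shift in advance. It partitions the surface into strips (or concentric annuli) of width $O(r)$, applies the Freedman--Hastings structure theorem for 1D visibly simple algebras to decompose the commutant of $\alpha(\mathcal P(A))$ inside $\mathcal P(A^+)$ into simple subalgebras of prime rank on small disks, and then applies the Hall marriage theorem directly to the collection of image algebras $\{\alpha(\mathcal P_s)\}$ together with these commutant factors. The marriage lemma simultaneously produces the dimension-preserving qudit assignment (your $\gamma$) and the 1D QCA whose GNVW decomposition yields the trivializing circuit, so the ``realization of a prescribed flux pattern'' problem never arises --- the shift is read off from the matching rather than solved for. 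If you want to salvage your approach, the piece you must supply is a realization lemma for closed flux patterns by ancilla-free shifts on lattices of nonuniform prime qudit dimensions, and you will find that proving it essentially reproduces the marriage-theorem argument applied to the image algebras anyway.
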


Our proof of the theorem is along the same line as in \cref{sec:bdalg}.
We partition the space into strips of constant width and color them red and blue alternatingly.
We construct a QCA in each red strip that matches $\alpha$ in the core of the strip up to a shift QCA,
using the decomposition of 1D ``visibly simple algebras'' 
into simple subalgebras that are supported on small disks~\cite[Thm.3.6]{FreedmanHastings2019QCA}
and the Hall marriage theorem~\cite[Chap.22.Thm.3]{BookProof}.
The decomposition of 1D visibly simple algebras is technically the most involved,
and its role is analogous to that of 
the existence of a locally generated maximal commutative subalgebra in any 1D Pauli algebra 
for translation invariant Clifford QCA in the main text.
Since the constructed QCA on a red strip is a 1D QCA,
it is a quantum circuit followed by a shift~\cite{GNVW}.
Applying the inverse of the found circuit to each red strip, 
we are left with blue strips and shifts, with which we can find a circuit and a shift to match, similarly.

Let each qudit $s$ have dimension $D(s)$ a prime,
and let $\mathcal P_s$ be the single-qudit operator algebra of rank $D(s)$.

\begin{lemma}\label{lem:wedderburn}
A simple algebra is isomorphic to a matrix algebra of rank $n$.
If $n$ is a composite number, 
the simple algebra is isomorphic to the tensor product of 
smaller matrix algebras of prime ranks that divide $n$.
The rank of a simple subalgebra divides the rank of the simple algebra into which it is embedded.
\end{lemma}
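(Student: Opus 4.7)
The plan is to treat the three sentences of the lemma as three separate but closely related claims, each an application of classical representation theory of finite dimensional complex algebras, adapted to the $\dagger$-closed setting assumed throughout the appendix.

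For the first sentence, I would invoke the Artin-Wedderburn theorem. A finite dimensional $\dagger$-closed algebra over $\CC$ is semisimple (the trace pairing $\Tr(a^\dagger b)$ is nondegenerate, so the Jacobson radical is zero), and a simple finite dimensional $\CC$-algebra is isomorphic to $M_n(\CC)$ for some $n$; the integer $n$ is well defined since $\dim_\CC M_n(\CC) = n^2$. This is the rank. One should note that the $\dagger$ structure is preserved by the isomorphism up to unitary conjugation, but only the algebra structure is needed for the statement.

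For the second sentence, I would use the standard isomorphism $M_a(\CC) \otimes M_b(\CC) \cong M_{ab}(\CC)$, which is induced on generators by sending $e_{ij}^{(a)} \otimes e_{kl}^{(b)}$ to the matrix unit acting on the tensor basis of $\CC^a \otimes \CC^b \cong \CC^{ab}$. Then I would factor $n = p_1 p_2 \cdots p_k$ into primes and apply this isomorphism inductively to obtain
\begin{align}
M_n(\CC) \cong M_{p_1}(\CC) \otimes M_{p_2}(\CC) \otimes \cdots \otimes M_{p_k}(\CC) .
\end{align}
This decomposition is the algebraic analogue of the fact that a $\CC^n$ qudit of composite prime-power-free dimension $n$ is physically a tensor product of smaller prime-dimensional qudits, matching the setting of the appendix.

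For the third sentence, suppose $A \hookrightarrow B$ with $A \cong M_m(\CC)$ simple and $B \cong M_n(\CC)$. The inclusion endows the defining representation $\CC^n$ of $B$ with the structure of an $A$-module. Because $A$ is semisimple with a unique irreducible representation up to isomorphism, namely the standard module of dimension $m$, any finite dimensional $A$-module is a direct sum of copies of this module. Hence $\CC^n \cong (\CC^m)^{\oplus k}$ as $A$-modules for some integer $k \ge 1$, which forces $n = mk$; that is, $m \mid n$.

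There is no serious obstacle here since everything reduces to standard Wedderburn theory over $\CC$; the only point requiring any care is verifying that the $\dagger$-closed finite dimensional hypothesis is enough to guarantee semisimplicity and hence a Wedderburn decomposition, which was handled in the first paragraph.
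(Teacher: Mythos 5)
Your proof is correct and takes the same route as the paper, whose entire proof is ``Obvious by the Artin--Wedderburn theorem''; you have simply supplied the standard details (semisimplicity from the $\dagger$-closed hypothesis, the isomorphism $M_a(\CC)\otimes M_b(\CC)\cong M_{ab}(\CC)$, and the decomposition of $\CC^n$ as a module over the unital simple subalgebra). The only point worth flagging is that your third argument implicitly uses the paper's convention that subalgebras share the unit of the ambient algebra, which is indeed assumed throughout.
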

\begin{proof}
Obvious by the Artin-Wedderburn theorem.
\end{proof}

\begin{lemma}\label{lem:marriage}
Let $\Lambda = \{ s \}$ be a finite set of qudits of prime dimensions,
and let $\mathcal P(\Lambda) = \bigotimes_{s \in \Lambda} \mathcal P_s$ be the operator algebra on $\Lambda$.
Let $\{ \mathcal Q_s \}$ be a collection of mutually commuting simple subalgebras of 
the simple algebra $\mathcal P(\Lambda)$
where the rank of each $\mathcal Q_s$ is a prime.
If $\mathcal Q_s$'s generate $\mathcal P(\Lambda)$,
then there exist an algebra automorphism $\beta$ such that $\mathcal Q_s = \beta(\mathcal P_s)$
where $s$ is a qudit in the support of $\mathcal Q_s$.
\end{lemma}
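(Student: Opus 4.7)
The plan is to reduce the lemma to (i) a bijective matching between the given subalgebras and the qudits of $\Lambda$ so that each matched pair shares a prime dimension and the qudit lies in the support of the subalgebra, and (ii) a tautological factor-by-factor construction of the automorphism once such a matching exists. Step (i) is a Hall marriage problem, and step (ii) is essentially Skolem--Noether.

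First I would verify that the multiplication map $\mu : \bigotimes_\alpha \mathcal{Q}_\alpha \to \mathcal{P}(\Lambda)$ is an isomorphism. Injectivity follows by induction on the number of factors: for any two commuting simple subalgebras $A,B$, the intersection $A\cap B$ is central in the simple algebra $A$ (since it commutes with all of $A$) and hence equals $\mathbb{C}$, which forces the multiplication map $A\otimes B\to AB$ to be injective. Surjectivity of $\mu$ is the generation hypothesis. Comparing dimensions then forces the multisets of primes $\{q_\alpha\}$ and $\{D(s) : s\in\Lambda\}$ to coincide, so in particular the indexing set and $\Lambda$ have the same cardinality.

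Next I would construct the bipartite graph whose parts are $\{\mathcal{Q}_\alpha\}$ and $\Lambda$, with an edge $\mathcal{Q}_\alpha\sim s$ iff $s\in\mathrm{supp}(\mathcal{Q}_\alpha)$ and $D(s)=q_\alpha$, and verify Hall's condition one prime at a time. Fix a prime $p$ and a subset $T$ of $p$-rank subalgebras, and set $U=\bigcup_{\alpha\in T}\mathrm{supp}(\mathcal{Q}_\alpha)$; the injectivity argument above embeds $\bigotimes_{\alpha\in T}\mathcal{Q}_\alpha\cong M_{p^{|T|}}(\mathbb{C})$ as a subalgebra of $\mathcal{P}(U)$, so \cref{lem:wedderburn} forces $p^{|T|}$ to divide $\prod_{s\in U}D(s)$. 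Reading off the $p$-adic valuation of the right-hand side gives $|T|\le |\{s\in U : D(s)=p\}|$, which is Hall's condition restricted to the rank-$p$ subfamily. Since it holds for every $p$, the global Hall condition holds and a perfect matching exists; relabeling by the matching makes $\mathcal{Q}_s$ a rank-$D(s)$ subalgebra containing $s$ in its support.

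For the final step, because $\mathcal{P}_s$ and $\mathcal{Q}_s$ are both matrix algebras of rank $D(s)$, I pick any $\mathbb{C}$-algebra isomorphism $\phi_s : \mathcal{P}_s \to \mathcal{Q}_s$ and set $\beta := \mu \circ \bigotimes_s \phi_s$, which is an automorphism of $\mathcal{P}(\Lambda)$ sending $\mathcal{P}_s$ onto $\mathcal{Q}_s$. The main obstacle I anticipate is the per-prime Hall verification: the crucial subtlety is that a prime-rank simple subalgebra only forces divisibility of the surrounding matrix rank once one restricts the ambient algebra to $\mathcal{P}(U)$ rather than all of $\mathcal{P}(\Lambda)$, so the support-restriction step must precede the divisibility invocation of \cref{lem:wedderburn}.
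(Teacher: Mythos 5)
Your proposal is correct and follows essentially the same route as the paper: Hall's marriage theorem applied to the bipartite graph of subalgebras versus same-prime qudits in their supports, with the Hall condition verified by embedding the tensor product of a subfamily into $\mathcal P(U)$ for $U$ the union of supports and invoking the rank-divisibility statement of \cref{lem:wedderburn}, followed by a factor-by-factor definition of $\beta$. Your per-prime phrasing of the Hall check and the explicit verification that the multiplication map is an isomorphism are just slightly more detailed renderings of the paper's multiset argument.
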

The argument here is nearly identical to that in the proof of \cite[Lem.II.6]{nta3}.
However, the statement here is sharper and more useful
because the construction of $\beta$ has nothing to do with the geometry of the system of qudits.
The ``locality'' is encoded in the support of $\mathcal Q_s$ but nowhere else.

\begin{proof}
This is an application of the Hall marriage theorem~\cite[Chap.22.Thm.3]{BookProof}
which says that given a collection of subsets $M_s \subseteq M$ of a finite set $M$
if the union of any $k$ subsets from the collection has $\ge k$ elements,
then there is a one-to-one mapping $\{ M_s \} \to M$ 
(i.e., faithful representatives of the subsets).
Indeed, the operator algebra $\mathcal P(\Supp \mathcal Q_s)$ 
on the qudits in the support of $\mathcal Q_s$
contains $\mathcal Q_s$ that is of a prime rank $D(\mathcal Q_s)$.
Hence, by \cref{lem:wedderburn}, 
$\Supp \mathcal Q_s$ contains a $D(\mathcal Q_s)$-dimensional qudit.
Likewise, the union $(\Supp \mathcal Q_{s_1}) \cup \cdots \cup (\Supp \mathcal Q_{s_k})$ 
supports $k$ simple subalgebras $\mathcal Q_{s_j}$,
and hence contains $k$ qudits of dimensions $D(\mathcal Q_{s_1}), \ldots, D(\mathcal Q_{s_k})$.
Let $M_s$ be the set of all qudits of dimension $D(\mathcal Q_s)$ in $\Supp \mathcal Q_s$.
By the Hall marriage theorem,
there is a one-to-one assignment of a $D(\mathcal Q_s)$-dimensional qudit~$s$
for each~$\mathcal Q_s$.
This assignment must be surjective 
since $\prod_s D(\mathcal Q_s) = \prod_s D(P_s)$.
The desired QCA~$\beta$ is defined by 
sending $\mathcal P_s$ to $\mathcal Q_s$ through any isomorphism between the two.
\end{proof}

\begin{proof}[Proof of \cref{thm:trivial2D}]
We first prove the theorem assuming that the underlying manifold is a 2-torus;
this case illustrates the main idea of the proof.
Let the $xy$-plane be the torus 
where every point $(x,y)$ is identified with $(x+L,y)$ and $(x,y+L)$.

Consider a ``red'' horizontal strip $A = A(y_0)$ defined by $y_0 \le y \le y_0 + \ell$,
where $\ell = O(r)$ will be chosen later.
For $s \in A$, the algebra $\mathcal Q_s = \alpha(\mathcal P_s)$ is on the $r$-neighborhood $A^+$ of $A$.
Let an algebra $\mathcal Q(A)$ be generated by all $\mathcal Q_s$ with $s \in A$,
so $\mathcal Q(A)$ is a simple subalgebra of $\mathcal P(A^+)$,
the operator algebra on all qudits in $A^+$.
Since $\mathcal P(A^+)$ is an operator algebra of a one-dimensional array of qudits,
\cite[Thm.3.1,3.2,3.6]{FreedmanHastings2019QCA} says that
the commutant $\mathcal Q'$ of $\mathcal Q(A)$ within $\mathcal P(A^+)$
is generated by simple subalgebras $\mathcal C_j$ of $\mathcal P(A^+)$,
each of which is supported on a disk of radius $r'=O(r)$.
We may assume that each $\mathcal C_j$ is isomorphic to the matrix algebra of a prime rank 
by~\cref{lem:wedderburn}.
Then, the collection of all $\mathcal C_j$ and all $\mathcal Q_s$ with $s \in A$
satisfies the hypothesis of \cref{lem:marriage},
and we obtain a QCA $\beta$ on $A^+$ of range $\max(r,r')$ such that the composition $\beta^{-1}\alpha$ 
maps $\mathcal P_s$ with $s \in A$ to $\mathcal P_{s'}$ for some $s' \in A^+$.
That is, $\beta$ agrees with $\alpha$ on $A$ up to a shift QCA.%
\footnote{
We are using the term ``shift'' QCA slightly more liberally than its name suggests.
We say a QCA $\gamma$ is a shift on a set $S$ of qudits,
if $\gamma(\mathcal P_s)$ is the operator algebra of a qudit for any $s \in S$.
One might insist that a shift QCA must map each operator, say, $Z$ to $Z$,
but we ignore any basis change of a qudit.
}

The constructed $\beta$ is a QCA on $A^+$ that is a one-dimensional array of qudits.
Therefore, $\beta$ itself is a quantum circuit of constant depth followed by a shift~\cite{GNVW}.
Tossing the shift part of $\beta$ away, 
we find a constant depth quantum circuit on $A^+$ whose range is $r'' = O(r')$
that trivializes $\alpha$ on $A$ up to a shift.
The composed QCA of $\alpha$ with the trivializing circuit has range $r''' = \max(r,r',r'')$.
The same construction is applicable for any other strips.
Let us partition the torus into horizontal strips of width $\ell \ge 10r'''$
and color them red and blue alternatingly.
The constant $\ell = O(r''')$ should be tuned so that $L/\ell$ is an integer.
Then, we apply the above construction to modify $\alpha$ by constant depth quantum circuits
to obtain a QCA $\eta$ on the torus of range $r'''$ 
which acts as a shift on the red strips.
The trivializing quantum circuit acts on $r''$-neighborhood of the union of all the red strips.

Next, we focus on {\em one} blue strip $B$;
the other blue strips are treated analogously.
Let $\mathcal P(B)$ be the operator algebra on the qudits in $B$.
An important property of $\eta$ is that $\mathcal Q(B) := \eta(\mathcal P(B))$ has {\em sharp} support;
we say that an algebra $\mathcal A$ has {\bf sharp} support 
if $\mathcal A$ is the operator algebra on all the qudits in $\Supp \mathcal A$.
Indeed, if $T$ is an operator supported on $\Supp \mathcal Q(B)$,
then $\eta^{-1}(T)$ can act neither on any other blue strip because $\ell \gg r'''$
nor on any red strip because for any red strip $A$ where $\eta$ acts as a shift
(the algebra $\eta(\mathcal P(A))$ has sharp support).
Hence, $\eta^{-1}(T)$ is on $B$, and $T \in \mathcal Q_s$.
Therefore, \cref{lem:marriage} applied to $\mathcal Q(B)$
gives a 1D QCA that trivializes $\eta$ on the blue strip.
The theorem for the torus is thus proved.

Before we handle a general 2-manifold $W$,
we note that an algebra's support being sharp is a local property:
we can say that an algebra $\mathcal A$ has sharp support near $s$ (a constant distance neighborhood)
if, for every site $s'$ near $s$, either $s' \notin \Supp \mathcal A$
or $\mathcal P_{s'} \subseteq \mathcal A$.

To handle a general 2-manifold $W$,
consider a disk embedded in $W$
and repeat the above trivialization using
concentric annuli of width $\ell$ that partition the disk.
We obtain a constant depth quantum circuit that modifies the action of $\alpha$ in the interior
(the complement of a constant distance neighborhood of the boundary of the disk)
into that of a shift QCA.
We may choose disks to cover almost all of $W$ except for (the constant distance neighborhood of)
the 1-skeleton $W^1$ in a cell decomposition of $W$.
Let $\alpha_1$ denote the composition of $\alpha$ and the trivializing circuit on the ``big'' disks.
In the complement of $W^1$, $\alpha_1$ acts by shifts.
Now, we take a line segment $e$ ($\simeq$ an interval) in $W^1$,
and consider $\alpha_1( \mathcal P(e))$ (the image of the operator algebra on $e$).
The support of $\mathcal Q(e) = \alpha_1( \mathcal P(e))$ 
is sharp except for two small disks around the end point of $e$.
We can find two simple subalgebras in the commutant of $\mathcal Q(e)$ within 
$\mathcal P(\Supp \mathcal Q(e))$ by the Artin-Wedderburn theorem,
and \cref{lem:marriage} gives a 1D QCA that trivializes $\alpha_1$ on $e$.
Repeating this for every line segment in $W^1$,
we are left with a QCA that is a shift QCA everywhere 
except for small disks centered at the 0-skeleton of~$W$,
which can be easily turned into a shift by a constant depth quantum circuit.
\end{proof}

\section{Digression to automorphism groups of quadratic forms}\label{app:forms}

Over a ring $R$ with an involution $\bar \cdot$,
we consider sesquilinear forms $R^n \times R^n \ni (a,b) \mapsto a^\dagger \lambda b \in R$.
The involution may or may not be trivial.
In our case, the ring is $R = \FF[x_1^\pm,\ldots,x_D^\pm]$ where $\FF$ is a prime field,
the involution is the $\FF$-linear map under which each variable gets inverted,
and $\lambda = \lambda_q$ is the standard symplectic matrix of dimension~$2q$.

The problem on TI Clifford QCA can be cast purely in terms of quadratic forms over Laurent polynomial rings.
Let $Sp^\dagger(q;R)$ be the group of all automorphisms of $\lambda_q$ over $R$,
and let $ESp^\dagger(q;R)$ be the group of all elementary automorphisms of $\lambda_q$ over $R$
generated by elementary automorphisms that are specified in \cref{eq:ElemSymp}.
Let $S_m \subseteq R$ be the subring $\FF[x_1^{\pm m},\ldots,x_D^{\pm m}]$.
(One can also consider subrings specified by finite index subgroups of $\ZZ^D$.)
The collection $\{ Sp^\dagger(q;S_m) \}_{q,m}$ is a directed system.
To increase $q$, we embed $Sp^\dagger(q;S_m)$ into $Sp^\dagger(q+1;S_m)$ in the obvious way,
which corresponds to putting ancilla qudits into the system.
To increase $m$, we embed $Sp^\dagger(q;S_m)$ into $Sp^\dagger (q m' / m;S_{m'})$ whenever $m | m'$
by the base ring change $S_{m'} \ni x_j^{m'} \mapsto (x_j^m)^{m'/m} \in S_m$;
every $S_m$-module becomes an $S_{m'}$-module via $S_{m'} \hookrightarrow S_m$.
The base ring change corresponds to weak translation symmetry breaking.
By taking the direct limit, we obtain $Sp^\dagger(D,\FF)$.
(Strictly speaking, we should have factored out permutation of rows and columns in $Sp^\dagger$ 
to have a direct system because the base ring change does not specify an ordering of bases for free modules;
$S_m$ is an $S_{m'}$-module with a free basis $\{1,x^m,\ldots,(x^m)^{\frac{m'}{m} - 1}\}$.)
Similarly, the collection $\{ ESp^\dagger(q;S_m) \}_{q,m}$ is a directed system,
and gives $ESp^\dagger(D,\FF)$.
The quotient group $Sp^\dagger(D,\FF) / ESp^\dagger(D,\FF)$ is our $\mathfrak C(D,p)$.
It is unknown whether we really need both of the two embeddings;
perhaps, the direct limit obtained by considering only one of the embeddings matches our group $\mathfrak C(D,p)$.
That is, it may be the case that a weak translation symmetry breaking covers the role of ancillas, or vice versa.

We note results in related groups.
In Ref.~\cite{GrunewaldMennickeVaserstein1991SymplecticGroup} 
a similar-looking group $Sp(q;\FF[x_1,\ldots,x_D])$ was shown to be elementarily generated.
Here, unlike our group, the involution is the identity.
This result is extended to that over $\FF[x_1,\ldots,x_s,x_{s+1}^\pm, \ldots,x_D^\pm]$ in Ref.~\cite{Kopeyko1999Symplectic},
where the involution is again the identity.
Hence, the presence of a nontrivial involution on our Laurent polynomial ring makes a substantial difference.

\section{Equivalence of two QCA on qubits ($p=2$)}

In Ref.~\cite{nta3} we have constructed a $D=3$ QCA by disentangling a Walker-Wang model~\cite{Walker_2011,BCFV} for 3-fermion theory.
The idea of coupled layers~\cite{WangSenthil2013} is originally explained for a layer of doubled toric codes over qubits ($p=2$)
such that an exposed surface of a 3-dimensional bulk supports the 3-fermion theory.
The two constructions should give the same theory on intuitive grounds,
and we have explicitly checked that this is the case.
In Supplementary material,
we have realized the coupled layer construction and computed the antihermitian form
to find that
\begin{align}
\Xi_{p=2} = \left(
\begin{array}{cccc}
 x+\frac{1}{x} & 1 & 0 & (1+x)(1+y) \\
 1 & y+\frac{1}{y} &(1+x)(1+y) & 0 \\
 0 & (1+\frac 1 x)(1+\frac 1 y) & x+\frac{1}{x} & 1 \\
 (1+\frac 1 x)(1+\frac 1 y) & 0 & 1 & y+\frac{1}{y} \\
\end{array}
\right).
\end{align}
We have also computed the antihermitian form of the QCA in \cite{nta3}, 
and found that they are the same.
By \cref{cor:Xi-to-QCA}, the two QCA are equivalent.
This is consistent with \cref{conjecture}.

\section*{Supplementary Material}

We provide a calculation note written in Mathematica.
There are several sections in the Mathematica notebook,
containing straightforward but laborious calculation using Laurent polynomials.
The first section contains the construction of Pauli stabilizer Hamiltonians on $D=3$ lattice,
realizing certain Walker--Wang models.
The second section contains calculation of disentangling QCA of the Hamiltonians and boundary antihermitian forms.
The third section shows the surface Hamiltonians and their topological particle content.
The fourth section simplifies a nontrivial QCA over qubits ($\CC^2$)
which was first reported in~\cite{nta3}.

\begin{acknowledgments}
I thank Lukasz Fidkowski, Mike Freedman, and Matt Hastings for useful discussions,
and Dominic Williamson for Refs.~\cite{Davydov2010,Davydov2011}.
\end{acknowledgments}

\bibliography{clifqca-ref}
\end{document}